\newtheorem{theorem}{Theorem}%  meant for continuous numbers
\newtheorem{proposition}{Proposition}%
\newtheorem{lemma}{Lemma}
\newtheorem{remark}{Remark}%
\newtheorem{example}{Example}%
\newtheorem*{scc}{Strong Central Condition}
\newtheorem*{cc}{Unit-Dominance Condition}
\Crefname{corollary}{Corollary}{Corollaries}
\Crefname{lemma}{Lemma}{Lemmas}
\Crefname{figure}{Figure}{Figures}
\Crefname{definition}{Definition}{Definitions}
\Crefname{inequality}{inequality}{inequalities}
\Crefname{example}{Example}{Examples}
\Crefname{proposition}{Proposition}{Propositions}
\Crefname{theorem}{Theorem}{Theorems}
\newcommand{\N}{\mathbb{N}}
\newcommand{\Z}{\mathbb{Z}}
\newcommand{\R}{\mathbb{R}}
\renewcommand{\implies}{\Longrightarrow}
\newcommand{\altgiven}{;\,}
\newcommand{\iid}{\overset{iid}{\sim}}
\DeclareMathOperator*{\argmin}{\arg\,\min}
\DeclareMathOperator*{\argmax}{\arg\,\max}
\DeclareMathOperator{\E}{E}
\DeclareMathOperator{\prob}{P}
\DeclareMathOperator*{\Var}{Var}
\renewcommand{\var}{\Var}
\renewcommand{\ip}{\overset{p}{\longrightarrow}}
\newcommand{\D}{\mathcal{D}}
\newcommand{\X}{\mathcal{X}}
\newcommand{\rhat}{\widehat{R}}
\newcommand{\thetahat}{\widehat{\theta}}
\newcommand{\given}{\mid}
\newcommand{\bigmid}{\, \Bigl\vert \,}
\newcommand{\omegahat}{\widehat{\omega}}
\newcommand{\Ghat}{\widehat{G}}
\newcommand{\omegaunder}{\underline{\Omega}}
\begin{document}

\title{Generalized Universal Inference on Risk Minimizers}

\author{\name Neil Dey \email ndey3@ncsu.edu\\
\name Ryan Martin \email rgmarti3@ncsu.edu \\
\name Jonathan P. Williams \email jwilli27@ncsu.edu \\
       \addr Department of Statistics\\
       North Carolina State University\\
       Raleigh, NC 27607-6698, USA
}

\maketitle

\begin{abstract}%   <- trailing '%' for backward compatibility of .sty file
A common goal in statistics and machine learning is estimation of unknowns. Point estimates alone are of little value without an accompanying measure of uncertainty, but traditional uncertainty quantification methods, such as confidence sets and p-values, often require distributional or structural assumptions that may not be justified in modern applications. The present paper considers a very common case in machine learning, where the quantity of interest is the minimizer of a given risk (expected loss) function. We propose a generalization of universal inference specifically designed for inference on risk minimizers.  Notably, our generalized universal inference attains finite-sample frequentist validity guarantees under a condition common in the statistical learning literature. 
One version of our procedure is also anytime-valid, i.e., 
it maintains the finite-sample validity properties regardless of the stopping rule used for the data collection process.
Practical use of our proposal requires tuning, and we offer a data-driven procedure with strong empirical performance across a broad range of challenging statistical and machine learning examples. 
\end{abstract}

\begin{keywords}
e-process, e-value, empirical risk minimization, Gibbs posterior, learning rate, machine learning, replication crisis
\end{keywords}

\section{Introduction}
\label{sec:intro}

In statistics and machine learning applications, many sophisticated and computationally efficient procedures have been developed for estimating high- or even infinite-dimensional unknowns with strong theoretical support in the form of asymptotic consistency.  It is important, however, to accompany these estimates with an appropriate measure of uncertainty, typically in the form of a confidence set or in the form of p-values associated with relevant hypotheses tests. Ensuring the reliability of uncertainty quantification is a major challenge, largely because the available theory generally requires distributional assumptions or structural simplifications that the data scientist is reluctant or unable to make. As such, there is a need for general strategies that provide provably valid uncertainty quantification in modern, high-dimensional problems for finite sample sizes.

Our paper considers a broad class of statistical learning problems where the quantity of interest is defined as a {\em risk minimizer}, i.e., the minimizer of a risk (expected loss) function.  This includes the typical regression and classification problems common in machine learning, with squared-error and zero-one loss functions, respectively.  It also covers typical cases in the statistics literature where the quantity of interest is the parameter of a correctly- or incorrectly-specified statistical model, with loss corresponding to the model's negative log-density/mass function.  The last two classes of problems fall squarely in the machine learning and statistics domains, respectively, but there are others whose classification is less clear---such as quantile regression \citep{koenker.bassett.1978}, inference on the minimum clinically important difference \citep{xu.mcid, syring.martin.mcid}, etc.---and these too are covered by our proposed framework.  Much of the extant statistical literature on uncertainty quantification for risk minimizers comes from the broad area of robust statistics \citep[e.g.,][]{huber1981,hampel2011}, in particular, the well-studied area of M-estimation \citep[e.g.,][]{huber1981,maronna2006,boos2018}. The classical results in this area impose regularity conditions and achieve only asymptotically valid frequentist inference. More recent results \citep[e.g.,][]{hudson2021, cella2022} require fewer regularity conditions but still only offer asymptotic validity.

An important recent development was the so-called {\em universal inference} framework of \citet{wasserman2020}. They present simple and elegant procedures that offer provably valid uncertainty quantification (e.g., confidence sets and p-values) under virtually no conditions and without the need for asymptotic approximations. One of their main results is based on a clever use of data-splitting to construct a ``split likelihood ratio" for which finite-sample distributional bounds on error rates can be obtained under no regularity conditions. Their focus was limited, however, to settings in which a likelihood function is available---that is, to problems characterized by a correctly specified statistical model. This limitation is partially addressed by \citet{park2023} who allow for misspecification of the statistical model and construct valid inference on the parameter value that minimizes the Kullback--Leibler divergence of the posited statistical model from the true data-generating distribution. However, as mentioned above, assuming a statistical model is a non-trivial restriction in many practical applications, so there is a practical need for methods with provable finite-sample validity beyond model-based settings. Our main contribution here is an extension of the developments in \citet{wasserman2020} that covers many learning problems beyond those determined by a statistical model. 

Our proposed {\em generalized universal inference} framework replaces the log-likelihood function in the model-based universal inference framework in \citet{wasserman2020} with the empirical risk function, an essential ingredient in the learning problem.  Wasserman et al.'s developments leaned on the simple, well-known fact that likelihood ratios have expected value 1.  In our present context, however, there is no likelihood ratio and, therefore, no direct analogue of Wasserman et al.'s key property to apply.  We overcome this obstacle by identifying a single regularity condition---namely, the ``strong central condition,'' common in the statistical learning literature \citep[e.g.,][]{vanErvan2015,grunwald2020,syring2023}---sufficient for showing that Wasserman et al.'s relevant property also holds for our proposed generalized universal inference over a broad class of learning problems.  Beyond (anytime-)validity, we also establish results concerning the statistical efficiency of our proposal, e.g., asymptotic power of our test procedures.  

The remainder of this paper is organized as follows.  After some detailed background and problem setup in \Cref{sec:background}, we present our generalized universal inference framework 
%that links statistical learning to safe inference 
in \Cref{sec:statistic}, and we state its theoretical validity properties.  An important, practical, and novel detail in our approach is the choice of a suitable {\em learning rate} parameter. Choosing the learning rate is a challenging problem and, in \Cref{sec:learning_rate}, we provide theoretical results on data-driven approaches to learning rate selection, and propose a selection strategy that empirically maintains (anytime-)validity. We then further give theory on the efficiency of our framework given an appropriate choice of learning rate. In \Cref{sec:simulation}, we present simulation studies that demonstrate the dual validity and efficiency of our proposed approach in a variety of challenging settings. In particular, we compare our method to that of \citet{waudbysmith2023}, which is designed specifically for (anytime-valid) nonparametric inference on the mean of an unknown distribution, and empirically demonstrate our proposal's superior efficiency. Furthermore, we show how it addresses various factors contributing to the replication crisis in science. In \Cref{sec:realdata}, we demonstrate how our approach performs in real data examples, namely, Millikan's classic experiment to measure the charge of an electron, as well as quantile estimation of user ratings from the website MyAnimeList.  We finish with concluding remarks in \Cref{sec:conclusion}. Proofs of all the theorems can be found in \Cref{sec:appendix}. The code for reproducing the simulation experiments presented in this paper is available at \url{https://github.com/neil-dey/universal-inference}.

\section{Problem setup and related work}
\label{sec:background}

Suppose that the observable data $Z^n := (Z_1, \ldots, Z_n)$ are i.i.d.~from an unknown distribution $\D$ over a set $\mathbb{Z}$. A loss function $\ell:\Theta\times\mathbb{Z} \to \R^+$ is chosen by a practitioner that measures how well a parameter $\theta\in\Theta$ conforms with an observed datum $z \in \mathbb{Z}$; small $\ell(\theta,z)$ values indicate greater conformity between $z$ and $\theta$. We write $R(\theta) := \E_{Z \sim \D}\{ \ell(\theta; Z) \}$ for the {\em risk} or expected loss function.  Our goal is to infer the risk minimizer 
\begin{equation*}
    \theta^* := \argmin_{\theta\in\Theta} R(\theta).
\end{equation*}
Of course, $\theta^*$ is unknown because the distribution $\D$ is unknown, 
%This is impossible without knowledge of the distribution of $\D$, 
but we can estimate $\theta^*$ using the data $Z^n$ from $\D$. 
 To this end, the \textit{empirical risk minimizer} (ERM) is 
\begin{equation*}
    \thetahat_n := \argmin_{\theta\in\Theta} \rhat_n(\theta), 
    %\equiv \argmin_{\theta \in \Theta} \frac{1}{n}\sum_{i=1}^n \ell(\theta\altgiven z_i),
\end{equation*}
where $\rhat_n(\theta) := n^{-1} \sum_{i=1}^n \ell(\theta; Z_i)$ is the {\em empirical risk} function.  The intuition is that $\rhat_n$ should be close to $R$, at least for large $n$, so the ERM $\thetahat_n$ should be close to $\theta^*$. It may happen that the ERM does not exist or it is difficult or inefficient to compute exactly; in such cases, it may be useful to instead compute an \textit{almost-ERM} (AERM)---that is, an estimator $\thetahat_n$ satisfying
\begin{equation*}
    \rhat_n(\thetahat_n) \leq \inf_{\theta\in\Theta} \rhat_n(\theta) + \frac{\delta}{n^{1+\varepsilon}}
\end{equation*}
for some fixed nonnegative constants $\varepsilon$ and $\delta$. When the constants $\varepsilon$ and $\delta$ are of import, we specify that $\thetahat_n$ is a $(\varepsilon, \delta)$-AERM. AERMs are fairly easy to construct: for example, a $(0, \delta)$-AERM under the $L^2$ loss only requires convergence of the estimator to the ERM at rate $n^{-1/2}$ on the original $\Theta$ space---a rate that can often be obtained for approximation schemes such as stochastic gradient descent \citep{nemirovski2009}.

A variety of approaches are available to quantify the uncertainty about $\theta^\star$ in the ERM $\thetahat_n$.  As mentioned in \Cref{sec:intro}, classical solutions offer asymptotic frequentist guarantees under rather strong regularity conditions.  It is demonstrated in \citet{wasserman2020}, on the other hand, that with a well-specified model $\{\prob_{\theta}\mid\theta\in\Theta\}$ featuring a likelihood function $L(\theta; Z^n)$, one can obtain confidence sets for $\theta^*$ with no regularity conditions.  One of their proposed strategies is {\em sample splitting}.  That is, partition the sample $Z^n$ into sub-samples $Z^{(1)}$ and $Z^{(2)}$ and compute the maximum likelihood estimator $\thetahat^{(1)}$ on $Z^{(1)}$; then a $1-\alpha$ level confidence set for $\theta^*$ is given by $\{ \theta\in\Theta \mid T(\theta) \leq \alpha^{-1} \}$, where
\begin{equation*}
T(\theta) = T(\theta; Z^{(1)}, Z^{(2)}) := \frac{L(\thetahat^{(1)}; Z^{(2)})}{L(\theta; Z^{(2)})}
\end{equation*}
is called the ``split likelihood-ratio" for obvious reasons. It is notable that $T(\theta)$ is an example of an \textit{e-value}, defined by the property that $\E_{\theta^*}\{T(\theta^*)\} \leq 1$, where $\E_{\theta^*}$ denotes the expected value under the assumption that the data $Z^n$ was generated from $\prob_{\theta^*}$.

The notion of an e-value can be traced as far back as \citet{wald1945, wald.sequential}, but there has been a surge of interest recently \citep[e.g.,][]{vovk2021, howard2021, xu2021, ramdas2023} for at least two reasons. First, the reciprocal of an e-value is a p-value (i.e., is stochastically no smaller than a uniform random variable) by Markov's inequality, so e-values can readily be used for uncertainty quantification. Second, e-values have several benefits as ``measures of evidence" over general p-values.  For example, while it is not clear how to combine p-values from independent tests, it is clear that taking the product of independent e-values is itself an e-value. Furthermore, this product of independent e-values remains an e-value under optional continuation---the practice of deciding whether or not to continue collecting new data and conducting further independent tests based on the outcomes of previous tests---and so has practical use in meta-analyses \citep{grunwald2020b}.
Additionally, e-values also tend to be more robust to model misspecification and dependence compared to general p-values; see \citet{wang2022} and \citet{ramdas2025}.
%e.g., it is shown in \citet{wang2022} that applying the Benjamini--Hochberg procedure to e-values maintains control over false discovery rates even for arbitrary dependence between the e-values, whereas the same cannot be said for p-values. 
However, e-values are not a direct upgrade to p-values: their safety guarantees imply that uncertainty quantification with e-values tends to be more conservative than that with p-values.

A closely related notion is that of an \textit{e-process}, i.e., a non-negative supermartingale $(E_n)_{n\in\N}$ such that $\E(E_\tau) \leq 1$ under the null hypothesis for any stopping time $\tau$ \citep{shafer2011,ramdas2023,ruf2023}. It is clear that any stopped e-process is also an e-value and thus inherits the relevant benefits. Additionally, the definition of an e-process yields an ``anytime-validity" property: If $(E_n)_{n\in\N}$ is an e-process, the reciprocal of $\max_{n=1,\ldots, \tau} E_n$ remains a p-value for any stopping time $\tau$ \citep{ramdas2023}. That is, the sample size need not be fixed ahead of time, and one may even choose whether or not to collect more data based on what has been observed ``up to that point." This is in stark contrast to a standard p-value, which generally depends on fixing a sample size ahead of time and prohibits any sort of data-snooping; but see \citet[][Sec.~6]{martin.basu}. Because peeking at the data to decide whether to stop or continue data-collection is common in science, the use of anytime-valid measures of evidence such as e-processes is highly desirable.

How does one construct an e-process? Like the e-value described above, these take the general form of likelihood ratios but with a sequential flavor \citep[e.g.,][Eq.~10.10]{wald.sequential}.  A general proposal was given in \citet[][Sec.~8]{wasserman2020} and particular instantiations have been put forward in, e.g., \citet{gangrade2023sequential} and \citet{dixit2023}; see, also, the survey in \citet{ramdas2023}. In particular, as an alternative to sample splitting described above, consider lagged estimators 
\[ \thetahat_k = \argmax_{\theta \in \Theta} L(\theta; Z^k), \quad k=1,2,\ldots\]
and the corresponding ``running likelihood-ratio" test statistic
\begin{equation}
\label{eq:running.lr}
M_n(\theta) := \frac{\prod_{i=1}^n L(\thetahat_{i-1}; Z^i)}{\prod_{i=1}^n L(\theta; Z^i)},
\end{equation}
where $\thetahat_0$ is a fixed constant. 
Then $\{M_n(\theta^*)\}_{n\in\N}$ is an e-process and, therefore, provides anytime-valid inference on $\theta^*$. 

\section{Generalized universal inference} \label{sec:statistic}

\subsection{GUe-value construction}
If the data-generating distribution $\D$ is unknown, or if the quantity of interest is not defined as the parameter that determines a statistical model (and is rather defined as the minimizer of a more general risk function), then the approach of \citet{wasserman2020} is not directly applicable. To deal with the general statistical learning problem, we propose the following generalized universal inference framework. To start, define the online {\em generalized universal e-value} (GUe-value, pronounced ``gooey-value"):
\begin{equation}
\label{eq:online}
    G_{n, \text{on}}(\theta)  := \exp\Bigl[-\omega  \sum_{i=1}^n \{ \ell(\thetahat_{i-1}; Z_i) - \ell(\theta; Z_i) \} \Bigr], 
\end{equation}
where $\thetahat_k$ is any AERM on the first $k$ sample elements, with $\thetahat_0$ a pre-specified constant, and $\omega\geq 0$ is a \textit{learning rate} discussed in detail in \Cref{sec:learning_rate}. The right-hand side of the above display is analogous to the running likelihood-ratio \eqref{eq:running.lr} in that it makes use of lagged AERMs, but it does not require a correctly specified likelihood.

The online GUe-value requires computation of $n$-many AERMs and learning rates, which may be expensive. As an alternative, define the  \textit{offline} GUe-value
\begin{equation}
\label{eq:offline}
    G_{n, \text{off}}(\theta) \equiv G_{S, \text{off}}(\theta) := \exp\Bigl[ -\omega \, n_2 \{\rhat_{S_2}(\thetahat_{S_1}) - \rhat_{S_2}(\theta)\} \Bigr],
\end{equation}
where $S_1 \sqcup S_2$ is a partition of the sample $S$ into two sub-samples of size $n_1$ and $n_2$, respectively, $\thetahat_{S_i}$ is any AERM on $S_i$, and $\omega$ is a learning rate. Again, this is in analogy to the split likelihood-ratio of \citet{wasserman2020}. Because the online and offline GUe-values share many properties, we write $G_n$ when distinguishing between the two is unnecessary and refer to simply the GUe-value.  

The intuition for the GUe-value is that $G_n(\theta)$ is large if and only if a suitable empirical risk function at $\theta$ is large, suggesting that $\theta$ is highly inconsistent with the data compared to the estimators. It is also interesting to note that the offline GUe-value can be written as the ratio of Gibbs posterior probability density functions \citep[e.g.,][]{zhang2006,bissiri2016,grunwald2020,martin2022} when using the (possibly improper) uniform prior.  Hence, the offline GUe-value is like a ``generalized Bayes factor''---or a ``Gibbs factor''---between $\theta$ and the AERM $\thetahat_{S_1}$. These intuitions suggest that $G_n(\theta^*)$ should be rather small, and that only values $\theta$ with $G_n(\theta)$ sufficiently small ought to be considered plausible values for $\theta^*$. It turns out that this intuition is indeed sound, as we explain in the following subsection.

\subsection{Frequentist validity}
We should not refer to GUe-values as ``e-values'' or ``e-processes'' without first demonstrating that they satisfy the respective defining properties.  Unlike in the context of a well-specified statistical model, it is not possible to do this demonstration without imposing some conditions on the data-generating process $\D$ and the loss function $\ell$.  It turns out that the {\em strong central condition} advanced in \citet{vanErvan2015}, commonly used in the analysis of ERMs and Gibbs posteriors, is sufficient for our purposes as well.  

\begin{scc}
A learning problem determined by a data-generating process $\D$ on $\mathbb{Z}$ and a loss function $\ell:\Theta \times \mathbb{Z} \rightarrow \R^+$ satisfies the \textit{strong central condition} if there exists $\bar{\omega} > 0$ such that
\begin{equation*}
    {\E}_{Z\sim\D}\exp\bigl[-\omega\{ \ell(\theta; Z) - \ell(\theta^*; Z)\} \bigr] \leq 1 \quad \text{for all $\theta\in\Theta$ and all $\omega \in [0, \bar{\omega})$}.
\end{equation*}
In this case, we say that the condition holds with learning rate $\bar{\omega}$.
\end{scc}\label{cond:scc}

The strong central condition is effectively a bound on the moment generating function of $\ell(\theta^*; Z) - \ell(\theta; Z)$ in a small positive interval $[0,\bar\omega)$ containing the origin.  
As discussed in detail in \citet{vanErvan2015}, this condition holds in a number of practically relevant cases; see, also, \Cref{rem:sccsufficient} in \Cref{app:remarks} and \citet{grunwald2020}.  What motivates the strong central condition in the study of statistical learning via loss functions is that it is the ``right'' condition in a certain sense for these problems.  That is, almost all learning paradigms---including ERM, two-part minimum description length, and PAC-Bayes---achieve ``fast" rates (in the sense that the excess risk $R(\thetahat_n) - R(\theta^*)$ converges at a reasonably fast rate) only under this condition \citep{vanErvan2015}. As a concrete example, the commonly used Bernstein condition 
%(a generalization of the also-frequently-used Tsybakov margin condition) 
is equivalent to the strong central condition for bounded losses \citep{vanErvan2015}.

We should also emphasize that our main result shows the strong central condition is {\em sufficient} for validity of our proposal, no claims are made about the strong central condition being {\em necessary}.  Hence, our GUe-value may work even when the strong central condition fails.  See \Cref{re:no.scc} for some technical insights and \Cref{ex:heavytail} in \Cref{apdx:extrasims} for a numerical example showing that our GUe confidence intervals for the mean of a heavy-tailed distribution are empirically valid and more efficient than the provably valid intervals proposed in \citet{wang2023}.

One of our contributions is to provide a connection between the literature surrounding statistical learning and that of safe inference: The following two lemmas demonstrate that the strong central condition is sufficient to ensure that the online and offline GUe-values are e-processes and e-values, respectively.  

\begin{lemma}\label{lem:eprocess}
Under the strong central condition with learning rate $\bar{\omega}$, the online GUe-value $G_{n,\text{on}}(\theta^*)$ in \eqref{eq:online} is an e-process if $\omega \in [0, \bar{\omega})$. 
\end{lemma}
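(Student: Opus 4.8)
The plan is to show that the sequence $\{G_{n,\text{on}}(\theta^*)\}_{n\in\N}$ is a nonnegative supermartingale, started at $G_{0,\text{on}}(\theta^*)=1$, with respect to the natural filtration $\mathcal{F}_n := \sigma(Z_1,\ldots,Z_n)$, and then to invoke optional stopping to obtain the defining e-process bound $\E\{G_{\tau,\text{on}}(\theta^*)\}\le 1$ for every stopping time $\tau$.

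First I would exhibit the multiplicative structure of the GUe-value. Writing the per-observation increment
\[
\xi_i := \exp\bigl[-\omega\{\ell(\thetahat_{i-1}; Z_i) - \ell(\theta^*; Z_i)\}\bigr],
\]
we have $G_{n,\text{on}}(\theta^*) = \prod_{i=1}^n \xi_i = G_{n-1,\text{on}}(\theta^*)\cdot\xi_n$, with the empty product giving $G_{0,\text{on}}(\theta^*)=1$. Each $\xi_i$ is a positive exponential, so the process is nonnegative. Since the AERM $\thetahat_{i-1}$ is a (measurable) function of $Z_1,\ldots,Z_{i-1}$, it is $\mathcal{F}_{i-1}$-measurable, and hence $G_{n-1,\text{on}}(\theta^*)$ is $\mathcal{F}_{n-1}$-measurable.

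The crux is the conditional expectation of the increment. Because $G_{n-1,\text{on}}(\theta^*)$ is $\mathcal{F}_{n-1}$-measurable, I would pull it out to get
\[
\E\{G_{n,\text{on}}(\theta^*)\given\mathcal{F}_{n-1}\} = G_{n-1,\text{on}}(\theta^*)\,\E\{\xi_n\given\mathcal{F}_{n-1}\}.
\]
Now $\thetahat_{n-1}$ is $\mathcal{F}_{n-1}$-measurable while $Z_n$ is independent of $\mathcal{F}_{n-1}$ by the i.i.d.\ assumption, so conditioning on $\mathcal{F}_{n-1}$ freezes $\thetahat_{n-1}$ at a fixed value and integrates only over a fresh draw $Z_n\sim\D$. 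This is exactly where the strong central condition enters: since it holds \emph{uniformly over all} $\theta\in\Theta$ for every $\omega\in[0,\bar\omega)$, it applies in particular at the random argument $\theta=\thetahat_{n-1}$, giving $\E\{\xi_n\given\mathcal{F}_{n-1}\}\le 1$ almost surely and therefore $\E\{G_{n,\text{on}}(\theta^*)\given\mathcal{F}_{n-1}\}\le G_{n-1,\text{on}}(\theta^*)$, the supermartingale property.

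Finally, from a nonnegative supermartingale with $G_{0,\text{on}}(\theta^*)=1$, the optional stopping theorem (with Fatou's lemma to cover possibly unbounded $\tau$) yields $\E\{G_{\tau,\text{on}}(\theta^*)\}\le\E\{G_{0,\text{on}}(\theta^*)\}=1$ for every stopping time $\tau$, which is precisely the e-process property. The main delicacy I would make explicit is the handoff from the uniform-in-$\theta$ strong central condition to a bound valid at the \emph{random} value $\thetahat_{n-1}$; the uniformity over all $\theta\in\Theta$ is indispensable here because we do not control which value the lagged AERM takes, and I would also note the (standard, implicitly assumed) measurability of each AERM as a function of the data, which legitimizes treating $\thetahat_{n-1}$ as an $\mathcal{F}_{n-1}$-measurable conditioning variable.
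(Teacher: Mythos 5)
Your proof is correct and follows essentially the same route as the paper's: both establish the nonnegative supermartingale property by conditioning on $Z^{n-1}$, freezing the $\mathcal{F}_{n-1}$-measurable AERM $\thetahat_{n-1}$, applying the strong central condition at that (now fixed) argument to bound the increment's conditional expectation by $1$, and then invoking a variant of the optional stopping theorem to conclude the e-process property. The only cosmetic difference is that you anchor the process at $G_{0,\text{on}}(\theta^*)=1$ via the empty product, whereas the paper verifies $\E\{G_{1,\text{on}}(\theta^*)\}\leq 1$ directly by a second application of the strong central condition.
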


\begin{lemma}\label{lem:evalue}
Under the strong central condition with learning rate $\bar{\omega}$, the offline GUe-value $G_{n,\text{off}}(\theta^*)$ in \eqref{eq:offline} is an e-value if $\omega \in [0, \bar{\omega})$.
\end{lemma}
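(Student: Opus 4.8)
The plan is to verify the defining e-value property directly, i.e., to show $\E\{G_{n,\text{off}}(\theta^*)\}\le 1$, with the expectation taken under the true data-generating process $\D$. First I would rewrite the offline GUe-value so that the learning-rate-weighted empirical risk difference becomes a sum over the held-out sub-sample. Since $\rhat_{S_2}(\theta)=n_2^{-1}\sum_{i\in S_2}\ell(\theta;Z_i)$, the leading factor $n_2$ cancels and the exponential factorizes:
\begin{equation*}
G_{n,\text{off}}(\theta^*)=\exp\Bigl[-\omega\sum_{i\in S_2}\{\ell(\thetahat_{S_1};Z_i)-\ell(\theta^*;Z_i)\}\Bigr]=\prod_{i\in S_2}\exp\bigl[-\omega\{\ell(\thetahat_{S_1};Z_i)-\ell(\theta^*;Z_i)\}\bigr].
\end{equation*}

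The crucial structural idea is to condition on the first sub-sample $S_1$ and use the tower property, $\E\{G_{n,\text{off}}(\theta^*)\}=\E[\E\{G_{n,\text{off}}(\theta^*)\mid S_1\}]$. Conditioning on $S_1$ freezes the estimator $\thetahat_{S_1}$ into a fixed (though $S_1$-measurable) point of $\Theta$, while the observations $\{Z_i\}_{i\in S_2}$ remain i.i.d.\ from $\D$ and, because the sub-samples are disjoint and the full sample is i.i.d., independent of $S_1$. This independence lets me push the conditional expectation through the product and collapse the $n_2$ identical factors:
\begin{equation*}
\E\{G_{n,\text{off}}(\theta^*)\mid S_1\}=g(\thetahat_{S_1})^{\,n_2},\qquad g(\theta):=\E_{Z\sim\D}\exp\bigl[-\omega\{\ell(\theta;Z)-\ell(\theta^*;Z)\}\bigr].
\end{equation*}

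Now I would invoke the strong central condition. Since $\omega\in[0,\bar{\omega})$, the condition guarantees $g(\theta)\le 1$ for \emph{every} $\theta\in\Theta$, and in particular $g(\thetahat_{S_1})\le 1$ almost surely. As $g$ is nonnegative, the conditional expectation $g(\thetahat_{S_1})^{\,n_2}$ lies in $[0,1]$, so $\E\{G_{n,\text{off}}(\theta^*)\mid S_1\}\le 1$ almost surely; taking the outer expectation gives $\E\{G_{n,\text{off}}(\theta^*)\}\le 1$, which is precisely the e-value property.

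The argument is short, so the ``hard part'' is conceptual rather than computational: one must justify applying the strong central condition at the \emph{data-dependent} argument $\thetahat_{S_1}$. This is exactly why the condition is stated to hold \emph{uniformly} over all $\theta\in\Theta$ rather than only at a fixed $\theta$---uniformity is what lets the bound survive substitution of the random, $S_1$-measurable point $\thetahat_{S_1}$. The sample-splitting construction is essential here, since it is what renders $\thetahat_{S_1}$ independent of the observations on which the risk difference is evaluated; without that independence the factorization of the conditional expectation, and hence the whole argument, would break down. The analogous care in \Cref{lem:eprocess} instead enters through a supermartingale/optional-stopping argument, which the offline setting avoids.
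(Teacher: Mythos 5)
Your proof is correct and follows essentially the same route as the paper's: condition on $S_1$, use the conditional independence of the $S_2$ observations to factorize the conditional expectation, apply the strong central condition at the $S_1$-measurable point $\thetahat_{S_1}$, and finish with the tower property. The only cosmetic difference is that you collapse the $n_2$ identical factors into $g(\thetahat_{S_1})^{n_2}$, whereas the paper bounds each factor by $1$ separately before taking the product.
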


We can now begin to see the trade-off between the online and offline GUe-values: the online GUe-value is an e-process and hence has stronger error rate control properties, as described in our main result, \Cref{thm:evalue}.  The offline GUe-value is only an e-value, so its properties are generally weaker (e.g., combining offline GUe-values only maintains validity under optional continuation for independent offline GUe-values, whereas combining online GUe-values can maintain the anytime-valid property even if the online GUe-values are dependent), but it is typically less expensive to compute compared to the online variant that requires evaluation of the lagged AERMs.
 
\begin{theorem}\label{thm:evalue}
Suppose that the strong central condition holds with learning rate $\bar \omega$, and take $\omega \in [0, \bar\omega)$. Fix a desired significance level $\alpha \in (0,1)$. Then the test that rejects $H_0: \theta^* \in \Theta_0$ in favor of $H_1: \theta^* \not\in \Theta_0$ if and only if 
\[ G_n(\Theta_0) := \inf_{\theta \in \Theta_0} G_n(\theta) \geq \alpha^{-1}, \]
controls the frequentist Type~I error at level $\alpha$, i.e., 
\[ \Pr\{ G_n(\Theta_0) \geq \alpha^{-1} \} \leq \alpha, \quad \text{for all $\Theta_0$ that contain $\theta^*$}. \]
Also, the set estimator 
\[ C_\alpha(Z^n) := \bigl\{ \theta \in \Theta: G_n(\theta) < \alpha^{-1} \bigr\} \]
has frequentist coverage probability at least $1-\alpha$, i.e., 
\[ \Pr\{ C_\alpha(Z^n) \ni \theta^* \} \geq 1-\alpha. \]
Furthermore, for the online GUe-value specifically, the above tests and confidence sets are anytime-valid, i.e., for any stopping time $\tau$, 
\[ \Pr\{ G_\tau(\Theta_0) \geq \alpha^{-1} \} \leq \alpha \quad \text{and} \quad \Pr\{ C_\alpha(Z^\tau) \ni \theta^* \} \geq 1-\alpha.\]
\end{theorem}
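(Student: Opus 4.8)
The plan is to reduce all three claims to a single Markov-inequality tail bound on the random variable $G_n(\theta^*)$, leaning entirely on the e-value/e-process structure already secured in \Cref{lem:eprocess} and \Cref{lem:evalue}. The crucial structural observation is that whenever $\theta^* \in \Theta_0$, the infimum $G_n(\Theta_0) = \inf_{\theta \in \Theta_0} G_n(\theta)$ is taken over a set that contains $\theta^*$, so by definition of the infimum $G_n(\Theta_0) \leq G_n(\theta^*)$ on every sample path. This yields the event inclusion $\{G_n(\Theta_0) \geq \alpha^{-1}\} \subseteq \{G_n(\theta^*) \geq \alpha^{-1}\}$, which reduces the Type~I error analysis at an arbitrary null $\Theta_0 \ni \theta^*$ to controlling the upper tail of $G_n$ evaluated at the true risk minimizer.

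Next I would supply the moment bound $\E\{G_n(\theta^*)\} \leq 1$ separately in each regime. For the offline statistic this is exactly the e-value property of \Cref{lem:evalue}. For the online statistic, \Cref{lem:eprocess} establishes that $\{G_{n,\text{on}}(\theta^*)\}_{n \in \N}$ is an e-process; taking the deterministic stopping time $\tau \equiv n$ and using that any stopped e-process is an e-value, I again obtain $\E\{G_{n,\text{on}}(\theta^*)\} \leq 1$. Since $G_n \geq 0$, Markov's inequality then gives $\Pr\{G_n(\theta^*) \geq \alpha^{-1}\} \leq \alpha\,\E\{G_n(\theta^*)\} \leq \alpha$, and combining this with the event inclusion above establishes the Type~I error control. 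The coverage claim follows by duality at no extra cost: the event $\{\theta^* \notin C_\alpha(Z^n)\}$ is, by the definition of $C_\alpha$, precisely the event $\{G_n(\theta^*) \geq \alpha^{-1}\}$, whose probability has just been bounded by $\alpha$, so $\Pr\{C_\alpha(Z^n) \ni \theta^*\} \geq 1-\alpha$.

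For the anytime-valid statements specific to the online GUe-value, nothing new is needed beyond replacing the deterministic index $n$ by an arbitrary stopping time $\tau$. Because $\{G_{n,\text{on}}(\theta^*)\}_{n \in \N}$ is an e-process, its defining property gives $\E\{G_\tau(\theta^*)\} \leq 1$ for every stopping time $\tau$. The infimum reduction holds pathwise for the stopped statistics as well---on the event $\{\tau = m\}$ one has $G_\tau(\Theta_0) = G_m(\Theta_0) \leq G_m(\theta^*) = G_\tau(\theta^*)$---so the identical Markov argument delivers both $\Pr\{G_\tau(\Theta_0) \geq \alpha^{-1}\} \leq \alpha$ and $\Pr\{C_\alpha(Z^\tau) \ni \theta^*\} \geq 1-\alpha$.

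As for the main obstacle: the substantive difficulty has already been discharged inside the two lemmas, where the strong central condition is converted into the relevant supermartingale and expectation bounds; conditional on those, this theorem is essentially a formal corollary. The only points demanding genuine care are (i) confirming the infimum-to-pointwise reduction $G_n(\Theta_0) \leq G_n(\theta^*)$, which is immediate from $\theta^* \in \Theta_0$, and (ii) correctly matching each regime---fixed-$n$ offline, fixed-$n$ online, and stopped online---to the lemma that legitimately supplies $\E\{G_n(\theta^*)\} \leq 1$, so that Markov's inequality (which also requires the nonnegativity of $G_n$) is invoked under valid hypotheses in every case.
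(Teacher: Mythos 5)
Your proposal is correct and follows essentially the same route as the paper's proof: the pointwise reduction $G_n(\Theta_0) \leq G_n(\theta^*)$ for $\Theta_0 \ni \theta^*$, Markov's inequality combined with the moment bound from \Cref{lem:evalue} (or the e-process property of \Cref{lem:eprocess} evaluated at a stopping time), and duality between the test and the confidence set $C_\alpha$. Your added care in matching each regime (fixed-$n$ offline, fixed-$n$ online, stopped online) to the lemma that legitimately supplies $\E\{G_n(\theta^*)\} \leq 1$ is a slightly more explicit bookkeeping of what the paper compresses into a single sentence, but it is the same argument.
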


Now the trade-off between the online and offline GUe-values is even more clear.  While both constructions lead to tests and confidence sets with finite-sample control of frequentist error rates, the online version is anytime-valid: the bounds hold uniformly over all stopping rules, but generally with a higher computational cost. The advantage of anytime-validity, again, is that the method is robust to the common practice of making within-study decisions about whether to proceed with further data collection and analysis.  

\section{Learning the learning rate}\label{sec:learning_rate}

\subsection{Adaptive GUe and the unit-dominance condition}

The learning rate $\omega$ is critical for the validity and efficiency of the GUe-value hypothesis tests and confidence sets. On the one hand, if $\omega$ is too large, then $G_n(\theta^*)$ is smaller than it should be, the confidence sets are likewise too small and hence are likely to undercover.  On the other hand, if $\omega$ is too small, the confidence sets for $\theta^*$ are larger than necessary, resulting in inference that is overly conservative. 

Because it is generally impossible to know the ``correct" learning {\em a priori}, its value in practical applications must be chosen in a data-driven manner. The proofs of \Cref{lem:eprocess,lem:evalue}, however, do not accommodate data-driven choices of $\omega$. Hence, we propose the following adaptive version of the GUe-value for those practical setting where a value of $\omega$ is learned empirically:
\begin{equation*}
    \widehat{G}_{n,\text{on}}(\theta) := \exp\Bigl[\sum_{i=1}^n -\omegahat_i \cdot \{ \ell(\thetahat_{i-1}; Z_i) - \ell(\theta; Z_i) \} \Bigr]
\end{equation*}
and
\begin{equation*}
        \widehat{G}_{n, \text{off}}(\theta) := \exp\Bigl[ -\omegahat_{S_1}\cdot n_2 \{\rhat_{S_2}(\thetahat_{S_1}) - \rhat_{S_2}(\theta)\} \Bigr],
\end{equation*}
where $\omegahat_k$ may depend on the first $k$ data points $Z_1, \ldots, Z_k$, and $\omegahat_{S_1}$ may depend on the training set $S_1$. With these new definitions that allow the learning rate to depend on observed data, we offer the following sufficient condition for validity:
\begin{cc}
Let $\D$ be a data-generating distribution over a set $\mathbb{Z}$ and write the learning rate and AERM as functions $\omegahat:\mathbb{Z}^\infty \rightarrow \R^+$ and $\thetahat:\mathbb{Z}^\infty \rightarrow \Theta$ of the data, respectively. We say that the unit-dominance condition holds for $(\omegahat, \thetahat, \D)$ if 
    \begin{equation*}
        \E\qty[\exp(-\omegahat(Z^n) \cdot \{\ell(\thetahat(Z^{n-1})\altgiven Z_n) - \ell(\theta^*\altgiven Z_n)\}) \given Z^{n-1}] \leq 1   \enspace \text{almost surely}
    \end{equation*}
    for every $n\in\N$.
\end{cc}

It is easy to verify that if the strong central condition holds with learning rate $\overline\omega$ and if the data-dependent learning rates $\omegahat_n$ are such that $\omegahat_n \leq \overline\omega$ almost surely, then the unit-dominance condition also holds.  
%straightforward to see that if the learning rates are chosen such that the strong central condition holds, the compatibility condition also automatically holds; 
In this sense, the unit-dominance condition is weaker than the strong central condition to fulfill, but arguably the former is rather difficult to verify.  Nevertheless, 
%(though in turn, it is more difficult to know whether or not the compatibility condition actually holds). Despite being weaker, 
this condition does allow the proofs of \Cref{lem:eprocess,lem:evalue} to hold even with data-driven choices of learning rates using the exact same proof strategies. We thus have the following theorem:
\begin{theorem}
    If the unit-dominance condition holds, then $\Ghat_{n, \text{on}}$ is an e-process and $\Ghat_{n, \text{off}}$ is an e-value. Consequently, when the unit-dominance condition holds, the test that rejects $H_0: \theta^*\in\Theta_0$ in favor of $H_1: \theta^* \not\in\Theta_0$ if and only if $\Ghat_n(\Theta_0) \geq \alpha^{-1}$ controls the frequentist Type I error at level $\alpha$. For $\Ghat_{n, \text{on}}$ specifically, this test is anytime-valid. 
\end{theorem}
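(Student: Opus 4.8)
The plan is to follow the proof strategy of \Cref{lem:eprocess,lem:evalue} essentially unchanged, making the single substitution that wherever those proofs invoked the strong central condition to bound a conditional moment generating function by $1$, I now invoke the unit-dominance condition, which supplies exactly that bound for the data-driven learning rate. Once the e-process and e-value properties are established, test validity follows from Markov's inequality and anytime-validity follows from optional stopping. Throughout I work with the filtration $\mathcal{F}_n := \sigma(Z^n)$.

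For the online statistic, I would first note that $\Ghat_{n,\text{on}}(\theta^*)$ is non-negative and $\mathcal{F}_n$-adapted, with $\Ghat_{0,\text{on}}(\theta^*) = 1$, and that its one-step ratio is
\[
\frac{\Ghat_{n,\text{on}}(\theta^*)}{\Ghat_{n-1,\text{on}}(\theta^*)} = \exp\bigl[-\omegahat_n \{\ell(\thetahat_{n-1}; Z_n) - \ell(\theta^*; Z_n)\}\bigr].
\]
Taking $\E[\,\cdot \mid \mathcal{F}_{n-1}]$ of this ratio reproduces, term for term, the left-hand side of the unit-dominance condition at index $n$ (with $\omegahat(Z^n) = \omegahat_n$ and $\thetahat(Z^{n-1}) = \thetahat_{n-1}$), which is assumed to be $\leq 1$ almost surely. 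Hence $\E[\Ghat_{n,\text{on}}(\theta^*) \mid \mathcal{F}_{n-1}] \leq \Ghat_{n-1,\text{on}}(\theta^*)$, so $(\Ghat_{n,\text{on}}(\theta^*))_{n}$ is a non-negative supermartingale started at $1$. The e-process bound $\E[\Ghat_{\tau,\text{on}}(\theta^*)] \leq 1$ for every stopping time $\tau$ then follows from optional stopping for non-negative supermartingales: the stopped process $\Ghat_{\tau \wedge n, \text{on}}(\theta^*)$ has expectation at most $1$, and Fatou's lemma—using non-negativity and the almost-sure convergence of the supermartingale—passes this bound to $\tau$, including $\tau = \infty$.

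For the offline statistic, I would rewrite $n_2\{\rhat_{S_2}(\thetahat_{S_1}) - \rhat_{S_2}(\theta^*)\} = \sum_{i \in S_2}\{\ell(\thetahat_{S_1}; Z_i) - \ell(\theta^*; Z_i)\}$, so that
\[
\Ghat_{n,\text{off}}(\theta^*) = \prod_{i \in S_2} \exp\bigl[-\omegahat_{S_1}\{\ell(\thetahat_{S_1}; Z_i) - \ell(\theta^*; Z_i)\}\bigr].
\]
Conditioning on $\sigma(S_1)$ renders $\omegahat_{S_1}$ and $\thetahat_{S_1}$ constants; since the $Z_i$, $i \in S_2$, are i.i.d.\ and independent of $S_1$, the conditional expectation of the product factorizes, and each factor is an instance of the unit-dominance condition in which the learning-rate function happens not to depend on the current datum. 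Thus $\E[\Ghat_{n,\text{off}}(\theta^*) \mid \sigma(S_1)] \leq 1$, and taking total expectation gives $\E[\Ghat_{n,\text{off}}(\theta^*)] \leq 1$, i.e.\ an e-value.

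Finally, for validity, since $\theta^* \in \Theta_0$ we have $\Ghat_n(\Theta_0) = \inf_{\theta \in \Theta_0}\Ghat_n(\theta) \leq \Ghat_n(\theta^*)$, so $\{\Ghat_n(\Theta_0) \geq \alpha^{-1}\} \subseteq \{\Ghat_n(\theta^*) \geq \alpha^{-1}\}$; Markov's inequality with $\E[\Ghat_n(\theta^*)] \leq 1$ then bounds the rejection probability by $\alpha$, and for the online version the identical computation with $\tau$ in place of $n$, using the e-process bound, delivers anytime-validity. The genuine subtlety—and the reason a new condition is needed at all—is that $\omegahat_n$ is permitted to depend on the current observation $Z_n$, so it cannot be pulled outside the conditional expectation in the supermartingale step as a fixed $\omega$ could be in \Cref{lem:eprocess}; the unit-dominance condition is formulated precisely to absorb this difficulty by keeping $\omegahat(Z^n)$ inside the expectation over $Z_n$. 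The remaining work is purely technical: justifying optional stopping at unbounded $\tau$, and, in the offline case, conditioning on the training split so that the shared learning rate $\omegahat_{S_1}$ is measurable before the test points are averaged.
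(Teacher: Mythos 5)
Your proof is correct and follows essentially the same route as the paper, which simply asserts that the proofs of \Cref{lem:eprocess,lem:evalue} (and then \Cref{thm:evalue}) go through verbatim with the unit-dominance condition replacing the strong central condition; your write-up is a faithful elaboration of exactly that strategy, including the key observation that the condition is designed to keep the data-dependent $\omegahat_n$ inside the conditional expectation. No gaps.
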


%\subsection{Power of a Data-Driven Learning Rate}
\subsection{Power of adaptive GUe}

Our validity results do not rely specifically on $\thetahat_{i-1}$ and $\thetahat_{S_1}$ being AERMs.  Our primary motivation for choosing AERMs is for the sake of efficiency: AERMs are often consistent estimators of the risk minimizer, and this property leads to analogous large-sample consistency results for the above tests and confidence regions.  The next two theorems present successively stronger results along these lines. 

\begin{theorem}\label{thm:power}
Suppose $\theta$ is such that $R(\theta) > \inf_{\vartheta\in\Theta} R(\vartheta)$. Further suppose that $\sup_{\theta\in\Theta} |\rhat_n(\theta) - R(\theta)|  \ip 0$ as $n\rightarrow\infty$.
\begin{enumerate}
    \item If there exists a constant $\underline{\Omega} > 0$ such that 
    %the learning rates satisfy 
    $\liminf\limits_{n_1\rightarrow\infty} \omegahat_{S_1} \geq \underline{\Omega}$ almost surely, then $$\lim_{n \rightarrow \infty} \Pr\bigl\{ \widehat{G}_{n, \text{off}}(\theta) \geq \alpha^{-1} \bigr\} = 1$$
    \item Suppose that there exist positive constants $\overline{\Omega}$ and $\underline{\Omega}$ such that the learning rates satisfy $\omegahat_n\leq \overline{\Omega}$ almost surely for all $n$ and $\limsup_n \omegahat_n \geq \underline{\Omega}$ almost surely. If the $(\varepsilon, \delta)$-AERM mapping $z^k \mapsto \thetahat(z^k)$ is leave-one-out stable in the sense that 
    \begin{equation}
    \label{eq:estimator.stability}
    |\ell\{\thetahat(z^{n-1}); z_i\} - \ell\{\thetahat(z^{n}); z_i\}| = o(1), \quad \text{for all $(z_1,z_2,\ldots) \in \mathbb{Z}^\infty$ and all $i\in\{1, \ldots, n\}$}, 
    \end{equation}
    and either $\varepsilon > 0$ or $\delta < \underline{\Omega} \cdot \overline{\Omega}^{-1} \left\{R(\theta) - \inf_{\vartheta\in\Theta} R(\vartheta)\right\}/7$, then $$\lim_{n \rightarrow \infty} \Pr\bigl\{ \widehat{G}_{n, \text{on}}(\theta) \geq \alpha^{-1} \bigr\} = 1$$
\end{enumerate}
\end{theorem}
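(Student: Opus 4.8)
The plan is to take logarithms and argue that the exponent of each GUe-value diverges to $+\infty$ in probability, so that the event $\{G_n(\theta)\ge\alpha^{-1}\}=\{\log G_n(\theta)\ge-\log\alpha\}$ has probability tending to one for the fixed threshold $-\log\alpha>0$. The driving mechanism in both parts is that $\theta$ is risk-suboptimal while the lagged AERMs are risk-consistent. Writing $R^{*}:=\inf_{\vartheta\in\Theta}R(\vartheta)$ and $\Delta:=R(\theta)-R^{*}>0$, I would first record that the uniform-convergence hypothesis $\sup_{\vartheta}|\rhat_n(\vartheta)-R(\vartheta)|\ip0$ forces $R(\thetahat_k)\ip R^{*}$ for any AERM $\thetahat_k$: combine the AERM inequality, the uniform gap, and the vanishing slack $\delta\,k^{-(1+\varepsilon)}$. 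Thus the per-unit excess loss accumulated by the GUe-value is, up to asymptotically negligible terms, the strictly positive quantity $\Delta$, and the learning rate converts this into exponential growth.

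For Part~1 the argument is direct. By definition $\log\widehat G_{n,\text{off}}(\theta)=\omegahat_{S_1}\,n_2\{\rhat_{S_2}(\theta)-\rhat_{S_2}(\thetahat_{S_1})\}$. Since $S_2$ is independent of $\thetahat_{S_1}$, uniform convergence on $S_2$ gives $\rhat_{S_2}(\theta)\ip R(\theta)$ and $\rhat_{S_2}(\thetahat_{S_1})-R(\thetahat_{S_1})\ip0$; together with $R(\thetahat_{S_1})\ip R^{*}$ this yields $\rhat_{S_2}(\theta)-\rhat_{S_2}(\thetahat_{S_1})\ip\Delta>0$, so the bracket exceeds $\Delta/2$ with probability tending to one. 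On that event, using $\liminf_{n_1}\omegahat_{S_1}\ge\underline{\Omega}$ almost surely, the exponent is at least $\tfrac12\underline{\Omega}\,n_2\,\Delta$, which diverges as $n_2\to\infty$ and hence eventually exceeds $-\log\alpha$.

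Part~2 is the substantive case. I would decompose
\[
\log\widehat G_{n,\text{on}}(\theta)=\sum_{i=1}^{n}\omegahat_i\{R(\theta)-R(\thetahat_{i-1})\}+\sum_{i=1}^{n}\omegahat_i\,\xi_i,
\qquad \xi_i:=\{\ell(\theta;Z_i)-R(\theta)\}-\{\ell(\thetahat_{i-1};Z_i)-R(\thetahat_{i-1})\},
\]
where the first sum is the signal and the second is the centered fluctuation. The key structural observation is that $\limsup_n\omegahat_n\ge\underline{\Omega}>0$ almost surely forces $W_n:=\sum_{i\le n}\omegahat_i\to\infty$, since infinitely many summands are bounded below by a positive constant. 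Because $R(\theta)-R(\thetahat_{i-1})\ge\Delta/2$ for all large $i$ with high probability, the signal grows like $\tfrac{\Delta}{2}W_n$, so it suffices to show the fluctuation term is $o(W_n)$. The ``soft'' contributions are handled by a reweighting argument: replacing the prequential loss $\ell(\thetahat_{i-1};Z_i)$ by a stabilized version through the leave-one-out bound \eqref{eq:estimator.stability}, the discrepancies $\rho_i=o(1)$ contribute at most $\overline{\Omega}\sum_i\omegahat_i|\rho_i|$, which is $o(W_n)$ precisely because the vanishing $\rho_i$ are reweighted by the same $\omegahat_i$ that sum to $W_n$; the same device turns the consistency statement $R(\thetahat_{i-1})\to R^{*}$ and the AERM slack into $o(W_n)$ terms. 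The accumulated slack is genuinely negligible when $\varepsilon>0$, and otherwise is bounded by a fixed multiple of $\overline{\Omega}\delta$, which is the origin of the explicit constraint $\delta<\underline{\Omega}\,\overline{\Omega}^{-1}\Delta/7$, the constant $7$ collecting the several triangle-inequality error terms generated by this bookkeeping.

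The main obstacle is controlling the remaining genuinely stochastic part, and it is genuinely subtle because $\omegahat_i$ is permitted to depend on $Z_i$: the products $\omegahat_i\xi_i$ are therefore \emph{not} martingale differences, so the usual mean-zero cancellation underlying a weighted law of large numbers is not available, and an adversarial correlation between $\omegahat_i$ and $Z_i$ must be ruled out before the signal can be shown to dominate. This is exactly where the leave-one-out stability and the two-sided learning-rate bounds do the work: stability makes $\thetahat_{i-1}$ essentially fixed, so the increment $\ell(\theta;Z_i)-\ell(\thetahat_{i-1};Z_i)$ behaves like a single fixed comparison function, while the uniform bound $\omegahat_i\le\overline{\Omega}$ controls the increment sizes and allows a martingale (or uniform-LLN) bound showing the reweighted fluctuations are $o(W_n)$ in probability. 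Assembling the pieces gives $\log\widehat G_{n,\text{on}}(\theta)\ge\tfrac{\Delta}{2}W_n-o(W_n)\to\infty$ in probability, so the rejection event has probability tending to one. I expect this fluctuation-control step, rather than the signal analysis, to be the crux of the proof.
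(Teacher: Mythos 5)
Your Part~1 (offline) argument is essentially the paper's own proof: the same three-term decomposition of $\rhat_{S_2}(\theta)-\rhat_{S_2}(\thetahat_{S_1})$ into two sampling-error terms plus the risk gap, controlled by uniform convergence, risk-consistency of the AERM, and the $\liminf$ bound on $\omegahat_{S_1}$ (the paper only adds bookkeeping to make the convergence uniform in $n_1$ so the double limit exists). Part~2, however, has genuine gaps, and your own flag about the ``crux'' is exactly where they sit. First, the signal bound: you need the weighted average $W_n^{-1}\sum_i \omegahat_i\{R(\thetahat_{i-1})-\inf_\vartheta R(\vartheta)\}$ to be small with high probability, but the hypothesis $\sup_\vartheta|\rhat_n(\vartheta)-R(\vartheta)|\ip 0$ is only convergence in probability: it makes $R(\thetahat_{i-1})-\inf_\vartheta R(\vartheta)$ small at each fixed large $i$ with high probability, but it does not prevent the exceptional times from occurring infinitely often almost surely, and since $\omegahat_i$ is allowed to depend on all of $Z^i$, nothing in the hypotheses stops the weights from concentrating precisely on those times; then the signal need not be $\gtrsim \Delta W_n/2$ and can even be negative. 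Second, the fluctuation bound $\sum_i\omegahat_i\xi_i=o_p(W_n)$ is asserted rather than proved, and the tools you invoke are not available: there is no martingale structure (as you yourself note, $\omegahat_i$ may depend on $Z_i$), the theorem grants no moment conditions on the loss that would support a weighted law of large numbers, and stability does not make $\thetahat_{i-1}$ ``essentially fixed''---it only makes \emph{consecutive} AERMs close at the observed points, not close to any fixed parameter.

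The paper's proof is engineered precisely to avoid both steps. Instead of centering each increment at its conditional risk, it compares every lagged AERM to the \emph{final} AERM: the term $\sigma_n=\frac{1}{n}\sum_{i=1}^n\omegahat_{i-1}\{\ell(\thetahat_{i-1};Z_i)-\ell(\thetahat_n;Z_i)\}$ is controlled pathwise, i.e., deterministically, by an induction on the AERM defining inequality (\Cref{lem:onlinedomoffline}, which is where the slack $\overline{\Omega}H_n^{(\varepsilon)}\delta$ and hence the dichotomy ``$\varepsilon>0$ or $\delta$ small'' enters) together with the Ces\`{a}ro average of the stability bound \eqref{eq:estimator.stability}. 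After this reduction, the only stochastic input is the uniform-convergence hypothesis applied to $\rhat_n$ at the two points $\thetahat_n$ and $\theta$, combined with a split-at-a-finite-index argument that exploits $\omega=\limsup_n\omegahat_n$ and nonnegativity of the loss to tame the data-dependent weights. No per-step risk consistency of the lagged AERMs and no control of centered prequential fluctuations is ever needed. To salvage your decomposition you would need extra assumptions---e.g., predictable weights $\omegahat_i$ depending only on $Z^{i-1}$ plus boundedness or moment conditions on the loss---which the theorem does not provide.
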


\Cref{thm:power} says that, under regularity conditions, the power function for the GUe-value test of the point null $H_0: \theta^* = \theta$ converges to $1$ for each $\theta$ that is not a risk minimizer. Hence, for any non-risk minimizing $\theta$, we see that the associated $(1-\alpha)$-level confidence set for $\theta^*$ shrinks to eventually exclude $\theta$ with high probability as more data is collected. The uniform convergence of the empirical risk to the risk is a standard condition, as it is sufficient for $\rhat_n(\thetahat_n) \rightarrow R(\theta^*)$ in the first place. Similarly, it is typical to require some form of estimator stability in the online setting in order to learn $\theta^*$ \citep[e.g.,][]{bousquet2002, rakhlin2005, shalevschwartz2010}. For example, the ERM is leave-one-out stable in the sense of \eqref{eq:estimator.stability} [with rate $n^{-1}$, see \eqref{eq:estimator.stability.rate}] if the loss function is smooth and strongly convex over Euclidean space \citep[Theorem 7.10]{zhang2023}.

\begin{remark}
\label{re:no.scc}
{\em We also note that the above theorem holds for any positive choice of learning rate $\omega$. That is, even if $\omega$ is set to be too large or, otherwise, the strong central condition fails, the adaptive GUe-value still has desirable power. That said, we caution that powerful tests without finite-sample validity guarantees must be used with care.}
\end{remark}

\begin{remark}\normalfont{
It might seem surprising that the power result for the adaptive online GUe requires $\limsup_n \omegahat_n \geq \underline{\Omega}$ while the same result for the adaptive offline GUe requires the stronger condition $\liminf_{n_1} \omegahat_{S_1} \geq \underline{\Omega}$. We believe, however, that the stronger condition cannot be avoided for the offline GUe-value. This is because the adaptive online GUe-value ``remembers" its previous values, so as long as one occasionally chooses nonzero learning rates, $\Ghat_{n, \text{on}}$ continues growing. In contrast, the adaptive offline GUe-value only has access to the learning rate it chooses for the particular sample size, so alternating $\omegahat_{S_1}$ between a zero and nonzero value would prevent the relevant limit from existing. Furthermore, even restricting $\omegahat_{S_1}$ to nonzero values, there are no other safeguards in the hypotheses preventing an adverserial selection strategy from choosing $S_2$ and $\omegahat_{S_1}$ in such a way that $\Ghat_n$ remains small infinitely often.}
\end{remark}

\begin{remark}\normalfont{
    The sufficient condition for unit-dominance, i.e., $\omegahat_n \leq \overline{\omega}$, might appear to be in tension with the above theorem's requirement $\limsup \omegahat_n \geq \underline{\Omega}$, but that is not the case.  Whereas $\overline{\omega}$ from the strong central condition is fixed and problem-specific, the constant $\underline{\Omega}$ may be arbitrarily small and is (indirectly) chosen by the practitioner via the choice of learning-rate selection algorithm. Consequently, so long as $\underline{\Omega} < \overline{\omega}$, we may enjoy both frequentist validity and desirable power properties; see \Cref{fig:powerexample}.
}\end{remark}

Vanishing Type~II error probability under fixed alternatives is a relatively weak property.  A more refined analysis considers alternatives $\theta_n$ that are different from but converging to the risk minimizer.  Then the relevant question is: How fast can the alternative $\theta_n$ converge to $\theta^*$ and still the adaptive GUe-value can distinguish the two? The following theorem gives an answer to this question, effectively bounding the radius of the adaptive GUe-value confidence set.  That is, for $\beta$ as defined below, the confidence set contains a point that has risk $\gtrsim n^{-\beta}$ more than $\theta^*$ with vanishing probability.

\begin{theorem}\label{thm:radius}
Fix $\beta\in (0, 1)$ and let $(\theta_n)_{n\in\N}$ be a sequence in $\Theta$ such that $R(\theta_n) - \inf_\vartheta R(\vartheta) \gtrsim n^{-\beta}$. Then for any $\omega > 0$, we have the following rate results for the adaptive GUe-value:
\begin{enumerate}
    \item Suppose that $\sup_\theta |\rhat_n(\theta) - R(\theta)| = o_p(n^{-\beta})$. Further suppose that there exist positive constants $\overline{\Omega}$ and $\underline{\Omega}$ such that the learning rates satisfy $\omegahat_n\leq \overline{\Omega}$ almost surely for all $n$ and $\limsup_n \omegahat_n \geq \underline{\Omega}$ almost surely. If the $(\varepsilon, \delta)$-AERM mapping $z^k \mapsto \thetahat(z^k)$ is leave-one-out stable at rate $n^{-\beta}$, i.e., 
    \begin{equation}
    \label{eq:estimator.stability.rate}
        |\ell\{\thetahat(z^{n-1}); z_n\} - \ell\{\thetahat(z^{n}); z_n\}| = o(n^{-\beta}), \quad \text{for all $(z_1,z_2,\ldots) \in \mathbb{Z}^\infty$}, 
    \end{equation}
    and $\varepsilon > 0$, then $\lim_{n\rightarrow\infty} \Pr\bigl\{ \Ghat_{n, \text{on}}(\theta_n) \geq \alpha^{-1} \bigr\} = 1$. 
    % \begin{equation*}
    %     \lim_{n\rightarrow\infty} \Pr\bigl\{ G_{n, \text{on}}^{(\omega)}(\theta_n) \geq \alpha^{-1} \bigr\} = 1.
    % \end{equation*}
    \item Suppose that  $\sup_\theta |\rhat_{S_2}(\theta) - R(\theta)| = o_p(n_2^{-\beta})$, and that there exists a positive constant $\underline{\Omega}$ such that the learning rates satisfy $\liminf\limits_{n_1\rightarrow\infty} \omegahat_{S_1} \geq \underline{\Omega}$ almost surely. If $R(\thetahat_{S_1}) \ip \inf_\vartheta R(\vartheta)$ as $n_1\rightarrow \infty$, then 
    \begin{equation*}
        \lim_{(n_1, n_2)\rightarrow(\infty, \infty)} \Pr\bigl\{ \Ghat_{n, \text{off}}(\theta_{n_2}) \geq \alpha^{-1} \bigr\} = 1.
    \end{equation*}
    \item Suppose that $\sup_{\theta} |\rhat_{S_2}(\theta) - R(\theta)| = o_p(n_2^{-\beta})$, and that there exists a positive constant $\underline{\Omega}$ such that the learning rates satisfy $\liminf\limits_{n_1\rightarrow\infty} \omegahat_{S_1} \geq \underline{\Omega}$ almost surely. If $R(\thetahat_{S_1}) \ip \inf_\vartheta R(\vartheta)$ as $n_1\rightarrow \infty$ and $n_1 \lesssim n_2$, then 
\begin{equation*}
    \lim_{(n_1, n_2)\rightarrow(\infty, \infty)} \Pr\left\{\Ghat_{S, \text{off}}(\theta_{n}) \geq \alpha^{-1}\right\} = 1.
\end{equation*}
\end{enumerate} 
\end{theorem}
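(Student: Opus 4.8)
The plan is to reduce the present offline claim to the theorem's second item (the offline result for the point $\theta_{n_2}$), exploiting the sample-size balance $n_1 \lesssim n_2$. The first step is to pass to logarithms: since $\alpha \in (0,1)$, the event $\{\Ghat_{S,\text{off}}(\theta_n) \geq \alpha^{-1}\}$ coincides with $\{\omegahat_{S_1}\, n_2\,[\rhat_{S_2}(\theta_n) - \rhat_{S_2}(\thetahat_{S_1})] \geq \log(1/\alpha)\}$, so it suffices to show that the exponent $\omegahat_{S_1}\, n_2\,[\rhat_{S_2}(\theta_n) - \rhat_{S_2}(\thetahat_{S_1})]$ diverges to $+\infty$ in probability; the fixed positive threshold $\log(1/\alpha)$ is then eventually exceeded with probability tending to $1$.

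The observation that drives the reduction is that $n_1 \lesssim n_2$ forces $n = n_1 + n_2$ and $n_2$ to be of the same order: one always has $n_2 \leq n$, while $n_1 \lesssim n_2$ gives $n \lesssim n_2$, so $n \asymp n_2$. Consequently the hypothesis $R(\theta_n) - \inf_\vartheta R(\vartheta) \gtrsim n^{-\beta}$ upgrades to $R(\theta_n) - \inf_\vartheta R(\vartheta) \gtrsim n_2^{-\beta}$, which is exactly the excess-risk rate enjoyed by $\theta_{n_2}$ in the second item. Because the tested point enters the analysis only through this lower bound on its excess risk, the argument establishing the second item applies verbatim to $\theta_n$, and the present claim follows. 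In this sense the added hypothesis $n_1 \lesssim n_2$ is used precisely to convert the $n$-indexed rate into the $n_2$-indexed rate that the offline machinery requires.

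For completeness I would reproduce the estimate underlying both items. Writing $r^\star := \inf_\vartheta R(\vartheta)$ and inserting $R$, the decomposition is
\begin{align*}
\rhat_{S_2}(\theta_n) - \rhat_{S_2}(\thetahat_{S_1})
&= \{R(\theta_n) - r^\star\} - \{R(\thetahat_{S_1}) - r^\star\} \\
&\quad + \{\rhat_{S_2}(\theta_n) - R(\theta_n)\} - \{\rhat_{S_2}(\thetahat_{S_1}) - R(\thetahat_{S_1})\}.
\end{align*}
The last two bracketed terms are each bounded in absolute value by $\sup_\theta |\rhat_{S_2}(\theta) - R(\theta)| = o_p(n_2^{-\beta})$, so after multiplying by $n_2$ they contribute $o_p(n_2^{1-\beta})$. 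The first bracket, scaled by $n_2$, is the main term: $n_2\{R(\theta_n) - r^\star\} \gtrsim n_2\, n^{-\beta} \asymp n_2^{1-\beta} \to \infty$ since $\beta < 1$. The learning rate is kept away from zero by $\liminf_{n_1}\omegahat_{S_1} \geq \underline{\Omega} > 0$, so multiplying the exponent by $\omegahat_{S_1}$ preserves the divergence.

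The main obstacle is the remaining term $n_2\{R(\thetahat_{S_1}) - r^\star\}$, the excess risk of the training-set estimator inflated by the factor $n_2$. Here the consistency hypothesis $R(\thetahat_{S_1}) \ip r^\star$ (as $n_1 \to \infty$) and the balance $n_1 \lesssim n_2$ must be combined: consistency makes $R(\thetahat_{S_1}) - r^\star$ vanish, and the delicate point is to guarantee that this vanishing excess risk, once multiplied by $n_2$, is dominated by the main term $n_2^{1-\beta}$ rather than swamping it. Controlling this interplay between the decay of $R(\thetahat_{S_1}) - r^\star$ and the growth of $n_2$ relative to $n_1$ is exactly the crux handled in the proof of the second item, which the present claim inherits; everything else is a routine application of the uniform-convergence hypothesis together with the lower bound on the learning rate.
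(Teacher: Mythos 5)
Your treatment of the third item is, in substance, the paper's own: the paper proves that part by re-running the second item's argument, and the only genuinely new ingredient there is precisely your observation that $n_1 \lesssim n_2$ forces $n = n_1 + n_2 \leq (k+1)n_2$ for all large $n_1$, so that $R(\theta_n) - \inf_\vartheta R(\vartheta) \geq c\,n^{-\beta} \geq c\,\{(k+1)n_2\}^{-\beta}$; since the tested point enters that argument only through this excess-risk lower bound, your ``applies verbatim'' reduction is legitimate and mirrors the paper's proof of the third part.

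There are, however, two genuine gaps. First, the statement is a three-part theorem, and you prove only the third part, conditionally on the second: part 2 is invoked as a black box, and part 1 (the online case) is never addressed---it requires entirely different machinery (the paper's \Cref{lem:onlinedomoffline}, the leave-one-out stability hypothesis at rate $n^{-\beta}$, and the requirement $\varepsilon > 0$ on the AERM), none of which appears in your proposal. Second, your ``for completeness'' sketch misidentifies how the one delicate term is actually controlled. You propose to show that $n_2\{R(\thetahat_{S_1}) - \inf_\vartheta R(\vartheta)\}$ is ``dominated by the main term $n_2^{1-\beta}$,'' attributing this to an ``interplay between the decay of $R(\thetahat_{S_1}) - \inf_\vartheta R(\vartheta)$ and the growth of $n_2$ relative to $n_1$'' inside the proof of item 2. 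But item 2 imposes no relation between $n_1$ and $n_2$ at all, and the only hypothesis on $\thetahat_{S_1}$ is rateless consistency, $R(\thetahat_{S_1}) \ip \inf_\vartheta R(\vartheta)$ as $n_1 \to \infty$; no domination at rate $n_2^{-\beta}$ can be extracted from that. What the paper actually does is different: it shows that the threshold $\frac{\log\alpha}{\omegahat_{S_1} n_2} + \Delta_n$ (with $\Delta_n$ one third of the excess risk of $\theta_n$) is eventually \emph{positive}---this is where $\beta < 1$ enters, since the $n_2^{-\beta}$ term must dominate the negative $O(n_2^{-1})$ term---and then applies consistency of $R(\thetahat_{S_1})$ against a fixed positive threshold, never needing a rate. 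So if you attempted to complete your sketch as written, the rate-domination step would fail; your argument survives only because you deferred that step to item 2 rather than carrying out the estimate you describe.
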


Note that $R(\thetahat_{S_1}) \ip \inf_\vartheta R(\vartheta)$ as $n_1 \rightarrow \infty$ holds if $S_1$ is identically distributed to $S_2$, and that $n_1 \lesssim n_2$ holds if the sample splitting occurs in a constant proportion. Then \Cref{thm:radius} says that for the power to exhibit desirable behavior, it only requires uniform convergence of the empirical risk at a typically assumed rate (and, in the offline case, for our split between training and validation sub-samples to be done at random). Furthermore, the proof of this theorem illustrates that the size of our confidence set decays at a rate no faster than the rate that the estimator converges to the infimum risk; for this reason, it is preferable to use ERMs as the estimator as opposed to $\delta$-AERMs for large $\delta$. 

The rate requirement of \Cref{thm:radius} is far from restrictive: a rate of about $o_p(n^{-1/2})$ is fairly typical.
As a concrete example, suppose $(X_1, Y_1), \ldots, (X_n, Y_n)$ are i.i.d.~random vectors from any distribution $\D$ over $\X \times \{0, 1\}$, let $h:\X \times \Theta \rightarrow \{0, 1\}$ be any measurable function, and 
consider the zero-one loss function $\ell\{\theta\altgiven (x, y)\} = \mathbb{I}\{y \neq h(x\altgiven \theta)\}$. If the set $\{x\mapsto h(x\altgiven\theta) \mid \theta\in\Theta\}$ is of finite VC dimension (e.g., the set is a subset of a finite-dimensional affine space), it follows from Corollary 3 of \citet{hanneke2016} that $\sup_{\theta}|\rhat_n(\theta) - R(\theta)| = o_p(n^{-\beta})$ for any $\beta < 1/2$, so we see that the rate condition of the theorem holds. Indeed, Theorem 17.1 of \citet{anthony2009} implies that this rate of $o_p(n^{-\beta})$ for any $\beta < 1/2$ holds for \textit{any} bounded loss function of finite fat-shattering dimension.

Note that the above theorem may be sub-optimal for unbounded losses. For example, one may hope for mean estimation via $L^2$ loss to have confidence sets of size $O_P(n^{-1/2 + \delta})$ for every $\delta > 0$. This is not provided by directly using \Cref{thm:radius} since, to achieve this, the theorem's hypotheses would require $\sup_\theta|\rhat_n(\theta) - R(\theta)| = o_p(n^{-1 + 2\delta})$, which is generally out of reach. However, this weakness is only an artifact of the conclusions of \Cref{thm:radius} being required to hold for \textit{all} loss functions; for the specific case of the $L^2$ loss, \Cref{sec:l2power} shows that the adaptive GUe confidence sets do have size $O_P(n^{-1/2 + \delta})$ for all $\delta > 0$---in fact, the size is $O_P(n^{-1/2} \log n)$.

Finally, although confidence sets only make sense when the risk minimizer $\theta^*$ exists, \Cref{thm:power,thm:radius} apply even if $\inf_\vartheta R(\vartheta)$ is not attained.  See \Cref{rem:infnotexist} in \Cref{app:remarks} for further details on this.

\subsection{Learning rate insensitivity}
An important open question concerns the practical choice of learning rate in the adaptive GUe-value.  Fortunately, as the following two numerical examples show, the adaptive GUe-value is largely insensitive to the choice of learning rate sequence---that is, virtually any reasonable choice of learning rate ought to suffice.

\begin{example}\normalfont{
Consider the $K$-means algorithm, an unsupervised learning method that clusters data $Z_1, \ldots, Z_n$ into $K$ clusters, with $K$ fixed in advance, where each cluster has minimum within-cluster variance. Specifically, $K$-means aims to find a partition $\theta = (\theta_1,\ldots,\theta_K)$ of the data, where $\theta_k \subseteq \{1,\ldots,n\}$ for each $k=1,\ldots,K$, that minimizes
\begin{equation*}
\ell(\theta; Z^n) = \sum_{k=1}^K |\theta_k| \, \operatorname{Var}(\{Z_i: i \in \theta_k\}), 
\end{equation*}
where $|A|$ denotes the cardinality of the set $A$ and the $\operatorname{Var}$ operator returns the sample variance of its data-set-valued argument; note that we can define $\operatorname{Var}(\varnothing)$ arbitrarily, here, since multiplying by the cardinality (of $\varnothing$) eliminates the dependence on this arbitrary choice. This partition $\theta$ implicitly defines the centroids $\mu_1, \ldots, \mu_K$, where $\mu_k = |\theta_k|^{-1} \sum_{i\in \theta_k} z_i$.  These centroids are typically the quantities of interest. 

We generate bivariate normal data from $K=3$ populations, $\operatorname{N}_2(\mu_k, \sigma^2 I)$, where $\sigma^2=0.01$ and $\mu_1 = (1,0)^\top$, $\mu_2 = (-1/2, \sqrt{3}/2)^\top$, and $\mu_3 = (-1/2, -\sqrt3/2)^\top$.  Then the true centroids for $K$-means with $K = 3$ are approximately the means of each population.

\begin{figure}[t]
    \centering
    \includegraphics[width=0.45\textwidth]{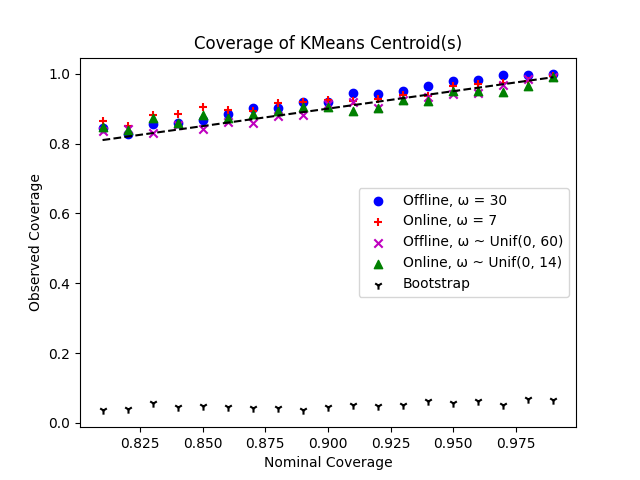}
    \caption{Coverage of $\mu_3$ for the bootstrapped confidence set versus joint coverage of $(\mu_1, \mu_2, \mu_3)$ for the GUe confidence sets.}
    \label{fig:kmeans}
\end{figure}

One commonly-used method to construct approximate confidence sets for these centroids is via bootstrapping \citep{hofmans2015}. That is, one resamples from the observed data set that has the estimate $\widehat{\mu}$ for the centroid, performs $K$-means again on the bootstrapped data, and creates an ellipse with major and minor axes based on the covariance matrix necessary for the ellipse to contain the $\widehat{\mu}$ with the nominal level of coverage over the bootstrap resamples.  Here, we compare this procedure for uncertainty quantification about the $K$-means centroids to our proposed generalized universal inference framework.  

If we draw 100 samples with equal probability from the three populations, then these bootstrapped confidence sets indeed attain approximately the correct nominal coverage. However, this changes when the populations are unbalanced. \Cref{fig:kmeans} illustrates that when the populations are sampled from with probabilities $0.96$, $0.03$, and $0.01$, respectively, bootstrapping leads to abysmal coverage for the least frequent population centroid (and thus would perform even worse if it were used to create a joint confidence set for all three centroids). In contrast, it is seen that both the online and offline GUe-values (with $\omega = 7$ and $\omega = 30$ respectively) possess essentially the correct level of coverage for the joint vector $(\mu_1, \mu_2, \mu_3)$. Furthermore, it is seen that if the learning rates are chosen \textit{uniformly at random} within 100\% of the ``correct" values, i.e., $\omegahat_{S_1} \sim \operatorname{Unif}(0, 60)$ and each $\omegahat_{i} \overset{\text{i.i.d.}}{\sim} \operatorname{Unif}(0, 14)$, then the GUe-value still attains the correct level of coverage.
}\end{example}

\begin{example}\normalfont{
Consider the problem of binary classification in $\mathbb{R}$ via a support vector machine, in which given data $(x_i, y_i)_{i=1}^n$ with $x_i \in \R$ and $y_i \in \{-1, 1\}$, one wishes to find $\theta = (\theta_1, \theta_2)$ minimizing
\begin{equation*}
    \ell(\theta\altgiven X^n, Y^n) = \frac{1}{n}\sum_{i=1}^n \max\{0, 1 - Y_i(\theta_1 + \theta_2 X_i)\}.
\end{equation*}
We generate data from the logistic model $\Pr(Y = 1 \given X) = \operatorname{expit}(X)$ with $X\sim \operatorname{N}(0,1)$. Furthermore, to illustrate the anytime validity aspect, rather than collecting a sample of fixed size, we collect data until the stopping rule $\sum x_i^2 > 10$ is satisfied. 
\begin{figure}
    \centering
    \includegraphics[width=0.45\linewidth]{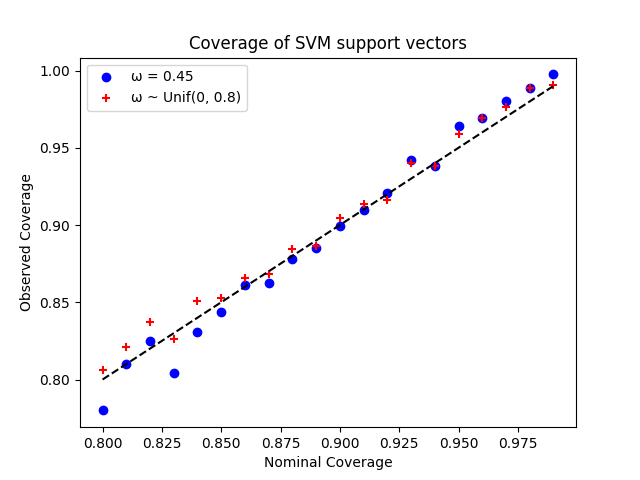}
    \caption{Coverage of optimal support vector from SVM by the online GUe-value}
    \label{fig:svm}
\end{figure}
\Cref{fig:svm} demonstrates that once again, by selecting the ``correct" choice of $\omega = 0.45$ as the learning rate for the online GUe-value, one can obtain essentially the correct level of coverage. Furthermore, even by selecting the sequence of learning rates uniformly at random in the interval $[0, 0.8]$, one still obtains approximately correct levels of coverage. 
}
\end{example}

\subsection{Algorithmic selection of learning rates}
The previous two examples show that selecting learning rates completely at random within a fairly large neighborhood of the ``correct'' value still empirically satisfies validity and anytime-validity.  Evidently, the main focus of any learning rate selection algorithm should simply be to land within this neighborhood. We previously noted that the offline GUe-value can be written as a ratio of Gibbs posterior densities, so it is not unreasonable to apply a learning rate selection method used to choose a Gibbs posterior learning rate. Some proposals for the latter include the unit information loss approach \citep{bissiri2016}, matching information gain \citep{holmes2017}, the asymptotic Fisher information matching approach \citep{lyddon2019}, the R-Safe Bayes algorithm \citep{grunwald2017}, and a sample-splitting strategy \citep{perrotta2020}, among others. \citet{wu2023} found that, with the learning rate chosen according to these strategies, the corresponding Gibbs posterior credible sets generally fail to achieve the nominal frequentist coverage probability.  They do, however, identify one algorithm that maintains valid frequentist coverage even under model misspecification: the general posterior calibration (GPC) algorithm of \citet{syring2019}; see, also, \citet{martin2022}. For further discussion on our choice of GPC for selecting the learning rate, see \Cref{rem:safebayes} in \Cref{app:remarks}. 

The GPC algorithm proceeds by constructing a $1-\alpha$ level confidence set for $\theta^*$ using the nonparametric bootstrap, resampling from the original sample $S$ and choosing $\omega$ such that the credible set contains $\thetahat_S$ with probability $1-\alpha$ over the bootstrap resamples. Given that the GPC algorithm does well in attaining valid confidence sets from the Gibbs posterior, it is sensible to similarly use the nonparametric bootstrap to select the learning rate for the
GUe-value. This approach is detailed in \Cref{alg:nonparam}. 

\begin{algorithm}[t]
    \caption{Nonparametric Bootstrap for Learning Rate Calibration}\label{alg:nonparam}
    \begin{algorithmic}
        \Require $(z_1, \ldots, z_n)$, a dataset we may train on
        \Require $\Omega$, a set of candidate learning rates
        \Require $\alpha$, a significance level to calibrate to
        \Require $N$, the number of bootstrap iterations to do

        \State Compute $\thetahat$, the ERM for $(z_{1}, \ldots, z_{n})$
        \State $\text{coverages}(\omega) \gets 0\textbf{ for all } \omega\in\Omega$
        \For{$\omega\in\Omega$}
            \For{$i$ in $1,\ldots, N$}
                \State Draw $S_B = (z_{b(1)}, \ldots, z_{b(n)})$ uniformly from $(z_1, \ldots, z_n)$
                \If{$\widehat{G}_{S_B}(\thetahat) < 1/\alpha$, where every $\omegahat = \omega$}
                    \State $\text{coverages}(\omega) \gets \text{coverages}(\omega) + 1/N$
                \EndIf
            \EndFor
        \EndFor
        \State \Return $\argmin_{\omega\in\Omega} |\text{coverages}(\omega) - (1-\alpha)|$
    \end{algorithmic}
\end{algorithm}

\begin{example}\normalfont{
To demonstrate empirically that \Cref{alg:nonparam} does indeed satisfy the unit-dominance criterion, hence yielding anytime-validity, we simulate independent sequences of i.i.d.~standard exponentially distributed data, and estimate  
\begin{equation}\label{eq:compatcondvalue}
    \E\qty[\exp(-\omegahat_n \cdot \{\ell(\thetahat_{n-1}\altgiven Z_i) - \ell(\theta^*\altgiven Z_i)\}) \given Z^{n-1}]
\end{equation}
for each $n \in \{1, 2, \ldots, 100\}$, when using the $L^1$ loss and using \Cref{alg:nonparam} (with $\alpha = 0.05$) to select each $\omegahat_n$. \Cref{fig:compatcond} shows the estimated value of (\ref{eq:compatcondvalue}) over 6 different samples of size 100, as well as the lower bound for the joint 95\% confidence interval for (\ref{eq:compatcondvalue})---as we recall that the unit-dominance condition requires that (\ref{eq:compatcondvalue}) be at most $1$ for all $n\in\mathbb{N}$. We see that in the visualized examples, we do indeed estimate (\ref{eq:compatcondvalue}) to be at most $1$ in all cases; indeed, over 100 independent samples, all of the samples have the 95\% joint confidence interval for (\ref{eq:compatcondvalue}) contain $1$ for all $n\in\{1, \ldots, 100\}$, giving strong evidence that \Cref{alg:nonparam} does indeed satisfy the unit-dominance condition.  

In addition to validity, we may also verify in this example that \Cref{alg:nonparam} leads to GUe-value-based tests with reasonable power properties. First, \Cref{fig:powerexample} (left panel) empirically demonstrates that the chosen learning rates satisfy $\limsup_n \omegahat_n > 0$ as required by \Cref{thm:power}.  Additional simulations (not shown) confirm that the overall pattern in the left panel of \Cref{fig:powerexample} does not change significantly when the underlying distribution changes.  Indeed, the learning rate distribution plots look roughly the same as in \Cref{fig:powerexample} when the data are shifted exponentials with different medians, Poissons, (discrete or continuous) uniforms, etc. Second, we consider testing $H_0: \text{median} = \log(2)$ versus $H_1: \text{median} \neq \log(2)$, where $\log(2)$ corresponds to the median of the standard exponential distribution.  Then \Cref{fig:powerexample} (right panel) shows the vanishing Type~II error rates for the test based on three different alternative distributions with medians different from $\log(2)$.
}
\end{example}

\begin{figure}[t]
    \centering
    \includegraphics[width=0.9\textwidth]{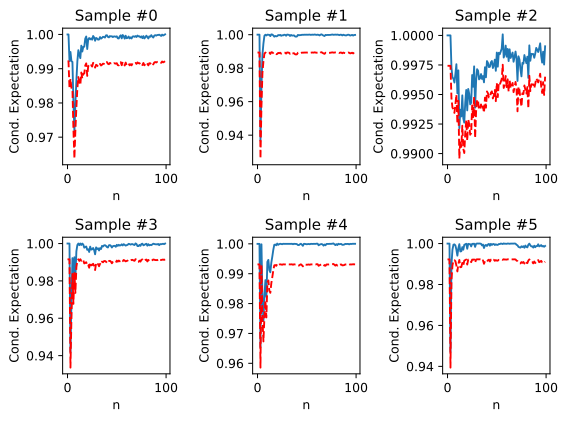}
    \caption{Estimates (in blue) for the quantity for the unit-dominance condition on i.i.d.~$\operatorname{Exponential}(1)$ samples of size 100, as well as lower bounds (in red, dashed) for the 95\% joint confidence interval for this quantity over $n \in \{1, \ldots, 100\}.$}
    \label{fig:compatcond}
\end{figure}

\begin{figure}[t]
    \centering
    \includegraphics[width=0.45\textwidth]{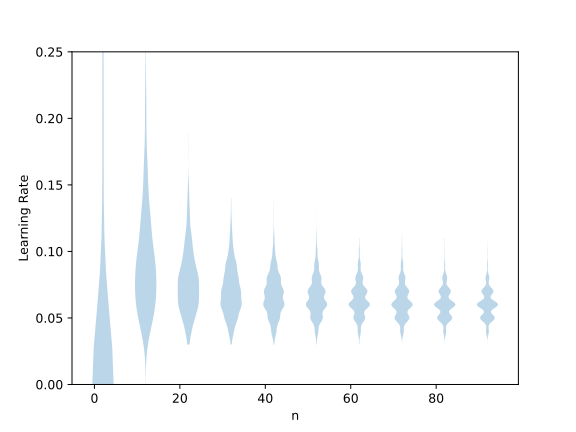}
    \includegraphics[width=0.45\textwidth]{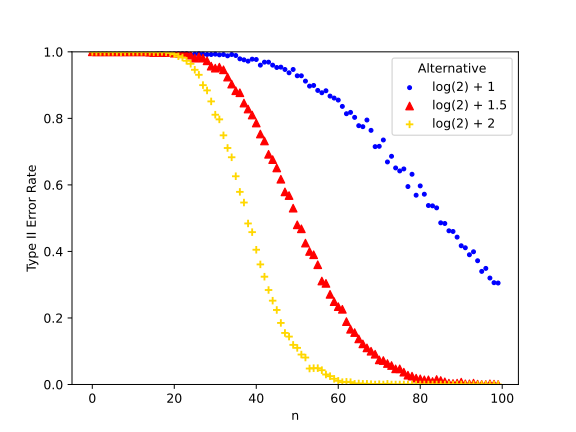}
    \caption{The distribution of learning rates chosen by \Cref{alg:nonparam} versus sample size (left) and the Type II error rate for testing $H_0: \text{median} = \log(2)$ versus $H_1: \text{median} \neq \log(2)$ for three different alternative distributions having medians as stated in the legend (right).}
    \label{fig:powerexample}
\end{figure}

Because the nonparametric bootstrap chooses an appropriate learning rate in a principled manner agnostic to how the data are distributed, it still tends to be conservative in general, choosing smaller learning rates than necessary (though it is noteworthy that universal inference is known to be conservative in general regardless due to application of Markov's inequality; see \citealt{park2023}). If one is sure that the data come from a particular parametric model, one may obtain less conservative choices of learning rates by instead employing the parametric bootstrap to choose a learning rate---at the expense of possibly having confidence sets with below-nominal-level coverage if the model is actually misspecified; see \Cref{sec:paramboot}.

One can also obtain exactly correct choices for the learning rate under certain distributional assumptions and for certain loss functions. \Cref{sec:l2lranlys} presents such results for the $L^2$ loss function for the mean of a random variable, subject to the strong central condition. In particular, we demonstrate that for Gaussian distributed data, one can theoretically calculate a learning rate for the GUe confidence set that obtains exactly the correct coverage; furthermore, this learning rate asymptotically yields the correct coverage rate even for non-Gaussian data, due to the central limit theorem.  We also discuss the existence of a learning rate that obtains at least the nominal coverage if the data are indeed i.i.d.~and one either knows or has good estimates for the second and third moments of the population. These results are admittedly narrow in scope, either requiring strong distributional assumptions or once again relying on asymptotics rather than guaranteeing finite-sample validity. However, we expect that the construction of efficient e-values for complex problems will inevitably require some sort of data-driven tuning, so our results provide a useful starting point for these developments.

\section{Simulation studies}\label{sec:simulation}
\subsection{Bounded mean estimation}\label{sec:compare}
Another methodology that shares similar aims as our GUe confidence sets is given by the \textit{predictable plugin empirical Bernstein} (PrPl-EB) confidence sets of \citet{waudbysmith2023}, which are also safe confidence sets due to e-process properties, but are limited to estimating the mean of a bounded random variable. Given a sample $Z_1, \ldots, Z_n$, the $1-\alpha$ level PrPl-EB confidence interval is given by $\bigcap_{t=1}^n C_t$, where 
\begin{align*}
    C_t &:= \qty(\frac{\sum_{i=1}^t\lambda_i Z_i}{\sum_{i=1}^t \lambda_i} \pm \frac{\log(2/\alpha)+\sum_{i=1}^t (Z_i-\widehat{\mu}_{i-1})^2(-\log(1-\lambda_i)-\lambda_i)}{\sum_{i=1}^t \lambda_i}) \\
    \lambda_t &:= \min\qty(c, \sqrt{\frac{2\log(2/\alpha)}{\widehat{\sigma}^2_{t-1} t \log(1+t)}}) \\
    \widehat{\sigma}_t^2 &:= \frac{1/4 + \sum_{i=1}^t (Z_i - \widehat{\mu}_i)^2}{t+1} \\
    \widehat{\mu}_t &:= \frac{1/2 + \sum_{i=1}^t Z_i}{t+1},
\end{align*}
and $c$ is any reasonable value in $(0, 1)$---we follow the authors' recommendation of $c = 1/2$. The width of the PrPl-EB confidence interval in the i.i.d.~setting scales with the true (unknown) standard deviation, and thus obtains reasonable coverages at large samples. However, as Figure 20 of \citet{waudbysmith2023} shows, for modest sample sizes, the PrPl-EB confidence interval tends to cover almost the entirety of the support.

\Cref{fig:ramdas_comparison} compares the coverage of the PrPl-EB confidence set and the GUe confidence sets on an i.i.d.~sample of size $10$ from the beta distribution, where the learning rate for the GUe confidence sets were chosen via \Cref{alg:nonparam}. In agreement with the findings of \citet{waudbysmith2023}, the PrPl-EB confidence set covers the entire interval $[0, 1]$ at such a small sample size; the GUe confidence sets, however, are more efficient. In fact, both the online and offline GUe confidence sets attain approximately the correct coverage for $\alpha < 0.05$. \Cref{fig:ramdas_eq1} presents the coverage of the offline GUe confidence set when using the learning rate suggested by \Cref{prop:normallr} in \Cref{sec:l2lranlys} (i.e., the learning rate that yields asymptotically correct coverage), as well as when dividing this learning rate by 2 (which is our suggestion to ensure correct coverage at finite sample sizes). Although \Cref{prop:normallr} requires either Gaussian data or a large enough sample size for the central limit theorem to apply (neither of which is true in this case), our GUe confidence sets are still approximately calibrated---modulo some mild undercoverage. On the other hand, our suggested heuristic of halving this learning rate is more than conservative enough to hit the nominal coverage in these examples.

\begin{figure}[t]
    \centering
    \includegraphics[width=0.45\textwidth]{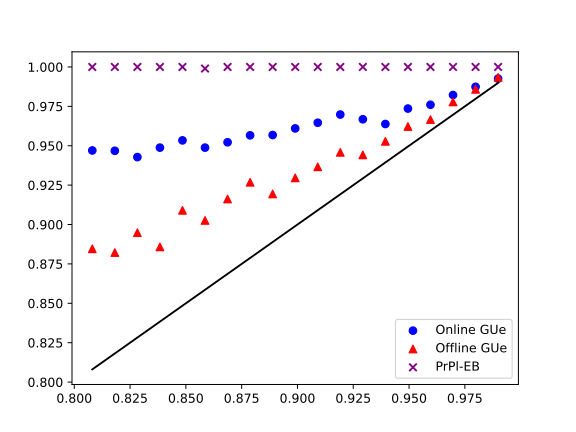}
    \caption{Nominal versus observed coverage of the PrPl-EB and GUe confidence sets based on i.i.d. $\operatorname{Beta}(5, 2)$ data.}
    \label{fig:ramdas_comparison}
\end{figure}

\begin{figure}[t]
    \centering
    \includegraphics[width=0.45\textwidth]{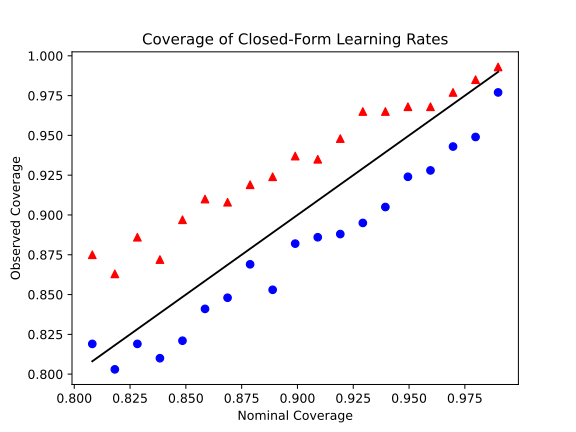}
    \caption{Nominal versus observed coverage of the offline GUe confidence sets based on i.i.d. $\operatorname{Beta}(5, 2)$ data. Blue circles indicate coverages when the learning rate is taken from \Cref{prop:normallr} in \Cref{sec:l2lranlys}, and red triangles are those when the learning rate is taken as half the value from \Cref{prop:normallr}}
    \label{fig:ramdas_eq1}
\end{figure}

\subsection{Replication crisis-related applications}
The replication crisis in science is a problem that has received significant attention in recent years. In this subsection, we showcase a variety of common problems that facilitate the lack of replicability of scientific experiments, and we demonstrate how these problems are mitigated by our GUe-value proposals.

\begin{example}\label{ex:falsenull}\normalfont{
Consider the following simple setup: A scientist is studying two populations that are distributed on $\R$ and wants to find the best threshold separating the two populations. That is, given data $(X_1, Y_1), \ldots, (X_n, Y_n)$ where $X_i\in\R$ are the observed data and $Y_i \in \{0, 1\}$ are the labels indicating which population the corresponding $X_i$ belong to, the scientist wishes to find the risk minimizer corresponding to the loss function
\begin{equation*}
    \ell(\theta\altgiven X, Y) = \mathbbm{1}(X\leq \theta)\cdot \mathbbm{1}(Y = 1) + \mathbbm{1}(X > \theta)\cdot \mathbbm{1}(Y = 0).
\end{equation*}
If the scientist does everything by the book---collecting a single data set of independent observations of a fixed, predetermined sample size from a known distribution, then generates a confidence interval from this data---then it is no surprise that the confidence interval works as planned: For any nominal coverage level, the practitioner shall observe precisely that level of coverage. This is not always what happens in applications, however. What often occurs is that the scientist has a null hypothesis $H_0: \theta^* = 0$ and an alternative hypothesis $H_1: \theta^* \neq 0$, and funding or publication hinges on the null hypothesis being rejected. Thus, especially when gathering data is expensive, the scientist may be tempted to gather more data when the data set collected so far fails to reject the null, and then stop collecting data once the null is rejected. Figure~\ref{fig:exact_stopping} demonstrates the effects of such a stopping rule.  The classical confidence intervals generated by the scientist tend to be less than the nominal coverage. On the other hand, 
%thanks to the anytime-validity property of e-processes, 
the online GUe confidence sets with the learning rate chosen via the nonparametric bootstrap exhibit coverage at or above the nominal level (n.b., the deviations below the nominal level are well within Monte Carlo error). Moreover and quite interestingly, even though the offline GUe-value is not provably an e-process, it too exhibits coverage approximately at the nominal level.

\begin{figure}[t]
    \centering
    \includegraphics[width=0.45\textwidth]{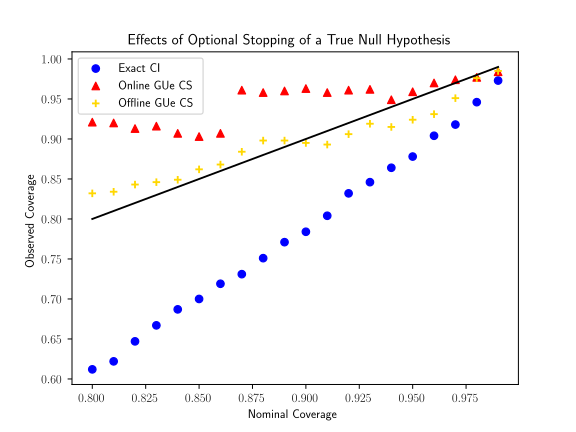}
    \caption{Nominal versus observed coverage of the ``exact" and GUe confidence sets when collecting data until the null hypothesis $H_0: \theta^* = 0$ is rejected for $\frac{1}{2}\operatorname{N}(\mu_1, \sigma^2) + \frac{1}{2}\operatorname{N}(\mu_2, \sigma^2)$ data, with $\mu_1 = 5$, $\mu_2 = 10$, and $\sigma^2 = 10^4$. In this case, the risk minimizer is $\theta^* = (\mu_1 + \mu_2)/2$.}
    \label{fig:exact_stopping}
\end{figure}
}\end{example}

The setting described in \Cref{ex:falsenull} is the ``best case" scenario in the sense that the practitioner gathers data until a null hypothesis is \textit{correctly} rejected; a meta-analysis of replication studies could plausibly correct this issue. But what happens when publications in the literature only present \textit{false} rejections of a null hypothesis?

\begin{example}\label{ex:truenull}\normalfont{
Consider the same setting as \Cref{ex:falsenull}, with a null hypothesis of the form $H_0: \theta^* \geq c$ for some $c$, but now suppose that $H_0$ is true. Due to the difficulty in publishing negative results, the only data sets present in the literature will be those that falsely reject this null hypothesis, and the coverage of these intervals is shown in \Cref{fig:reroll} to be essentially zero for most levels of nominal coverage. No meta-analysis can correct for this issue, as all published data are biased towards the incorrect alternative hypothesis. However, \Cref{fig:reroll} clearly demonstrates that the GUe confidence sets remain valid at all nominal coverage levels, even coming close to matching nominal coverage at all relevant levels.

\begin{figure}[t]
    \centering
    \includegraphics[width=0.45\textwidth]{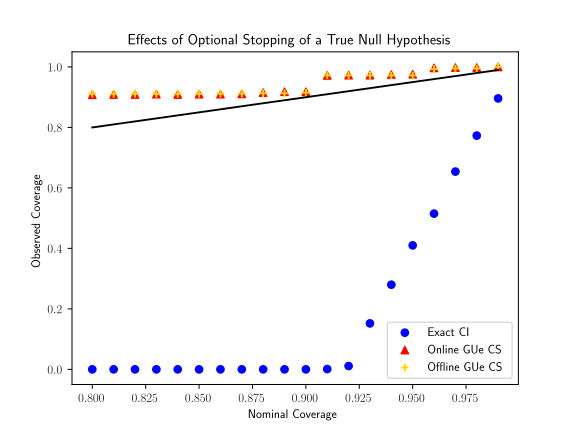}
    \caption{Nominal versus observed coverage of the ``exact" and GUe confidence sets when only considering data where the null hypothesis $H_0: \theta^* \geq -10$ is falsely rejected for $\frac{1}{2}\operatorname{N}(\mu_1, \sigma^2) + \frac{1}{2}\operatorname{N}(\mu_2, \sigma^2)$ data, with $\mu_1 = 5$, $\mu_2 = 10$, and $\sigma^2 = 10^4$. Note that $\theta^* = (\mu_1+\mu_2)/2$.}
    \label{fig:reroll}
\end{figure}
}\end{example}

\begin{example}\label{ex:cherrypicking}\normalfont{
Another common way for science to fail to be replicated is due to the unjustifiable removal of outliers.  Doing so can significantly reduce the standard errors and may appear to be justifiable---after all, one should surely remove data points that are corrupted by non-statistical errors.  To illustrate the effects of such cherry-picking, we simulate data that have ``outliers" removed using Tukey's fences criterion for outliers\footnote{In practice, the ``fences" are chosen in a data-driven manner. However, for the purposes of this simulation study, we use the true values based on the data-generating distribution so that the i.i.d. assumption holds.} with $k=1$. \Cref{fig:cherrypick} demonstrates the effects of unwarranted removal of outliers on the validity of confidence sets for data from the triangular distribution (with support $[0, 2]$ and a peak at $1$) and from $\operatorname{Beta}(5, 2)$. As usual, our proposed confidence sets (with learning rates chosen via nonparametric bootstrap) maintain at least the correct level of coverage, whereas the ``exact" confidence intervals fail to attain the nominal coverage level---and quite drastically so for the triangular-distributed data.
\begin{figure}[t]
    \centering
    \includegraphics[width=0.45\textwidth]{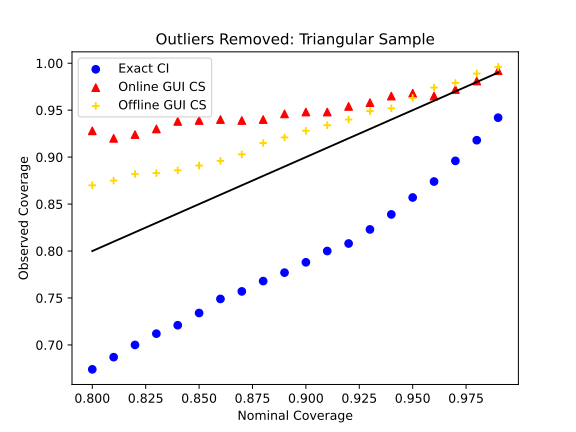}    \includegraphics[width=0.45\textwidth]{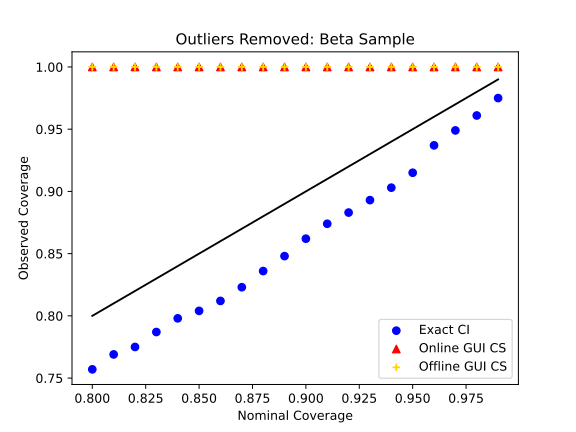}
    \caption{Coverage of the mean of $\operatorname{N}(0, 1)$ and $\operatorname{Beta}(5, 2)$ data when outliers are removed via the Tukey criterion ($k = 1$).}
    \label{fig:cherrypick}
\end{figure}
}\end{example}

Additional simulation studies can be found in \Cref{apdx:extrasims}:  One is a case where the true $\theta^*$ falls on the boundary of the parameter space; another is a case where the strong central condition fails; yet another is a situation where bootstrap is known to fail. In all cases, the GUe confidence intervals are shown empirically to be valid while existing methods fail to achieve the nominal coverage probability.

\section{Real data examples}\label{sec:realdata}
\subsection{Millikan's electron charge study, revisited}\label{sec:millikan}

The first experiment done to measure the charge on an electron was by \citet{millikan1913}.  About this experiment, \citet{feynman1974} noted the following: 
\begin{quote}
``Millikan measured the charge on an electron by an experiment with falling oil drops and got an answer which we now know not to be quite right... It's interesting to look at the history of measurements of the charge of the electron, after Millikan.  If you plot them as a function of time, you find that one is a little bigger than Millikan's, and the next one's a little bit bigger than that, and the next one's a little bit bigger than that, until finally they settle down to a number which is higher. Why didn't they discover that the new number was higher right away?  It's a thing that scientists are ashamed of—this history—because it's apparent that people did things like this: When they got a number that was too high above Millikan's, they thought something must be wrong—and they would look for and find a reason why something might be wrong.  When they got a number closer to Millikan's value they didn't look so hard.  And so they eliminated the numbers that were too far off...''
\end{quote}
Indeed, the charge of an electron is now known to be exactly $160.2176634\text{ zC}$, whereas Millikan's experiment yielded a point estimate of $159.2 \text{ zC}$ with standard error $0.07 \text{ zC}$. Thus, Millikan's point estimate was roughly 14 standard errors below the true value---in part due to Millikan's cherry-picking of data to artificially exclude data points he deemed to be outliers, using data-dependent versions of the ``fences" from \Cref{ex:cherrypicking}.

Follow-up papers that attempted to calculate the charge of an electron include \citet{wadlund1928} at $159.24 \text{ zC}$, \citet{backlin1929} at $159.88 \text{ zC}$, and \citet{bearden1931} at $160.31 \text{ zC}$. After the estimate of \citet{bearden1931}, the timeline of results reported by \citet{hill2021} suggests that later estimates all tended to fall quite close to the true value of about $160.2 \text{ zC}$.

To see how the GUe-value applies to Millikan's oil drop experiment, we use the nonparametric bootstrap to choose the learning rate for the GUe confidence set. We find that the offline GUe confidence set from Millikan's cherry-picked data consistently covers the true value of the charge of an electron until $\alpha \approx 0.065$, and continues sporadically covering the true value (up to fluctuations due to random sampling in the nonparametric bootstrap) until $\alpha \approx 0.22$. Had the uncertainty in measurement of the charge of the electron been calculated via the GUe-value, chemists might have converged to the correct value faster than the multiple decades it actually took, thanks to not being constrained by the too-narrow confidence interval generated by Millikan's cherry-picked data.

\subsection{Quantile regression of MyAnimeList ratings}
One challenging problem for traditional inference is quantile regression with non-i.i.d.~errors. That is, given a linear model $Y_i = X_i^\top\beta + \varepsilon_i$ where $\varepsilon_i$ are not identically distributed, we wish to estimate the conditional $q$-quantile $\theta^*_q$, which minimizes the risk corresponding to the loss function
\begin{equation*}
    \ell(\theta\altgiven x, y) = (y - x^\top\theta)\cdot \left\{q - \mathbbm{1}(y - x^\top\theta < 0)\right\}.
\end{equation*}
The theory of M-estimation yields that the asymptotic variance of the ERM depends on the conditional density of the response evaluated at the $q$ conditional quantile, which can be quite difficult to estimate---particularly for extreme quantiles (i.e., $q$ near $0$ or $1$). One commonly used approach is the Powell kernel density estimator \citep{powell1991}, though this is quite sensitive to the choice of kernel and bandwidth parameter. To illustrate this, \Cref{fig:maldata} presents 95\% prediction intervals for linear quantile regression with $q = 0.01$ using data from the website MyAnimeList (MAL), which provides user ratings (on a scale of 0 to 10) over time of animated media. As can be seen, the M-estimation based Powell standard errors greatly vary depending on the choice of bandwidth, and the standard choice used by the \texttt{rq} package in R \citep{koenker2024} certainly undercovers $\theta^*_q$, even for the reasonably large sample size of $n = 384$ given here. Consequently, it is evident that the ellipsoidal asymptotic confidence sets for $\theta^*_q$ are untrustworthy in this problem.

\begin{figure}[tbp]
    \centering
    \includegraphics[width=0.45\textwidth]{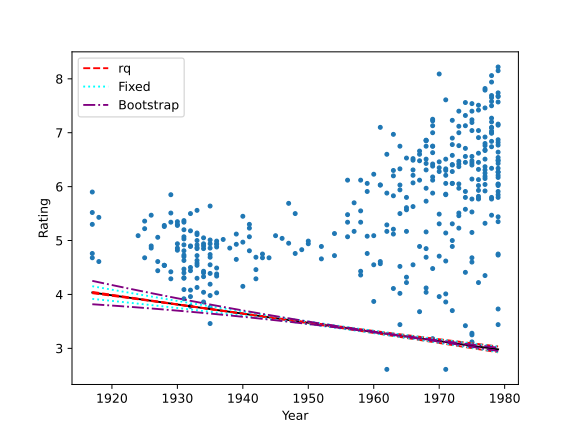}
    \caption{95\% prediction intervals for the conditional $0.01$-quantile of MAL scores before 1980. The standard error of the regression coefficient is estimated using the Powell method with two different choices of bandwidth (firstly as implemented in the \texttt{rq} package in R, and secondly using the fixed bandwidth 10 times that of \texttt{rq}), as well as nonparametric bootstrap.}
    \label{fig:maldata}
\end{figure}

\Cref{fig:contour} shows the contour plot for the 95\% offline GUe confidence set for $\theta^*_q$ after standardizing the covariate; for comparison, we also show the asymptotic 95\% ellipsoidal confidence sets suggested by M-estimation and bootstrap. Notably, the area of the GUe confidence sets is not inordinately larger than the bootstrap and Powell confidence sets. For the up-to-year-1980 data, the non-ellipsoidal shape of the GUe set reveals a directional component to the uncertainty in $\thetahat_q$ that the other large-sample methods are not able to suggest. For example, it indicates that the range of plausible values for the intercept is skewed, with far more values below the point estimate than above.

\begin{figure}[t]
    \centering
    \begin{subfigure}{0.45\textwidth}
    \includegraphics[width = \textwidth]{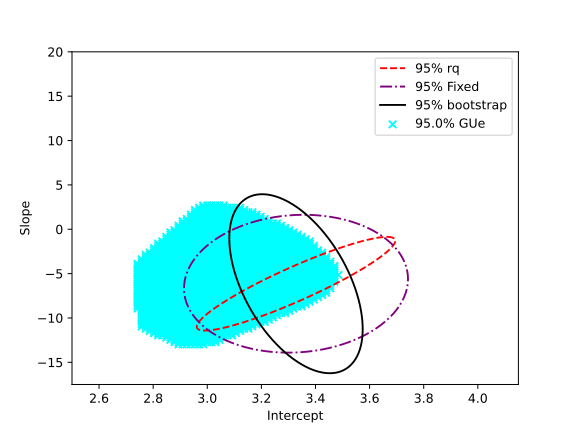}
    \end{subfigure}
    \begin{subfigure}{0.45\textwidth}
    \includegraphics[width = \textwidth]{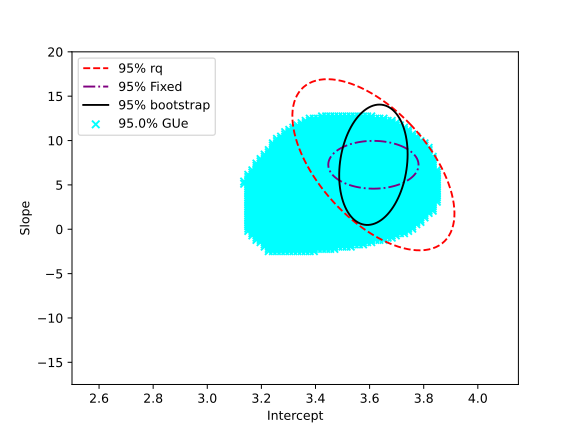}
    \end{subfigure}
    \caption{Contour plots of confidence sets for $\theta^*_q$. On the left are the confidence sets for MAL data before 1980 ($n = 384$); on the right are the confidence sets for data before 2000 ($n = 1999$)}
    \label{fig:contour}
\end{figure}

\section{Conclusion}\label{sec:conclusion}
In this paper we considered a context common in modern statistical learning problems concerned with risk minimization.  For such problems, we have proposed a new {\em generalized universal inference} framework that leverages the theory of e-values and e-processes, and have shown that the corresponding GUe-value tests and confidence sets for the unknown risk minimizer are provably valid in finite samples.  These validity conclusions do not come for free, as one might hope based on the developments in \citet{wasserman2020}, but they follow from a general and relatively mild condition called the strong central condition.  Furthermore, under certain weak consistency conditions, the diameter of the GUe-value confidence sets shrinks at the same rates achieved by the driving ERM, suggesting the finite-sample validity guarantees do not come at the cost of severe inefficiency.  The online GUe-value features an additional anytime-validity property that means the validity claims hold uniformly over all stopping rules used in the data collection process.  In particular, we showed that the method's reliable performance is stable across a variety of common stopping rules believed to contribute to the replication crisis in science.
Furthermore, the practitioner has agency in choosing how conservative they wish to be, as the methodology they use to choose the learning rate for the GUe-value can be influenced by the assumptions they are willing to make regarding the collected data.

The test consistency results in \Cref{thm:power,thm:radius} are related to a more fundamental question concerning the asymptotic growth rate of the proposed GUe-value, akin to the investigations in \citet{grunwald2020b} for the well-specified statistical model setting.  For a given (possibly composite) null hypothesis $H_0: \theta^* \in \Theta_0$, recall that $G_n^{(\omega)}(\Theta_0) = \inf_{\theta \in \Theta_0} G_n^{(\omega)}(\theta)$.  Following Theorem~2 in \citet{dixit2023}, our claim is that, under certain conditions (e.g., the fixed learning rate $\omega$ is sufficiently small), the asymptotic growth rate of our GUe-value is 
\[ \log G_n^{(\omega)}(\Theta_0) = n \times \omega \Bigl\{ \inf_{\theta \in \Theta_0} R(\theta) - \inf_{\theta \not\in \Theta_0} R(\theta) \Bigr\} + o(n), \quad \text{almost surely}. \]
Note that, if the hypothesis is true in the sense that $\Theta_0 \ni \theta^*$, then the GUe-value vanishes as $n \to \infty$, as expected.  Alternatively, if the hypothesis is false in the sense that $\Theta_0 \not\ni \theta^*$, then the GUe-value diverges to $\infty$ as $n \to \infty$, again as expected.  Moreover, the (exponential) rate at which these limits are approached corresponds, e.g., in the latter case, to the degree of separation between $\theta^*$ and $\Theta_0$ determined by the risk function: the further $\theta^*$ is from $\Theta_0$, as measured by $\omega \inf_{\theta \in \Theta_0} \{R(\theta) - R(\theta^*)\}$, the faster the growth rate.  Our further conjecture is that the asymptotic growth rate of the GUe-value above is the ``optimal growth rate for e-values aimed at inference on a risk minimizer,'' but we leave a proper formulation and verification of these claims for follow-up work.  

Future investigations will consider how this theory might extend to the non-i.i.d.~setting to allow for inference on risk minimizers in longitudinal or spatial data, for example. Another important open question is how best to choose the learning rate for the GUe-value (or other e-values that require data-driven tuning), and under what conditions the proposed bootstrapping strategy offers GUe-value confidence sets with provable validity guarantees. The theory we have presented for learning rate selection is quite limited, even for the special case of the $L^2$ loss function, despite how critical the choice of learning rate is in providing finite-sample validity guarantees for the GUe-value; thus, further work in this direction is necessary. Finally, we hope to further investigate the utility of the GUe-value in more modern machine learning models through its connection to the Gibbs posterior and thus PAC-Bayes learning theory.

% Manual newpage inserted to improve layout of sample file - not
% needed in general before appendices/bibliography.
\vskip 0.2in
\bibliography{references.bib}

\begin{thebibliography}{62}
\providecommand{\natexlab}[1]{#1}
\providecommand{\url}[1]{\texttt{#1}}
\expandafter\ifx\csname urlstyle\endcsname\relax
  \providecommand{\doi}[1]{doi: #1}\else
  \providecommand{\doi}{doi: \begingroup \urlstyle{rm}\Url}\fi

\bibitem[Albert and Anderson(1984)]{albert1984}
A.~Albert and J.~A. Anderson.
\newblock On the existence of maximum likelihood estimates in logistic
  regression models.
\newblock \emph{Biometrika}, 71\penalty0 (1), 1984.
\newblock \doi{10.2307/2336390}.

\bibitem[Anthony and Bartlett(2009)]{anthony2009}
M.~Anthony and P.~L. Bartlett.
\newblock \emph{Neural Network Learning: Theoretical Foundations}.
\newblock Cambridge University Press, 40 W. 20 St. New York, NY, United States,
  2009.
\newblock ISBN 9780521118620.

\bibitem[B\"{a}cklin(1929)]{backlin1929}
E.~B\"{a}cklin.
\newblock Eddington's hypothesis and the electronic charge.
\newblock \emph{Nature}, 123, 1929.

\bibitem[Bearden(1931)]{bearden1931}
J.~A. Bearden.
\newblock Absolute wave-lengths of the copper and chromium $k$-series.
\newblock \emph{The Physical Review}, 37:\penalty0 1210--1220, 1931.
\newblock \doi{10.1103/PhysRev.37.1210}.

\bibitem[Bissiri et~al.(2016)Bissiri, Holmes, and Walker]{bissiri2016}
P.~G. Bissiri, C.~Holmes, and S.~G. Walker.
\newblock A general framework for updating belief distributions.
\newblock \emph{Journal of the Royal Statistical Society Series B: Statistical
  Methodology}, 78\penalty0 (5):\penalty0 1103--1130, 2016.
\newblock \doi{10.1111/rssb.12158}.

\bibitem[Boos and Stefanski(2018)]{boos2018}
D.~D. Boos and L.~A. Stefanski.
\newblock \emph{Essential Statistical Inference}, volume 120 of \emph{Springer
  Texts in Statistics}.
\newblock Springer, New York, NY, 2018.

\bibitem[Bousquet and Elisseeff(2002)]{bousquet2002}
O.~Bousquet and A.~Elisseeff.
\newblock Stability and generalization.
\newblock \emph{Journal of Machine Learning Research}, 2:\penalty0 499--526,
  2002.

\bibitem[Cella and Martin(2022)]{cella2022}
L.~Cella and R.~Martin.
\newblock Direct and approximately valid probabilistic inference on a class of
  statistical functionals.
\newblock \emph{International Journal of Approximate Reasoning}, 152:\penalty0
  205--224, 2022.
\newblock \doi{10.1016/j.ijar.2022.09.011}.

\bibitem[{De Blasi} and Walker(2013)]{deBlasi2013}
P.~{De Blasi} and S.~G. Walker.
\newblock {B}ayesian asymptotics with misspecified models.
\newblock \emph{Statistica Sinica}, 23\penalty0 (1):\penalty0 169--187, 2013.

\bibitem[de~Heide et~al.(2020)de~Heide, Kirichenko, Grunwald, and
  Mehta]{heide2020}
R.~de~Heide, A.~Kirichenko, P.~Grunwald, and N.~Mehta.
\newblock Safe-bayesian generalized linear regression.
\newblock In S.~Chiappa and R.~Calandra, editors, \emph{Proceedings of the
  Twenty Third International Conference on Artificial Intelligence and
  Statistics}, volume 108 of \emph{Proceedings of Machine Learning Research},
  pages 2623--2633. PMLR, 26--28 Aug 2020.

\bibitem[Dixit and Martin(2025)]{dixit2023}
V.~Dixit and R.~Martin.
\newblock Anytime valid and asymptotically optimal inference driven by
  predictive recursion.
\newblock \emph{Biometrika}, 112\penalty0 (2):\penalty0 asae066, 2025.

\bibitem[Durrett(2019)]{durrett2019}
R.~Durrett.
\newblock \emph{Probability: Theory and Examples}.
\newblock Cambridge University Press, 5 edition, 2019.

\bibitem[Feynman(1974)]{feynman1974}
R.~P. Feynman.
\newblock The cargo cult science, 1974.
\newblock Commencement address given at the California Institute of Technology.

\bibitem[Gangrade et~al.(2023)Gangrade, Rinaldo, and
  Ramdas]{gangrade2023sequential}
A.~Gangrade, A.~Rinaldo, and A.~Ramdas.
\newblock A sequential test for log-concavity.
\newblock {\tt arXiv:2301.03542}, 2023.

\bibitem[Gr{\"u}nwald and van Ommen(2017)]{grunwald2017}
P.~Gr{\"u}nwald and T.~van Ommen.
\newblock Inconsistency of {B}ayesian inference for misspecified linear models,
  and a proposal for repairing it.
\newblock \emph{Bayesian Analysis}, 12\penalty0 (4):\penalty0 1069--1103, 2017.
\newblock \doi{10.1214/17-BA1085}.

\bibitem[Gr\"{u}nwald et~al.(2024)Gr\"{u}nwald, de~Heide, and
  Koolen]{grunwald2020b}
P.~Gr\"{u}nwald, R.~de~Heide, and W.~M. Koolen.
\newblock Safe testing.
\newblock \emph{Journal of the Royal Statistical Society Series B: Statistical
  Methodology}, 86\penalty0 (5):\penalty0 1091--1128, 2024.
\newblock \doi{10.1093/jrsssb/qkae011}.

\bibitem[Gr\"{u}nwald and Mehta(2020)]{grunwald2020}
P.~D. Gr\"{u}nwald and N.~A. Mehta.
\newblock Fast rates for general unbounded loss functions: from {ERM} to
  generalized {B}ayes.
\newblock \emph{Journal of Machine Learning Research}, 21\penalty0
  (56):\penalty0 1--80, 2020.

\bibitem[Hampel et~al.(2011)Hampel, Ronchetti, Rousseeuw, and
  Stahel]{hampel2011}
F.~R. Hampel, E.~M. Ronchetti, P.~J. Rousseeuw, and W.~A. Stahel.
\newblock \emph{Robust Statistics: The Approach Based on Influence Functions}.
\newblock Wiley Series in Probability and Statistics. John Wiley \& Sons, Inc,
  2011.
\newblock ISBN 978-1-118-15068-9.

\bibitem[Hanneke(2016)]{hanneke2016}
S.~Hanneke.
\newblock The optimal sample complexity of {PAC} learning.
\newblock \emph{Journal of Machine Learning Research}, 17\penalty0
  (38):\penalty0 1--15, 2016.

\bibitem[Hedayat et~al.(2015)Hedayat, Wang, and Xu]{xu.mcid}
S.~Hedayat, J.~Wang, and T.~Xu.
\newblock Minimum clinically important difference in medical studies.
\newblock \emph{Biometrics}, 71:\penalty0 33--41, 2015.

\bibitem[Hill(2021)]{hill2021}
C.~Hill.
\newblock Measurements of the electron charge over time, March 2021.
\newblock URL
  \url{https://scipython.com/blog/measurements-of-the-electron-charge-over-time/}.
\newblock
  \url{https://scipython.com/blog/measurements-of-the-electron-charge-over-time/}.

\bibitem[Hofmans et~al.(2015)Hofmans, Ceulemans, Steinley, and
  Mechelen]{hofmans2015}
J.~Hofmans, E.~Ceulemans, D.~Steinley, and I.~V. Mechelen.
\newblock On the added value of bootstrap analysis for $k$-means clustering.
\newblock \emph{Journal of Classification}, 32:\penalty0 268--284, 2015.
\newblock \doi{10.1007/s00357-015-9178-y}.

\bibitem[Holmes and Walker(2017)]{holmes2017}
C.~C. Holmes and S.~G. Walker.
\newblock Assigning a value to a power likelihood in a general {B}ayesian
  model.
\newblock \emph{Biometrika}, 104\penalty0 (2):\penalty0 497--503, 03 2017.
\newblock ISSN 0006-3444.
\newblock \doi{10.1093/biomet/asx010}.

\bibitem[Howard et~al.(2021)Howard, Ramdas, McAuliffe, and Sekhon]{howard2021}
S.~R. Howard, A.~Ramdas, J.~McAuliffe, and J.~Sekhon.
\newblock Time-uniform, nonparametric, nonasymptotic confidence sequences.
\newblock \emph{The Annals of Statistics}, 49\penalty0 (2):\penalty0
  1055--1080, 2021.
\newblock \doi{10.1214/20-AOS1991}.

\bibitem[Huber(1981)]{huber1981}
P.~J. Huber.
\newblock \emph{Robust Statistics}.
\newblock Wiley Series in Probability and Statistics. John Wiley \& Sons, Inc,
  1981.

\bibitem[Hudson et~al.(2021)Hudson, Carone, and Shojaie]{hudson2021}
A.~Hudson, M.~Carone, and A.~Shojaie.
\newblock Inference on function-valued parameters using a restricted score
  test, 2021.
\newblock URL \url{https://arxiv.org/abs/2105.06646}.
\newblock \texttt{arXiv:2105.06646}.

\bibitem[Kleijn and van~der Vaart(2006)]{kleijn2006}
B.~J.~K. Kleijn and A.~W. van~der Vaart.
\newblock Misspecification in infinite-dimensional {B}ayesian statistics.
\newblock \emph{Annals of Statistics}, 34\penalty0 (2):\penalty0 837--877,
  2006.
\newblock \doi{10.1214/009053606000000029}.

\bibitem[Koenker and Bassett(1978)]{koenker.bassett.1978}
R.~Koenker and G.~Bassett, Jr.
\newblock Regression quantiles.
\newblock \emph{Econometrica}, 46\penalty0 (1):\penalty0 33--50, 1978.
\newblock ISSN 0012-9682,1468-0262.

\bibitem[Koenker et~al.(2024)Koenker, Portnoy, Ng, Zeileis, Grosjean, Moler,
  Saad, Chernozhukov, Fernandez-Val, and Ripley]{koenker2024}
R.~Koenker, S.~Portnoy, P.~T. Ng, B.~M.~A. Zeileis, P.~Grosjean, C.~Moler,
  Y.~Saad, V.~Chernozhukov, I.~Fernandez-Val, and B.~D. Ripley.
\newblock \emph{quantreg: Quantile Regression}, 2024.
\newblock URL
  \url{https://cran.r-project.org/web/packages/quantreg/index.html}.

\bibitem[Lyddon et~al.(2019)Lyddon, Holmes, and Walker]{lyddon2019}
S.~P. Lyddon, C.~C. Holmes, and S.~G. Walker.
\newblock {General Bayesian updating and the loss-likelihood bootstrap}.
\newblock \emph{Biometrika}, 106\penalty0 (2):\penalty0 465--478, 03 2019.
\newblock ISSN 0006-3444.
\newblock \doi{10.1093/biomet/asz006}.

\bibitem[Maronna et~al.(2006)Maronna, Martin, and Yohai]{maronna2006}
R.~A. Maronna, R.~D. Martin, and V.~J. Yohai.
\newblock \emph{Robust Statistics: Theory and Methods}.
\newblock John Wiley \& Sons, Ltd, 1st edition, 2006.

\bibitem[Martin(2024)]{martin.basu}
R.~Martin.
\newblock A possibility-theoretic solution to {B}asu's {B}ayesian--frequentist
  via media.
\newblock \emph{Sankhya A}, 86:\penalty0 43--70, 2024.

\bibitem[Martin and Syring(2022)]{martin2022}
R.~Martin and N.~Syring.
\newblock Direct {G}ibbs posterior inference on risk minimizers: Construction,
  concentration, and calibration.
\newblock In A.~S. {Srinivasa Rao}, G.~A. Young, and C.~Rao, editors,
  \emph{Advancements in Bayesian Methods and Implementation}, volume~47 of
  \emph{Handbook of Statistics}, pages 1--41. Elsevier, 2022.
\newblock \doi{10.1016/bs.host.2022.06.004}.

\bibitem[McCullagh(2000)]{mccullagh2000}
P.~McCullagh.
\newblock Resampling and exchangeable arrays.
\newblock \emph{Bernoulli}, 6\penalty0 (2):\penalty0 285--301, 2000.
\newblock \doi{10.2307/3318577}.

\bibitem[Millikan(1913)]{millikan1913}
R.~A. Millikan.
\newblock On the elementary charge and the {A}vogadro constant.
\newblock \emph{Physical Review}, 2\penalty0 (2):\penalty0 109--143, 1913.
\newblock \doi{10.1103/PhysRev.2.109}.

\bibitem[Nemirovski et~al.(2009)Nemirovski, Juditsky, Lan, and
  Shapiro]{nemirovski2009}
A.~Nemirovski, A.~Juditsky, G.~Lan, and A.~Shapiro.
\newblock Robust stochastic approximation approach to stochastic programming.
\newblock \emph{SIAM Journal on Optimization}, 19\penalty0 (4):\penalty0
  1574--1609, 2009.
\newblock \doi{10.1137/070704277}.

\bibitem[Park et~al.(2023)Park, Balakrishnan, and Wasserman]{park2023}
B.~Park, S.~Balakrishnan, and L.~Wasserman.
\newblock Robust universal inference, 2023.
\newblock \texttt{arXiv:2307.04034}.

\bibitem[Perrotta(2020)]{perrotta2020}
L.~Perrotta.
\newblock Practical calibration of the temperature parameter in {G}ibbs
  posteriors.
\newblock Master's thesis, \'{E}cole Polytechnique F\'ed\'erale de Lausanne,
  2020.

\bibitem[Powell(1991)]{powell1991}
J.~L. Powell.
\newblock Estimation of monotonic regression models under quantile
  restrictions.
\newblock In \emph{Nonparametric and semiparametric methods in econometrics and
  statistics: {P}roceedings of the {F}ifth {I}nternational {S}ymposium in
  {E}conomic {T}heory and {E}conometrics}. Cambride University Press, 1991.

\bibitem[Rakhlin et~al.(2005)Rakhlin, Mukherjee, and Poggio]{rakhlin2005}
A.~Rakhlin, S.~Mukherjee, and T.~Poggio.
\newblock Stability results in learning theory.
\newblock \emph{Analysis and Applications}, 3\penalty0 (4):\penalty0 397--417,
  2005.

\bibitem[Ramamoorthi et~al.(2015)Ramamoorthi, Sriram, and
  Martin]{ramamoorthi2015}
R.~V. Ramamoorthi, K.~Sriram, and R.~Martin.
\newblock On posterior concentration in misspecified models.
\newblock \emph{Bayesian Analysis}, 10\penalty0 (4):\penalty0 759--789, 2015.
\newblock \doi{10.1214/15-BA941}.

\bibitem[Ramdas and Wang(2025)]{ramdas2025}
A.~Ramdas and R.~Wang.
\newblock Hypothesis testing with e-values, May 2025.
\newblock \texttt{arXiv:2410.23614}.

\bibitem[Ramdas et~al.(2023)Ramdas, Gr\"{u}nwald, Vovk, and Shafer]{ramdas2023}
A.~Ramdas, P.~Gr\"{u}nwald, V.~Vovk, and G.~Shafer.
\newblock Game-theoretic statistics and safe anytime-valid inference.
\newblock \emph{Statistical Science}, 38\penalty0 (4):\penalty0 576--601, 2023.

\bibitem[Ruf et~al.(2023)Ruf, Larsson, Koolen, and Ramdas]{ruf2023}
J.~Ruf, M.~Larsson, W.~M. Koolen, and A.~Ramdas.
\newblock A composite generalization of {V}ille’s martingale theorem using
  e-processes.
\newblock \emph{Electronic Journal of Probability}, 28:\penalty0 1--21, 2023.
\newblock \doi{10.1214/23-EJP1019}.

\bibitem[Shafer et~al.(2011)Shafer, Shen, Vereshchagin, and Vovk]{shafer2011}
G.~Shafer, A.~Shen, N.~Vereshchagin, and V.~Vovk.
\newblock Test martingales, bayes factors and p-values.
\newblock \emph{Statistical Science}, 26\penalty0 (1):\penalty0 84--101, 2011.

\bibitem[Shalev-Shwartz et~al.(2010)Shalev-Shwartz, Shamir, Srebro, and
  Sridharan]{shalevschwartz2010}
S.~Shalev-Shwartz, O.~Shamir, N.~Srebro, and K.~Sridharan.
\newblock Learnability, stability and uniform convergence.
\newblock \emph{Journal of Machine Learning Research}, 11\penalty0
  (90):\penalty0 2635--2670, 2010.

\bibitem[Syring and Martin(2017)]{syring.martin.mcid}
N.~Syring and R.~Martin.
\newblock Gibbs posterior inference on the minimum clinically important
  difference.
\newblock \emph{Journal of Statistical Planning and Inference}, 187:\penalty0
  67--77, 2017.
\newblock ISSN 0378-3758.

\bibitem[Syring and Martin(2019)]{syring2019}
N.~Syring and R.~Martin.
\newblock Calibrating general posterior credible regions.
\newblock \emph{Biometrika}, 106\penalty0 (2):\penalty0 479--486, 2019.
\newblock \doi{10.1093/biomet/asy054}.

\bibitem[Syring and Martin(2023)]{syring2023}
N.~Syring and R.~Martin.
\newblock Gibbs posterior concentration rates under sub-exponential type
  losses.
\newblock \emph{Bernoulli}, 29\penalty0 (2):\penalty0 1080--1108, 2023.
\newblock \doi{10.3150/22-BEJ1491}.

\bibitem[van Erven et~al.(2015)van Erven, Gr\"{u}nwald, Mehta, Reid, and
  Williamson]{vanErvan2015}
T.~van Erven, P.~D. Gr\"{u}nwald, N.~A. Mehta, M.~D. Reid, and R.~C.
  Williamson.
\newblock Fast rates in statistical and online learning.
\newblock \emph{Journal of Machine Learning Research}, 16\penalty0
  (54):\penalty0 1793--1861, 2015.

\bibitem[Vovk and Wang(2021)]{vovk2021}
V.~Vovk and R.~Wang.
\newblock E-values: Calibration, combination, and applications.
\newblock \emph{The Annals of Statistics}, 49\penalty0 (3):\penalty0
  1736--1754, 2021.
\newblock \doi{10.1214/20-AOS2020}.

\bibitem[Wadlund(1928)]{wadlund1928}
A.~P.~R. Wadlund.
\newblock Absolute x-ray wave-length measurements.
\newblock \emph{The Physical Review}, 14\penalty0 (7):\penalty0 588--591, 1928.
\newblock \doi{10.1103/PhysRev.32.841}.

\bibitem[Wald(1945)]{wald1945}
A.~Wald.
\newblock Sequential tests of statistical hypotheses.
\newblock \emph{Annals of Mathematical Statistics}, 16\penalty0 (2):\penalty0
  117--186, 1945.
\newblock \doi{10.1214/aoms/1177731118}.

\bibitem[Wald(1947)]{wald.sequential}
A.~Wald.
\newblock \emph{Sequential {A}nalysis}.
\newblock John Wiley \& Sons, Inc., New York, 1947.

\bibitem[Wang and Ramdas(2023)]{wang2023}
H.~Wang and A.~Ramdas.
\newblock Catoni-style confidence sequences for heavy-tailed mean estimation.
\newblock \emph{Stochastic Processes and their Applications}, 163:\penalty0
  168--202, 2023.
\newblock ISSN 0304-4149.
\newblock \doi{10.1016/j.spa.2023.05.007}.

\bibitem[Wang and Ramdas(2022)]{wang2022}
R.~Wang and A.~Ramdas.
\newblock False discovery rate control with e-values.
\newblock \emph{Journal of the Royal Statistical Society Series B: Statistical
  Methodology}, 84\penalty0 (3):\penalty0 822--852, 01 2022.
\newblock ISSN 1369-7412.
\newblock \doi{10.1111/rssb.12489}.
\newblock URL \url{https://doi.org/10.1111/rssb.12489}.

\bibitem[Wasserman et~al.(2020)Wasserman, Ramdas, and
  Balakrishnan]{wasserman2020}
L.~Wasserman, A.~Ramdas, and S.~Balakrishnan.
\newblock Universal inference.
\newblock \emph{Proceedings of the National Academy of Sciences}, 117\penalty0
  (29):\penalty0 16880--16890, 2020.

\bibitem[Waudby-Smith and Ramdas(2024)]{waudbysmith2023}
I.~Waudby-Smith and A.~Ramdas.
\newblock Estimating means of bounded random variables by betting.
\newblock \emph{Journal of the Royal Statistical Society Series B: Statistical
  Methodology}, 86\penalty0 (1):\penalty0 1--27, 2024.
\newblock \doi{10.1093/jrsssb/qkad009}.

\bibitem[Wu and Martin(2023)]{wu2023}
P.-S. Wu and R.~Martin.
\newblock A comparison of learning rate selection methods in generalized
  {B}ayesian inference.
\newblock \emph{Bayesian Analysis}, 18\penalty0 (1):\penalty0 105--132, 2023.
\newblock \doi{10.1214/21-BA1302}.

\bibitem[Xu et~al.(2021)Xu, Wang, and Ramdas]{xu2021}
Z.~Xu, R.~Wang, and A.~Ramdas.
\newblock A unified framework for bandit multiple testing.
\newblock In M.~Ranzato, A.~Beygelzimer, Y.~Dauphin, P.~Liang, and J.~W.
  Vaughan, editors, \emph{Advances in Neural Information Processing Systems},
  volume~34, pages 16833--16845. Curran Associates, Inc., 2021.
\newblock URL
  \url{https://proceedings.neurips.cc/paper_files/paper/2021/file/8c460674cd61bf189e62b4da4bd9d7c1-Paper.pdf}.

\bibitem[Zhang(2006)]{zhang2006}
T.~Zhang.
\newblock Information-theoretic upper and lower bounds for statistical
  estimation.
\newblock \emph{IEEE Transactions on Information Theory}, 52\penalty0
  (4):\penalty0 1307--1321, 2006.
\newblock \doi{10.1109/TIT.2005.864439}.

\bibitem[Zhang(2023)]{zhang2023}
T.~Zhang.
\newblock \emph{Mathematical Analysis of Machine Learning Algorithms}.
\newblock Cambridge University Press, 2023.
\newblock ISBN 9781009093057.
\newblock \doi{10.1017/9781009093057}.

\end{thebibliography}

\newpage

\appendix

\crefalias{section}{appendix}
\renewcommand{\thesubsubsection}{Part \Roman{subsubsection}}

\section{Technical remarks}
\label{app:remarks}

\begin{remark}\label{rem:sccsufficient}\normalfont{
As discussed in detail in \citet{vanErvan2015}, the strong central condition holds in a number of practically relevant cases; see, also, \citet{grunwald2020}.  First, if the learning problem is determined by a well-specified statistical model, as in \citet{wasserman2020} and many other papers, where the loss $\ell$ is the negative log-likelihood, then it follows from H\"older's inequality that the strong central condition holds with $\bar\omega=1$.  Even if the statistical model is incorrectly specified, under certain convexity conditions \citep[e.g.,][]{kleijn2006, deBlasi2013, ramamoorthi2015}, one can often demonstrate the strong central condition for some $\bar\omega < 1$; see, e.g., \citet{heide2020} for an application to misspecified generalized linear models.  Outside the context of a posited statistical model, the strong central condition holds for any bounded loss when the parameter space is convex.  This includes the typical classification problems based on zero-one loss, as well as variants that arise in, e.g., inference on the minimum clinically important difference \citep{xu.mcid, syring.martin.mcid}. For unbounded loss functions, such as the $L^p$ losses, further restrictions on the data-generating process are required in order for the strong central condition to hold.  For instance, Example~11 of \citet{grunwald2020} notes that the strong central condition cannot hold for the $L^2$ loss without subexponential tail decay on the data-generating process, or some other combination of convexity, boundedness, etc.; see, also, \citet[][Example 4.20]{vanErvan2015}. 
Nevertheless, subexponential tail assumptions are common in the literature, so there are many practical applications in which the strong central condition can be verified for $L^p$ losses. 
}\end{remark}

\begin{remark}\label{rem:infnotexist}\normalfont{
Note that although confidence sets only make sense when the risk minimizer $\theta^*$ exists, \Cref{thm:power,thm:radius} 
apply even if $\inf_\vartheta R(\vartheta)$ is never attained.  Two instances where the infimum risk fails to be attained include models where the parameter space is not compact (such as when $\theta$ represents a variance component that lies in $(0, \infty)$) and those that use risk functions that are non-coercive (such as the cross-entropy loss). Indeed, the most common example where the infimum fails to be attained is when $\theta$ denotes the parameter in logistic regression and the population is separated---i.e., for the population $P$, where $P \subseteq \R^p\times\{0, 1\}$, there exists $\beta\in\R^p$ such that for any $(x, y)\in P$, we have that $\beta^\top x > 0$ implies $y = 1$ and $\beta^\top x < 0$ implies $y = 0$---as separation forces at least one component of $\theta^*$ to be infinite \citep[][]{albert1984}.  
Even in such cases, the theorems guarantee that the GUe-value grows large on all of $\Theta$.  Consequently, the corresponding confidence sets shrink to the empty set as more data are collected. This may indicate to the user that their statistical learning problem is ill-posed, if they were not aware of this already.
}\end{remark}

\begin{remark}\label{rem:safebayes}\normalfont{
As discussed in the main text, in addition to the GPC algorithm of \citet{syring2019} that we have adopted here, there are a number of strategies available for choosing the learning rate in the construction of a Gibbs or generalized Bayes posterior distribution, including those found in \citet{bissiri2016}, \citet{holmes2017}, \citet{lyddon2019}, \citet{grunwald2017}, \citet{perrotta2020}.  It was our initial conjecture that the GUe-value and its properties would not be particularly sensitive to the choice of algorithm used to choose the learning rate.  More specifically, we expected that the sequential aspect of Gr\"unwald's SafeBayes strategy made it particularly well-suited to this application, perhaps even better suited here than in the non-sequential applications of generalized Bayes.  So, to our surprise, there was a difference in the GUe-value across different learning rate strategies. And even more surprising was that, despite seeming well-suited for this application, the GUe-value with learning rate chosen by SafeBayes failed to maintain validity. \Cref{fig:rsafevsgpc} illustrates the coverage of the GUe-value when using learning rates chosen by SafeBayes and by GPC calibrated to $\alpha = 0.10$; the data was generated from the logistic model $(Y \mid X) \sim \operatorname{Bernoulli}(\operatorname{expit}(1\cdot X - 1))$ with $X \sim N(0, 1)$, and the loss function used was the Savage loss. It is clear that while the GPC-based approach is clearly approximately calibrated, the coverage of SafeBayes deteriorates as the sample size grows. It is an interesting open question why GPC seems to work particularly well here and if there are other strategies that are even better suited to GUe-values than GPC.  
\begin{figure}[t]
    \centering
    \includegraphics[width=0.5\linewidth]{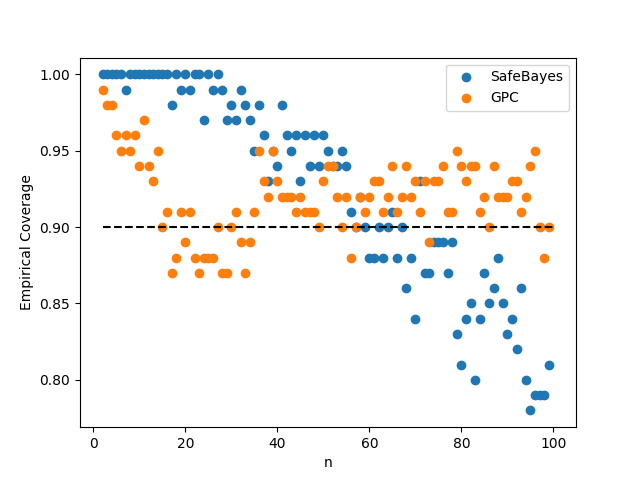}
    \caption{Observed coverage of GUe confidence sets when using SafeBayes and GPC to select learning rates on on i.i.d. logistic data of size $n$, using the Savage loss and $\alpha = 0.10$.}
    \label{fig:rsafevsgpc}
\end{figure}
}\end{remark}

\section{Proofs}\label{sec:appendix}
\subsection{Proof of \Cref{lem:eprocess}}
This essentially follows from Lemma 3 of \citet{wang2023}, but we give the proof here for completeness. We first show that $E_n := G_{n, \text{on}}(\theta^*)$ is a non-negative supermartingale. For convenience of notation, define $\Delta_i := \ell(\thetahat_{i-1}\altgiven Z_i) - \ell(\theta^*\altgiven Z_i)$, for $i=1,2,\ldots$, where, again, $\thetahat_0$ is a fixed constant.  Then 
\begin{align*}
\E(E_n \mid Z^{n-1}) & = \E\left\{\exp(-\sum_{i=1}^n \omega\Delta_i) \bigmid Z^{n-1}\right\} \\
& = \E\left\{\exp(-\sum_{i=1}^{n-1} \omega\Delta_i) \cdot \exp(-\omega \Delta_n) \bigmid Z^{n-1}\right\} \\
& = E_{n-1} \cdot  \E\left\{\exp(-\omega \Delta_n) \mid Z^{n-1}\right\},
\end{align*}
where the last equality follows because $\sum_{i=1}^{n-1} \Delta_i$ is a measurable function of $Z^{n-1}$.  Since $Z_n$ and $Z^{n-1}$ are independent and $\thetahat_{n-1}$ is a measurable function of $Z^{n-1}$, the latter conditional expectation in the above display can be re-expressed as 
\[ \E \exp\bigl[-\omega\{ \ell(\vartheta; Z) - \ell(\theta^*; Z) \} \bigr], \quad \text{for some fixed $\vartheta \in \Theta$ and $\omega\in[0, \bar{\omega})$}, \]
and is bounded by 1, by the strong central condition; thus, $E_n = G_{n,\text{on}}(\theta^*)$ is a non-negative supermartingale.  Finally, since  
\[ \E(E_1) = \E \exp[-\omega \{ \ell(\thetahat_0; Z_1) - \ell(\theta^*; Z_1) \}] \leq 1, \]
again by the strong central condition, it follows by a variant of the optional stopping theorem \citep[e.g.,][Theorem 4.8.4]{durrett2019} that $E_n = G_{n,\text{on}}(\theta^*)$ is an e-process.

\subsection{Proof of \Cref{lem:evalue}}
Again, for convenience, define $\Delta_i := \ell(\thetahat_{S_1} \altgiven Z_i) - \ell(\theta^*\altgiven Z_i)$, for $i=1,2,\ldots, n_2$ where each $Z_i \in S_2$. Since $\thetahat_{S_1}$ is a measurable function of $S_1$, the strong central condition implies that $\E\{\exp(-\omega \Delta_i) \mid S_1\} \leq 1$ for each $i=1,2,\ldots$. We hence have that
\begin{equation*}
    \E\{G_{S,\text{off}}(\theta^*) \mid S_1\} = \E\left\{\exp(-\omega \sum_{i=1}^{n_2} \Delta_i) \bigmid S_1\right\} 
    = \prod_{i=1}^{n_2} \E\left\{\exp(-\omega\Delta_i) \bigmid S_1\right\} \leq 1
\end{equation*}
since the $\Delta_i$ are independent given $S_1$.  The law of iterated expectations gives 
\begin{equation*}
    \E\{G_{S,\text{off}}(\theta^*)\} = \E\E\{G_{S,\text{off}}(\theta^*) \mid S_1\} \leq 1,
\end{equation*}
and so the offline GUe-value is indeed an e-value.

\subsection{Proof of \Cref{thm:evalue}}
Since $\Theta_0$ contains $\theta^*$, it follows that $G_n(\Theta_0) \leq G_n(\theta^*)$.  Then Markov's inequality and \Cref{lem:evalue} gives
\[ \Pr\{G_n(\Theta_0) \geq \alpha^{-1}\} \leq \Pr\{ G_n(\theta^*) \geq \alpha^{-1} \} \leq \alpha \E\{ G_n(\theta^*) \} \leq \alpha, \]
which proves the first claim.  The coverage probability claim follows since $C_\alpha(Z^n) \not\ni \theta^*$ if and only if $G_n(\theta^*) \geq \alpha^{-1}$, and the latter event has probability at most $\alpha$ as just shown.  The final two claims follow from the same arguments given above, thanks to the fact that $G_n(\theta^*)$ is an e-process, as shown in \Cref{lem:eprocess}. 

\subsection{Proof of \Cref{thm:power}}\label{prf:thmpower}

The following lemma is of use in the proofs of \Cref{thm:power,thm:radius}. 
\begin{lemma}\label{lem:onlinedomoffline}
Let $\thetahat_n$ be an $(\varepsilon, \delta)$-AERM for all $n$. Further, suppose that there exists $\overline{\Omega}$ such that $0 \leq \omegahat_n \leq \overline{\Omega}$ for all $n$. Then for any $n$, we have that
\begin{equation*}
    \sum_{i=1}^n \omegahat_{i-1}\ell(\thetahat_i\altgiven Z_i) \leq \overline{\Omega}\left\{H_n^{(\varepsilon)}\delta + \sum_{i=1}^n \ell(\thetahat_n\altgiven Z_i)\right\}
\end{equation*}
where $H_n^{(m)} := \sum_{k=1}^n k^{-m}$ denotes the generalized harmonic number of order $m$.
\end{lemma}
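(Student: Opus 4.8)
The plan is to first strip off the learning rates and then control the remaining sum of single-observation losses by a telescoping argument driven by the AERM optimality at each stage. Since the loss is nonnegative and $0 \le \omegahat_{i-1} \le \overline{\Omega}$ for every $i$, we immediately have $\sum_{i=1}^n \omegahat_{i-1}\ell(\thetahat_i \altgiven Z_i) \le \overline{\Omega}\sum_{i=1}^n \ell(\thetahat_i \altgiven Z_i)$, so it suffices to prove $\sum_{i=1}^n \ell(\thetahat_i \altgiven Z_i) \le H_n^{(\varepsilon)}\delta + \sum_{i=1}^n \ell(\thetahat_n \altgiven Z_i)$ and then multiply through by $\overline{\Omega}$.

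To establish this intermediate bound, I would write $F_i(\theta) := \sum_{j=1}^i \ell(\theta \altgiven Z_j) = i\,\rhat_i(\theta)$ for the cumulative loss, with the convention $F_0 \equiv 0$. The starting observation is that the $i$th single-observation term is an increment of this cumulative loss evaluated at the \emph{fixed} estimator $\thetahat_i$, namely $\ell(\thetahat_i \altgiven Z_i) = F_i(\thetahat_i) - F_{i-1}(\thetahat_i)$. I would then replace $F_{i-1}(\thetahat_i)$ by $F_{i-1}(\thetahat_{i-1})$ at a controlled cost: because $\thetahat_{i-1}$ is an $(\varepsilon,\delta)$-AERM on $Z^{i-1}$, its cumulative loss is nearly optimal, so $F_{i-1}(\thetahat_{i-1}) \le F_{i-1}(\thetahat_i) + \delta/(i-1)^{\varepsilon}$, which rearranges to $F_{i-1}(\thetahat_i) \ge F_{i-1}(\thetahat_{i-1}) - \delta/(i-1)^{\varepsilon}$ for $i \ge 2$. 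This yields the pointwise bound $\ell(\thetahat_i \altgiven Z_i) \le \{F_i(\thetahat_i) - F_{i-1}(\thetahat_{i-1})\} + \delta/(i-1)^{\varepsilon}$ for each $i \ge 2$, while the $i=1$ term satisfies $\ell(\thetahat_1 \altgiven Z_1) = F_1(\thetahat_1) - F_0(\thetahat_0)$ exactly, with no correction since $F_0 \equiv 0$.

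Summing these inequalities over $i = 1, \ldots, n$ collapses the main term by telescoping, $\sum_{i=1}^n \{F_i(\thetahat_i) - F_{i-1}(\thetahat_{i-1})\} = F_n(\thetahat_n) = \sum_{i=1}^n \ell(\thetahat_n \altgiven Z_i)$, while the accumulated corrections total $\delta\sum_{i=2}^n (i-1)^{-\varepsilon} = \delta H_{n-1}^{(\varepsilon)} \le \delta H_n^{(\varepsilon)}$. Combining these gives the intermediate bound exactly, and multiplying by $\overline{\Omega}$ completes the proof.

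The main subtlety to get right is the direction and indexing of the AERM inequality in the telescoping step: the optimality guarantee is available at sample size $i-1$ (for $\thetahat_{i-1}$), not at size $i$, so the error term that appears is $\delta/(i-1)^{\varepsilon}$ rather than $\delta/i^{\varepsilon}$; this is precisely why the corrections sum to $H_{n-1}^{(\varepsilon)}$ before being dominated by $H_n^{(\varepsilon)}$. One should also handle the $i=1$ boundary term separately, where the convention $F_0 \equiv 0$ lets the telescoping close cleanly without invoking an ill-defined $\delta/0^{\varepsilon}$ correction.
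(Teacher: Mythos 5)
Your proof is correct and is essentially the paper's own argument: the paper establishes the bound by induction on $n$, and your telescoping sum is just the unrolled form of that induction, with the identical key step (invoking the $(\varepsilon,\delta)$-AERM near-optimality of $\thetahat_{i-1}$ on $Z^{i-1}$, then passing through the infimum to compare against the next estimator's cumulative loss). If anything, your bookkeeping is slightly more careful than the paper's: your correction at stage $i$ is $\delta(i-1)^{-\varepsilon}$, so the corrections sum to $\delta H_{n-1}^{(\varepsilon)} \leq \delta H_n^{(\varepsilon)}$, whereas the paper writes the stage-$n$ correction as $n^{-\varepsilon}\delta$ (a harmless indexing slip that does not affect the stated conclusion).
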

\begin{proof}
We proceed by induction. When $n = 1$, the statement trivially holds. Thus, suppose 
\begin{equation*}
    \sum_{i=1}^{n-1} \omegahat_{i-1}\ell(\thetahat_i\altgiven Z_i) \leq \overline{\Omega}\left\{H_{n-1}^{(\varepsilon)}\overline{\Omega}\delta + \sum_{i=1}^{n-1} \ell(\thetahat_{n-1}\altgiven Z_i)\right\}.
\end{equation*}
Then 
\begin{align*}
    \sum_{i=1}^n \omegahat_{i-1}\ell(\thetahat_i\altgiven Z_i)
    &= \left\{\sum_{i=1}^{n-1} \omegahat_{i-1}\ell(\thetahat_i\altgiven Z_i)\right\} + \omegahat_{n-1}\ell(\thetahat_n\altgiven Z_n) \\
    &\leq \overline{\Omega}\left\{H_{n-1}^{(\varepsilon)}\delta + \sum_{i=1}^{n-1}\ell(\thetahat_{n-1}\altgiven Z_i)\right\} + \overline{\Omega}\ell(\thetahat_n\altgiven Z_n) \\
    &\leq \overline{\Omega}\qty[H_{n-1}^{(\varepsilon)}\delta + n^{-\varepsilon}\delta + \left\{\inf_{\vartheta\in\Theta} \sum_{i=1}^{n-1}\ell(\vartheta \altgiven Z_i)\right\} + \ell(\thetahat_n\altgiven Z_n)] \\
    &\leq \overline{\Omega}\qty[H_n^{(\varepsilon)}\delta + \left\{\sum_{i=1}^{n-1}\ell(\thetahat_{n}\altgiven Z_i)\right\} + \ell(\thetahat_n\altgiven Z_n)] \\
    &= \overline{\Omega}\left\{H_n^{(\varepsilon)}\delta +  \sum_{i=1}^n \ell(\thetahat_n\altgiven Z_i)\right\}
\end{align*}
where the first inequality uses the inductive hypothesis, the second inequality uses the definition of the $(\varepsilon, \delta)$-AERM $\thetahat_{n-1}$, and the third inequality uses the definition of the infimum.
\end{proof}

The proof of \Cref{thm:power} proceeds in two parts---\Cref{part:thm2online} proves the consistency result for the online GUe-value, and \Cref{part:thm2offline} does the same for the offline GUe-value.

\subsubsection{Online GUe}\label{part:thm2online}
For convenience in notation, define $\sigma_n := \frac{1}{n}\sum_{i=1}^n\omegahat_{i-1}\left\{\ell(\thetahat_{i-1}\altgiven Z_i) - \ell(\thetahat_{n}\altgiven Z_i)\right\}$. Further, let $\theta$ be such that $R(\theta) > \inf_{\vartheta}R(\vartheta)$, and define $\Delta := \omegaunder\cdot(R(\theta) - \inf_{\vartheta}R(\vartheta))/3$. We then have that
\begin{align*}
\Pr\left\{\widehat{G}_{n, \text{on}}(\theta) \geq \frac{1}{\alpha} \right\}
    &= \Pr\left\{\sum_{i=1}^n \omegahat_{i-1}\left\{\ell(\thetahat_{i-1}\altgiven Z_i) - \ell(\theta\altgiven Z_i)\right\} \leq \log(\alpha) \right\} \\
    &= \Pr\qty[\sum_{i=1}^n \omegahat_{i-1}\left\{\ell(\thetahat_{i-1}\altgiven Z_i) - \ell(\thetahat_n\altgiven Z_i) + \ell(\thetahat_n\altgiven Z_i) - \ell(\theta\altgiven Z_i)\right\} \leq \log \alpha] \\
    &= \Pr\qty[\sigma_n + \frac{1}{n}\sum_{i=1}^n \omegahat_{i-1}\left\{\ell(\thetahat_n\altgiven Z_i) - \ell(\theta\altgiven Z_i)\right\} \leq \frac{\log\alpha}{n}] \\
    &\geq \underbrace{\Pr\qty(\sigma_n\leq \frac{\Delta}{2} )}_{\text{(A)}} + \underbrace{\Pr\qty[\frac{1}{n}\sum_{i=1}^n \omegahat_{i-1}\left\{\ell(\thetahat_n\altgiven Z_i) - \ell(\theta\altgiven Z_i)\right\} \leq \frac{\log\alpha}{n}  -\frac{\Delta}{2}]}_{\text{(B)}} - 1.
\end{align*}

% \begin{align*}
%     &\Pr\left\{G_{n, \text{on}}^{(\omega)}(\theta) \geq 1/\alpha\right\} \\
%     &= \Pr\left\{\sum_{i=1}^n \ell(\thetahat_{i-1}\altgiven Z_i) - \ell(\theta^*\altgiven Z_i) \leq \log(\alpha)/\omega\right\} \\
%     &= \Pr\left\{\sum_{i=1}^n \ell(\thetahat_{i-1}\altgiven Z_i) - \ell(\thetahat_n\altgiven Z_i) + \ell(\thetahat_n\altgiven Z_i) - \ell(\theta^*\altgiven Z_i) \leq \log(\alpha)/\omega\right\} \\
%     &= \Pr\left\{\sum_{i=1}^n \qty(\ell(\thetahat_{i-1}\altgiven Z_i) - \ell(\thetahat_n\altgiven Z_i)) + n\rhat_n(\thetahat_n) - n\rhat_n(\theta^*) \leq \log(\alpha)/\omega\right\} \\
%     &= \Pr\left\{\frac{1}{n}\sum_{i=1}^n \qty(\ell(\thetahat_{i-1}\altgiven Z_i) - \ell(\thetahat_n\altgiven Z_i)) + \rhat_n(\thetahat_n) - \rhat_n(\theta^*) \leq \log(\alpha)/(n\omega)\right\} \\
%     &\geq \Pr\left\{\frac{1}{n}\sum_{i=1}^n \ell(\thetahat_{i-1}\altgiven Z_i) - \ell(\thetahat_n\altgiven Z_i) \leq \Delta/2 \right\} \\
%         &\qquad\qquad + \Pr\left\{\rhat_n(\thetahat_n) - \rhat_n(\theta^*) \leq \log(\alpha)/(n\omega) - \Delta/2 \right\} - 1
% \end{align*}
% where we choose $\Delta = (R(\theta) - \inf_\vartheta R(\vartheta))/2$, which is positive since we assumed that $R(\theta) > \inf_\vartheta R(\vartheta)$. 
We need to show that both terms (A) and (B) go to $1$ as $n\rightarrow \infty$. For the former, we have by \Cref{lem:onlinedomoffline} that
\begin{equation*}
    \Pr\qty(\sigma_n \leq \frac{\Delta}{2})  \geq \Pr\left[\frac{1}{n}\sum_{i=1}^n \omegahat_{i-1}\left\{\ell(\thetahat_{i-1}\altgiven Z_i) - \ell(\thetahat_i\altgiven Z_i)\right\} \leq \frac{\Delta}{2} -\overline{\Omega}\cdot \frac{H_n^{(\varepsilon)}}{n}\delta\right].
\end{equation*}
% \begin{equation*}
%     \Pr\left\{\frac{1}{n}\sum_{i=1}^n \ell(\thetahat_{i-1}\altgiven Z_i) - \ell(\thetahat_n\altgiven Z_i) \leq \Delta/2 \right\}  \geq \Pr\left\{\frac{1}{n}\sum_{i=1}^n \ell(\thetahat_{i-1}\altgiven Z_i) - \ell(\thetahat_i\altgiven Z_i) \leq \Delta/2 \right\} 
% \end{equation*}
Next, we have by the stability hypothesis that each $\ell(\thetahat_{i-1}\altgiven Z_i) - \ell(\thetahat_{i}\altgiven Z_i) \leq \beta_i$ for some positive scalars $\beta_i$ that satisfy $\lim\limits_{n\rightarrow \infty} \beta_n = 0$. This implies that $\frac{1}{n}\sum \overline{\Omega} \beta_i \rightarrow 0$, so
\begin{equation*}
    \lim_{n\rightarrow\infty} \Pr\left\{\sigma_n \leq \frac{\Delta}{2} \right\}  \geq \lim_{n\rightarrow\infty} \Pr\left\{\frac{1}{n}\sum_{i=1}^n \overline{\Omega} \beta_i \leq \frac{\Delta}{2}-\overline{\Omega}\cdot \frac{H_n^{(\varepsilon)}}{n}\delta\right\} = 1
\end{equation*}
% \begin{equation*}
%     \lim_{n\rightarrow\infty}\Pr\left\{\frac{1}{n}\sum_{i=1}^n \ell(\thetahat_{i-1}\altgiven Z_i) - \ell(\thetahat_n\altgiven Z_i) \leq \Delta/2 \right\}  \geq \lim_{n\rightarrow\infty} \Pr\left\{\frac{1}{n}\sum_{i=1}^n \beta_i \leq \Delta/2\right\} = 1
% \end{equation*}
as desired, since each $\omegahat_{i-1} \leq \overline{\Omega}$ and for large enough $n$ we have that $\Delta/2 - \overline{\Omega}H_n^{(\varepsilon)}\cdot n^{-1}\delta > 0$ by our hypothesis concerning $\delta$.

To show that (B) has limit $1$, let $\omega := \limsup_n \omegahat_n$, and note that
\begin{align*}
&\Pr\qty[\frac{1}{n}\sum_{i=1}^n \omegahat_{i-1}\left\{\ell(\thetahat_n\altgiven Z_i) - \ell(\theta\altgiven Z_i)\right\} \leq \frac{\log\alpha}{n}  -\frac{\Delta}{2}] \\
&=\Pr[\frac{1}{n}\sum_{i=1}^n \omegahat_{i-1}\ell(\thetahat_n\altgiven Z_i) - \omega  R(\thetahat_n) + \omega R(\thetahat_n) - \omegahat_{i-1}\ell(\theta\altgiven Z_i) \leq \frac{\log \alpha}{n} - \frac{\Delta}{2}] \\
&\geq \underbrace{\Pr\left\{\frac{1}{n}\sum_{i=1}^n \omegahat_{i-1}\ell(\thetahat_n\altgiven Z_i) - \omega  R(\thetahat_n) \leq \frac{\Delta}{2}\right\}}_{\text{(C)}} + \underbrace{\Pr\left\{\omega R(\thetahat_n) - \frac{1}{n}\sum_{i=1}^n\omegahat_{i-1}\ell(\theta\altgiven Z_i)\leq \frac{\log\alpha}{n} - \Delta\right\}}_{\text{(D)}} - 1.
\end{align*}
It now suffices to show that each of (C) and (D) also has limit $1$.
To do so for (C), we rewrite
\begin{equation*}
    \frac{1}{n}\sum_{i=1}^n \omegahat_{i-1}\ell(\thetahat_n\altgiven Z_i) - \omega  R(\thetahat_n) = \frac{1}{n}\sum_{i=1}^n (\omegahat_{i-1} - \omega)\ell(\thetahat_n\altgiven Z_i) + \omega\{\rhat_n(\thetahat_n) - R(\thetahat_n)\}.
\end{equation*}
The second term converges to zero in probability as $n\rightarrow\infty$ by the uniform converges of $\rhat_n$ to $R$. For the first term, note that for any $\varepsilon > 0$ and any particular sample $(z_1, z_2, \ldots) \in \mathcal{Z}^\infty$, there exists $N$ such that for all $i > N$, $\omegahat_{i-1} - \omega < \varepsilon/(4\inf_\vartheta R(\vartheta))$. Hence, splitting $\sum_{i=1}^n a_i = \sum_{i=1}^N a_i + \sum_{i=N+1}^n a_i$, we arrive at
\begin{align*}
&\Pr\left\{\frac{1}{n}\sum_{i=1}^n (\omegahat_{i-1} - \omega)\ell(\thetahat_n\altgiven Z_i) < \varepsilon \right\} \geq \Pr\left\{\frac{\overline{\Omega} N}{n}\rhat_N(\thetahat_n) < \frac{\varepsilon}{2}\right\} + \Pr\left\{\frac{\varepsilon}{4 \inf_\vartheta R(\vartheta)}\rhat_n(\thetahat_n) < \frac{\varepsilon}{2}\right\} - 1.
\end{align*}
The second addend has limit 1, since uniform convergence of $\rhat_n$ to $R$ yields that $\rhat_n(\thetahat_n) \ip \inf_\vartheta R(\vartheta)$. For the first addend, we note by the stability hypothesis that $|\rhat_N(\thetahat_n) - \rhat_N(\thetahat_N)| < N\sum_{i=N+1}^n \beta_i$ for some positive scalars $\beta_i$ converging to zero. Hence,
\begin{equation*}
    \Pr\left\{\frac{\overline{\Omega} N}{n}\rhat_N(\thetahat_n) < \frac{\varepsilon}{2}\right\} \geq \Pr\left\{\frac{\overline{\Omega} N}{n}\rhat_N(\thetahat_N) < \frac{\varepsilon}{4}\right\} + \Pr\left\{\overline{\Omega} N^2 \cdot \frac{1}{n}\sum_{i=1}^n \beta_i < \frac{\varepsilon}{4}\right\} - 1.
\end{equation*}
Both of these addends now clearly have limit 1 as $n\rightarrow\infty$, so we are done with (C).

To show that (D) has limit $1$, we see that
% \begin{align*}
% &\Pr\left\{R(\thetahat_n) - \rhat_n(\theta) \leq \frac{\log\alpha}{n\omega} - \Delta\right\} \\
% &=\Pr\left\{R(\thetahat_n) - \inf_\vartheta R(\vartheta) + \inf_\vartheta R(\vartheta) - \rhat_n(\theta) \leq \frac{\log\alpha}{n\omega} - \Delta\right\} \\
% &\geq \Pr\left\{R(\thetahat_n) - \inf_\vartheta R(\vartheta) \leq \frac{\log \alpha}{n\omega} + \Delta\right\}.
% \end{align*}
% Now if $n \geq 2\log(1/\alpha)/(\omega\Delta)$, the above is lower bounded by
% \begin{equation*}
%     \Pr\left\{R(\thetahat_n) - \inf_\vartheta R(\vartheta) \leq \frac{\Delta}{2}\right\},
% \end{equation*}
% which we again know to have limit $1$ as $n\rightarrow\infty$ by uniform convergence.
\begin{align*}
    &\omega R(\thetahat_n) - \frac{1}{n}\sum_{i=1}^n \omegahat_{i-1}\ell(\theta\altgiven Z_i) \\
    &=\omega\left\{R(\thetahat_n) - \inf_{\vartheta}R(\vartheta)\right\} + \omega\left\{\inf_{\vartheta}R(\vartheta) - R(\theta)\right\} + \omega R(\theta) - \frac{1}{n}\sum_{i=1}^n \omegahat_{i-1}\ell(\theta\altgiven Z_i)
\end{align*}
Hence, bounding $\omega$ from below by $\omegaunder$, (D) is lower bounded by
\begin{equation*}
    \Pr\left\{R(\thetahat_n) - \inf_{\vartheta}R(\vartheta) \leq \overline{\Omega}^{-1}\Delta\right\} + \Pr\left\{\omega R(\theta) - \frac{1}{n}\sum_{i=1}^n \omegahat_{i-1}\ell(\theta\altgiven Z_i)\leq \frac{\log\alpha}{n} + \Delta \right\} - 1
\end{equation*}
The former term converges to one by uniform convergence of $\rhat_n$ to $R$; for the latter term, note that the right-hand side of the inequality is positive for all $n > \log(1/\alpha)/\Delta$, and so converges to one by a similar argument to (C).

\subsubsection{Offline Gue}\label{part:thm2offline}

For convenience in notation, define the function $\Phi(\vartheta) := \rhat_{S_2}(\vartheta) - R(\vartheta)$. Further, let $\theta$ be such that $R(\theta) > \inf_{\vartheta}R(\vartheta)$, and define $\Delta := (R(\theta) - \inf_{\vartheta}R(\vartheta))/3$. We then have that
\begin{align*}
    &\Pr\left\{\widehat{G}_{S, \text{off}}(\theta) \geq \frac{1}{\alpha}\right\} \\
    &= \Pr\left\{\rhat_{S_2}(\thetahat_{S_1}) - \rhat_{S_2}(\theta) \leq \frac{\log \alpha}{\omegahat_{S_1} n_2}\right\} \\
    &= \Pr\left\{[\rhat_{S_2}(\thetahat_{S_1}) - R(\thetahat_{S_1})] + [R(\thetahat_{S_1}) - R(\theta)] + [R(\theta)- \rhat_{S_2}(\theta)] \leq \frac{\log \alpha}{\omegahat_{S_1} n_2}\right\} \\
    &\geq \underbrace{\Pr\left\{\Phi(\thetahat_{S_1})\leq \Delta\right\}\vphantom{\left\{\frac{\log\alpha}{\omegahat_{S_1}}\right\}}}_{\text{(A)}} + \underbrace{\Pr\left\{R(\thetahat_{S_1}) - R(\theta) \leq \frac{\log \alpha}{\omegahat_{S_1} n_2} - 2\Delta\right\}}_{\text{(B)}} + \underbrace{\Pr\left\{-\Phi(\theta) \leq \Delta\right\}\vphantom{\left\{\frac{\log\alpha}{\omegahat_{S_1}}\right\}}}_{\text{(C)}} - 2.
\end{align*}
% \begin{align*}
%     &\Pr\left\{G_{S, \text{off}}^{(\omega)}(\theta) \geq 1/\alpha\right\} \\
%     &= \Pr\left\{\rhat_{S_2}(\thetahat_{S_1}) - \rhat_{S_2}(\theta) \leq \frac{\log \alpha}{\omega n_2}\right\} \\
%     &= \Pr\left\{[\rhat_{S_2}(\thetahat_{S_1}) - R(\thetahat_{S_1})] + [R(\thetahat_{S_1}) - R(\theta)] + [R(\theta)- \rhat_{S_2}(\theta)] \leq \frac{\log \alpha}{\omega n_2}\right\} \\
%     &\geq \Pr\left\{\rhat_{S_2}(\thetahat_{S_1}) - R(\thetahat_{S_1}) \leq \Delta \text{ and } R(\thetahat_{S_1}) - R(\theta) \leq \frac{\log \alpha}{\omega n_2} - 2\Delta \text{ and } R(\theta)- \rhat_{S_2}(\theta) \leq \Delta\right\} \\
%     &\geq \Pr\left\{\rhat_{S_2}(\thetahat_{S_1}) - R(\thetahat_{S_1}) \leq \Delta\right\} + \Pr\left\{R(\thetahat_{S_1}) - R(\theta) \leq \frac{\log \alpha}{\omega n_2} - 2\Delta\right\}
%     \\&\qquad\qquad+ \Pr\left\{R(\theta)- \rhat_{S_2}(\theta) \leq \Delta\right\} - 2
% \end{align*}
% where we specifically choose $\Delta = (R(\theta) - \inf_\vartheta R(\vartheta))/3$, which is positive since we assumed $R(\theta) > \inf_\vartheta R(\vartheta)$. We show that each of the addends has limit $1$.

It suffices to show that each of (A) (B), and (C) has limit $1$ as $(n_1, n_2) \rightarrow (\infty, \infty)$. For (A), we have from uniform convergence in probability of $\rhat_S$ to $R$ that for any $\varepsilon > 0$, there exists an $N\in\N$ such that if $n_2\geq N$,
% \begin{equation*}
%     1 - \Pr\left\{\sup_{\vartheta\in\Theta}|\rhat_{S_2}(\vartheta) - R(\vartheta)| \leq \Delta\right\} < \varepsilon.
% \end{equation*}
\begin{equation*}
    1 - \Pr\left\{\sup_{\vartheta\in\Theta}|\Phi(\vartheta)| \leq \Delta\right\} < \varepsilon.
\end{equation*}
We then note that for any $n_1\in\N$, if $\sup_{\vartheta\in\Theta}|\Phi(\vartheta)| \leq \Delta$, it is certainly also the case that $\Phi(\thetahat_{S_1}) \leq \Delta$. Hence, we have for any $\varepsilon > 0$ that there exists an $N\in\N$ such that for any $n_1\in\N$, if $n_2 \geq N$,
\begin{equation*}
    1 - \Pr\left\{\Phi(\thetahat_{S_1}) \leq \Delta\right\} < \varepsilon.
\end{equation*}
That is to say that as $n_2\rightarrow\infty$, $\Pr\left\{\Phi(\thetahat_{S_1}) \leq \Delta\right\}\rightarrow 1$ uniformly in $n_1$. Since this uniform limit does not depend on the value of $n_1$, we have that the double limit  exists and is equal to the single limit:
\begin{equation*}
    \lim_{(n_1, n_2) \rightarrow (\infty, \infty)}\Pr\left\{\Phi(\thetahat_{S_1}) \leq \Delta\right\} = \lim_{n_2 \rightarrow \infty}\Pr\left\{\Phi(\thetahat_{S_1}) \leq \Delta\right\} = 1.
\end{equation*}

We now examine term (B):
\begin{align*}
    &\Pr\left\{R(\thetahat_{S_1}) - R(\theta) \leq \frac{\log \alpha}{\omegahat_{S_1} n_2} - 2\Delta\right\} \\
    &= \Pr\left\{R(\thetahat_{S_1}) - \inf_\vartheta R(\vartheta) \leq \frac{\log \alpha}{\omegahat_{S_1} n_2} - 2\Delta + R(\theta) - \inf_\vartheta R(\vartheta)\right\} \\
    &= \Pr\left\{R(\thetahat_{S_1}) - \inf_\vartheta R(\vartheta) \leq \frac{\log \alpha}{\omegahat_{S_1} n_2} + \Delta\right\}
\end{align*}
where the final equality comes from our choice for $\Delta$. We now show that the double limit of the above expression exists and is equal to 1. To this end, let $\varepsilon > 0$ be arbitrary. Since $R(\thetahat_{S_1}) \ip \inf_\vartheta R(\vartheta)$, we have that there exists $N\in\Z^+$ such that if $n_1 \geq N$,
\begin{equation*}
    \Pr\left\{|R(\thetahat_{S_1}) - \inf_\vartheta R(\vartheta)| \leq \frac{\Delta}{2}\right\} > 1-\varepsilon.
\end{equation*}
Similarly, there exists $M\in\mathbb{Z}^+$ such that if $n_1 \geq M$, $\omegahat_{S_1} > \omegaunder/2$. Thus, if $n_1, n_2 \geq \max(-\frac{4\log \alpha}{\omegaunder \Delta}, N, M)$, we have that
\begin{align*}
    &\Pr\left\{R(\thetahat_{S_1}) - \inf_\vartheta R(\vartheta) \leq \frac{\log \alpha}{\omegahat_{S_1} n_2} + \Delta\right\}
    \\&\geq \Pr\left\{R(\thetahat_{S_1}) - \inf_\vartheta R(\vartheta) \leq -\frac{\log \alpha}{4\omegahat_{S_1} \cdot \log(\alpha)/(\omegaunder \Delta)} + \Delta \right\} \\
    &\geq \Pr\left\{R(\thetahat_{S_1}) - \inf_\vartheta R(\vartheta) \leq \frac{\Delta}{2}\right\} \\
    &\geq \Pr\left\{|R(\thetahat_{S_1}) - \inf_\vartheta R(\vartheta)| \leq \frac{\Delta}{2}\right\} \\
    &> 1- \varepsilon
\end{align*}
and thus our double limit is $1$:
\begin{equation*}
    \lim_{(n_1, n_2) \rightarrow(\infty, \infty)} \Pr\left\{R(\thetahat_{S_1}) - R(\theta) \leq \frac{\log \alpha}{\omega n_2} - 2\Delta\right\} = 1.
\end{equation*}

Finally, we note that (C) has limit $1$ by the law of large numbers, as $\E[\Phi(\theta)] = 0$.

\subsection{Proof of \Cref{thm:radius}}
We again divide the proof of the theorem in three parts, with each part proving the corresponding numbered result in \Cref{thm:radius}.

\subsubsection{Online GUe}
Define $\sigma_n := \frac{1}{n}\sum_{i=1}^n \omegahat_{i-1}\left\{\ell(\thetahat_{i-1}\altgiven Z_i) - \ell(\thetahat_n\altgiven Z_i)\right\}$, and $\Phi_n(\theta) := \frac{1}{n}\sum_{i=1}^n \omegahat_{i-1}\ell(\theta\altgiven Z_i) - \omega R(\thetahat_n)$, where $\omega = \limsup_n \omegahat_n$. Further define $\Delta_n := \omegaunder \cdot (R(\theta_n) - \inf_{\vartheta}R(\vartheta))/3$; note that there exists $c > 0$ such that $\Delta_n \geq c\cdot n^{-\beta}$ due to the definition of $(\theta_n)_{n\in\N}$. Then using the same arguments as in \Cref{part:thm2online} of the proof of Theorem 3, we have that
\begin{align*}
    &\Pr\left\{\widehat{G}_{n, \text{on}}(\theta_n) \geq \frac{1}{\alpha}\right\} \\
    &\geq \Pr(\sigma_n\leq \frac{\Delta_n}{2}) + \Pr\left\{\Phi_n(\thetahat_n) \leq \frac{\Delta_n}{2}\right\} +\Pr\left\{-\Phi_n(\theta_n) \leq \frac{\log\alpha}{n} - \Delta_n\right\} - 2.
\end{align*}
% \begin{align*}
%     &\Pr\left\{G_{n, \text{on}}^{(\omega)}(\theta_n) \geq \frac{1}{\alpha}\right\} \\
%     &\geq \Pr\left\{\frac{1}{n}\sum_{i=1}^n \ell(\thetahat_{i-1}\altgiven Z_i) - \ell(\thetahat_n\altgiven Z_i)\leq \frac{\Delta_n}{2}\right\} + \Pr\left\{\rhat_n(\thetahat_n) - R(\thetahat_n) \leq \frac{\Delta_n}{2}\right\} + \\&\qquad\qquad \Pr\left\{R(\thetahat_n) - \rhat_n(\theta_n) \leq \frac{\log\alpha}{n\omega} - \Delta_n\right\} - 2
% \end{align*}
% where we now choose $\Delta_n = (R(\theta_n) - \inf_{\vartheta}R(\vartheta))/2$.
We must show that each term has limit $1$. That the first addend converges in probability to $1$ follows by essentially the same argument as in the previous theorem: We have by \Cref{lem:onlinedomoffline} that
\begin{equation*}
    \Pr(\sigma_n \leq \frac{\Delta_n}{2})  \geq \Pr[\frac{1}{n}\sum_{i=1}^n\omegahat_{i-1} \left\{\ell(\thetahat_{i-1}\altgiven Z_i) - \ell(\thetahat_i\altgiven Z_i)\right\} \leq \frac{\Delta_n}{2} -\overline{\Omega}\cdot \frac{H_n^{(\varepsilon)}}{n}\delta]
\end{equation*}
and by stability, there exists a sequence $\gamma_n = o(n^{-\beta})$ such that for each $n\in\N$, $\ell(\thetahat_{n-1}\altgiven Z_n) - \ell(\thetahat_n \altgiven Z_n) \leq \gamma_n$. Hence, since $\frac{1}{n}\sum \gamma_i = o(n^{-\beta})$ and $\Delta_n \geq c\cdot n^{-\beta}$, we have that
\begin{equation*}
    \lim_{n\rightarrow\infty}\Pr(\sigma_n \leq \frac{\Delta_n}{2}-\overline{\Omega}\cdot\frac{H_n^{(\varepsilon)}}{n}\delta)  \geq \lim_{n\rightarrow\infty}\Pr\left\{\frac{1}{n}\sum_{i=1}^n \gamma_n \leq \frac{\Delta_n}{2}-\overline{\Omega}\cdot \frac{H_n^{(\varepsilon)}}{n}\delta \right\} = 1
\end{equation*}
as required, since $\frac{\Delta_n}{2}-\overline{\Omega} H_n^{(\varepsilon)}n^{-1}\delta = \Omega(n^{-\beta})$.

For the second addend, we again use essentially the same argument as the previous theorem to lower bound $\Pr\left\{\Phi_n(\thetahat_n) < \frac{\Delta_n}{2} \right\}$ by
\begin{equation*}
\Pr\left\{\frac{\overline{\Omega} N}{n}\rhat_N(\thetahat_N) < \frac{\Delta_n}{8}\right\} + \Pr\left\{\overline{\Omega} N^2 \cdot \frac{1}{n}\sum_{i=1}^n \beta_i < \frac{\Delta_n}{8}\right\} + \Pr\left\{\frac{\Delta_n}{8 \inf_\vartheta R(\vartheta)}\rhat_n(\thetahat_n) < \frac{\Delta_n}{4}\right\} - 2.
\end{equation*}
where as before, $N$ is some large enough integer and the $\beta_i$ are some positive constants from the stability hypothesis. Once again, it is easy to see that each of these terms has limit 1.

To show that the third addend has limit $1$, we can again use similar techniques as in the \hyperref[prf:thmpower]{proof of Theorem 3} to lower bound the addend by
\begin{equation*}
    \Pr\left\{R(\thetahat_n) - \inf_{\vartheta}R(\vartheta) \leq \overline{\Omega}^{-1}\Delta_n\right\} + \Pr\left\{\omega R(\theta_n) - \frac{1}{n}\sum_{i=1}^n \omegahat_{i-1}\ell(\theta_n\altgiven Z_i) \leq \frac{\log \alpha}{n} + \Delta_n\right\} - 1.
\end{equation*}
For the former term, note that if $\sup_{\vartheta}|\Phi(\vartheta)| < \varepsilon n^{-\beta}$ for some $\varepsilon > 0$, then
\begin{equation*}
    R(\thetahat_n) \leq \rhat_n(\thetahat_n) +  \varepsilon n^{-\beta} \leq \rhat_n(\theta) +  \varepsilon n^{-\beta} \leq R(\theta) + 2 \varepsilon n^{-\beta}
\end{equation*}
for any $\theta\in\Theta$. Taking the infimum over $\theta$, we arrive at the implication
\begin{equation*}
    \sup_{\vartheta}|\Phi(\vartheta)| < \varepsilon n^{-\beta} \implies R(\thetahat_n) \leq \inf_\vartheta R(\vartheta) + 2\varepsilon n^{-\beta},
\end{equation*}
and so 
\begin{equation}\label[inequality]{ineq:thm3_2}
   \Pr\left\{R(\thetahat_n) \leq \inf_\vartheta R(\vartheta) + 2\varepsilon n^{-\beta}\right\} \geq \Pr\left\{\sup_{\vartheta}|\Phi(\vartheta)| \leq \varepsilon n^{-\beta}\right\} .
\end{equation}
Since the right hand side of (\ref{ineq:thm3_2}) has limit $1$ by hypothesis, using $\varepsilon = \overline{\Omega}^{-1} \cdot c$ yields the result for the former term. For the latter term, we can again use the same previously demonstrated techniques.

\subsubsection{Offline GUe (Fixed Validation Set)}
For convenience in notation, we define $\Phi(\theta) := \rhat_{S_2}(\theta) - R(\theta)$; we also define $\Delta_{n_2} = [R(\theta_{n_2}) - \inf_\vartheta R(\vartheta)]/3$, so that there exists some $c > 0$ such that $\Delta_{n_2} \geq cn^{-\beta}/3$. Then in the same manner as in \Cref{part:thm2offline} of the \hyperref[prf:thmpower]{proof of Theorem 3}, we have that
\begin{align*}
    &\Pr\left\{\widehat{G}_{S, \text{off}}(\theta_n) \geq 1/\alpha\right\}
    \\&\geq \Pr\left\{\Phi(\thetahat_{S_1}) \leq \Delta_{n_2}\right\}
    + \Pr\left\{R(\thetahat_{S_1}) - R(\theta_n) \leq \frac{\log \alpha}{\omegahat_{S_1} n_2} - 2\Delta_{n_2}\right\} + \Pr\left\{-\Phi(\theta_n) \leq \Delta_{n_2}\right\} - 2.
\end{align*}
As usual, we show that each term limit 1. 

For the first term, we have that
% \begin{align*}
%     \lim_{n_2 \rightarrow \infty}\Pr[\rhat_{S_2}(\thetahat_{S_1}) - R(\thetahat_{S_1}) \leq \Delta_{n_2}]
%     &=  \E\qty[\lim_{n_2\rightarrow\infty} \Pr[\rhat_{S_2}(\thetahat_{S_1}) - R(\thetahat_{S_1}) \leq \Delta_{n_2} \mid S_1]]\\
%     &\geq  \E\qty[\lim_{n_2\rightarrow\infty} \Pr[\rhat_{S_2}(\thetahat_{S_1}) - R(\thetahat_{S_1}) \leq \frac{c}{3n_2^\beta} \mid S_1]]\\
%     &= 1
% \end{align*}
% where the first equality is identical to what was seen in the proof of \Cref{thm:power}, the inequality is since $\Delta_{n_2} \geq cn_2^{-\beta}/3$, and the last equality is since $\sup_{\theta}|\rhat_{S_2}(\theta) - R(\theta)| = o_p(n_2^{-\beta})$.
\begin{equation*}
    \lim_{n_2 \rightarrow \infty}\Pr\left\{\rhat_{S_2}(\thetahat_{S_1}) - R(\thetahat_{S_1}) \leq \Delta_{n_2}\right\} 
    \geq \lim_{n_2 \rightarrow \infty}\Pr\left\{\rhat_{S_2}(\thetahat_{S_1}) - R(\thetahat_{S_1}) \leq \frac{c}{3n_2^\beta}\right\} = 1
\end{equation*}
where the convergence is uniform by essentially the same arguments as in in the \hyperref[prf:thmpower]{proof of Theorem 3}, but we now use the fact that $\sup_{\vartheta}|\rhat_{S}(\vartheta) - R(\vartheta)|$ is $o_p(n^{-\beta})$ rather than simply $o_p(1)$.

For the second term, we note that
\begin{align*}
    \Pr\left\{R(\thetahat_{S_1}) - R(\theta) \leq \frac{\log \alpha}{\omegahat_{S_1} n_2} - 2\Delta_{n_2}\right\}
    &= \Pr\left\{R(\thetahat_{S_1}) - \inf_\vartheta R(\vartheta) \leq \frac{\log \alpha}{\omegahat_{S_1} n_2} + \Delta_{n_2}\right\} \\
    &\geq  \Pr\left\{R(\thetahat_{S_1}) - \inf_\vartheta R(\vartheta) \leq \frac{\log \alpha}{\omegahat_{S_1} n_2} + \frac{c}{3n_2^\beta}\right\}.
\end{align*}
First note that there exists $M\in\Z^+$ such that if $n_1 \geq M$ then $\omegahat_{S_1} \geq \omegaunder/2$. Then, we notice that $2\log(\alpha)/(\omegaunder n_2) + c/(3n_2^\beta) > 0$ if and only if $n_2^{1-\beta} > 6\log(1/\alpha)/(c\omegaunder)$; since $\beta\in(0, 1)$ there exists some $N_2\in\Z^+$ such that the right hand side is positive for all $n_2\geq N_2$. Next, since $R(\thetahat_{S_1}) \ip \inf_\vartheta R(\vartheta)$, we know that there exists $N_1\in\Z^+$ such that for any $\varepsilon > 0$,
\begin{equation*}
    \Pr\left\{R(\thetahat_{S_1}) - \inf_\vartheta R(\vartheta) \leq \frac{2\log \alpha}{\omegaunder N_2} + \frac{c}{3N_2^\beta}\right\} > 1-\varepsilon
\end{equation*}
for all $n_1 \geq N_1$. We thus have that for any $\varepsilon > 0$, if $n_1, n_2 \geq \max(N_1, N_2, M)$, then 
\begin{equation*}
    \Pr\left\{R(\thetahat_{S_1}) - R(\theta) \leq \frac{\log \alpha}{\omegahat_{S_1} n_2} - 2\Delta_{n_2}\right\}  > 1-\varepsilon
\end{equation*}
so the second addend has double limit $1$ again.

For the third addend, we simply apply our uniform convergence in probability at rate $n_2^{-\beta}$ since $\Delta_{n_2} \geq cn_2^{-\beta}/3$. 

\subsubsection{Offline GUe (Growing Validation Set)}
We define $\Phi$ as in the previous part, and we also define $\Delta_n := [R(\theta_n) - \inf_\vartheta R(\vartheta)]/3$ so that there exists $c > 0$ such that $\Delta_n > cn^{-\beta}/3$. Then as in the previous parts,
\begin{align*}
    &\Pr\left\{\widehat{G}_S(\theta_n) \geq 1/\alpha\right\}
    \\&\geq \Pr\left\{\Phi(\thetahat_{S_1}) \leq \Delta_n\right\}
    + \Pr\left\{R(\thetahat_{S_1}) - R(\theta_n) \leq \frac{\log \alpha}{\omegahat_{S_1} n_2} - 2\Delta_n\right\} + \Pr\left\{-\Phi(\theta_n) \leq \Delta_n\right\} - 2.
\end{align*}
% \begin{align*}
%     &\Pr\left\{G_S^{(\omega)}(\theta_n) \geq 1/\alpha\right\}
%     \\&\geq \Pr\left\{\rhat_{S_2}(\thetahat_{S_1}) - R(\thetahat_{S_1}) \leq \Delta_n\right\}
%     + \Pr\left\{R(\thetahat_{S_1}) - R(\theta_n) \leq \frac{\log \alpha}{\omega n_2} - 2\Delta_n\right\} \\
%     &\qquad+ \Pr\left\{R(\theta_n)- \rhat_{S_2}(\theta_n) \leq \Delta_n\right\} - 2.
% \end{align*}
and we again show that each term has limit $1$. 

For the first addend, since $n_1 \lesssim n_2$, there exist $k > 0$ and $N_1 \in \Z^+$ such that if $n_1 \geq N_1$, then $n_1 \leq k\cdot n_2$. Furthermore, we have from uniform convergence of $\rhat_{S_2}$ to $R$ at rate $o_p(n_2^{-\beta})$ that for every $\varepsilon > 0$, there exists $N_2$ such that if $n_2 \geq N_2$,
\begin{equation*}
    1 - \Pr\left\{\sup_{\vartheta\in\Theta}|\Phi(\vartheta)| \leq \frac{c}{3((k+1)n_2)^\beta}\right\} < \varepsilon
\end{equation*}
for some $c> 0$. Similarly to \Cref{part:thm2offline} in the \hyperref[prf:thmpower]{proof of Theorem 3}, we then have that for any $\varepsilon > 0$, there exists $N_2$ such that for any $n_1 \geq N_1$, if $n_2\geq N_2$
\begin{equation}\label{eq:intermed}
    1 - \Pr\left\{\Phi(\thetahat_{S_1}) \leq \frac{c}{3((k+1)n_2)^\beta}\right\} < \varepsilon.
\end{equation}
But when $n_1 \geq N_1$, we have that
\begin{equation*}
    \Delta_n \geq \frac{c}{3n^\beta} = \frac{c}{3(n_1 + n_2)^\beta} \geq \frac{c}{3(k+1)n_2)^\beta}
\end{equation*}
and so \cref{eq:intermed} reduces to
\begin{equation*}
    1 - \Pr\left\{\Phi(\thetahat_{S_1}) \leq \Delta_n\right\} < \varepsilon.
\end{equation*}
We hence have that
\begin{equation*}
    \lim_{n_2\rightarrow \infty}\Pr\left\{\Phi(\thetahat_{S_1}) \leq \Delta_n\right\}  = 1
\end{equation*}
and the limit is uniform in $n_1$, as necessary for the double limit to exist and equal $1$. 
% \begin{align*}
%     \lim_{n_2 \rightarrow \infty}\Pr[\rhat_{S_2}(\thetahat_{S_1}) - R(\thetahat_{S_1}) \leq \Delta_n]
%     &=  \E\qty[\lim_{n_2\rightarrow\infty} \Pr[\rhat_{S_2}(\thetahat_{S_1}) - R(\thetahat_{S_1}) \leq \Delta_n \mid S_1]]\\
%     &\geq  \E\qty[\lim_{n_2\rightarrow\infty} \Pr[\rhat_{S_2}(\thetahat_{S_1}) - R(\thetahat_{S_1}) \leq \frac{c}{3(n_1+n_2)^\beta} \mid S_1]]\\
%     &= 1
% \end{align*}
% where the first equality is identical to what was seen in the proof of \Cref{thm:power}, the inequality is since $\Delta_n \geq cn^{-\beta}/3 = c(n_1+n_2)^{-\beta}/3$, and the last equality is since $\sup_{\theta}|\rhat_{S_2}(\theta) - R(\theta)| = o_p(n_2^{-\beta})$.

For the second addend, we note that
\begin{align*}
    \Pr\left\{R(\thetahat_{S_1}) - R(\theta) \leq \frac{\log \alpha}{\omegahat_{S_1} n_2} - 2\Delta_n\right\}
    &= \Pr\left\{R(\thetahat_{S_1}) - \inf_\vartheta R(\vartheta) \leq \frac{\log \alpha}{\omegahat_{S_1} n_2} + \Delta_n\right\} \\
    &\geq \Pr\left\{R(\thetahat_{S_1}) - \inf_\vartheta R(\vartheta) \leq \frac{\log \alpha}{\omegahat_{S_1} n_2} + \frac{c}{3(n_1 + n_2)^\beta}\right\}.
\end{align*}
Similarly to the first addend, there exist $k>0$ and $N_1\in\Z^+$ such that if $n_1 \geq N_1$, $n_1 \leq k \cdot n_2$ and $\omegahat_{S_1} > \omegaunder/2$. So for all $n_1 \geq N_1$, the above is at least
\begin{equation*}
    \Pr\left\{R(\thetahat_{S_1}) - \inf_\vartheta R(\vartheta) \leq \frac{2\log \alpha}{\omegaunder n_2} + \frac{c}{3((k+1)n_2)^\beta}\right\}.
\end{equation*}
Next, we notice that since $\beta\in(0, 1)$ there exists some $N_2\in\Z^+$ such that the right hand side is positive if $n_2\geq N_2$. Then since $R(\thetahat_{S_1}) \ip \inf_\vartheta R(\vartheta)$, we know that there exists $M\in\Z^+$ such that for any $\varepsilon > 0$,
\begin{equation*}
    \Pr\left\{R(\thetahat_{S_1}) - \inf_\vartheta R(\vartheta) \leq \frac{2\log \alpha}{\omegaunder N_2} + \frac{c}{3(k+1)^\beta N_2^\beta}\right\} > 1-\varepsilon
\end{equation*}
for all $n_1 \geq M$. We thus have that for any $\varepsilon > 0$, if $n_1, n_2 \geq \max(N_1, N_2, M)$, then 
\begin{equation*}
    \Pr\left\{R(\thetahat_{S_1}) - R(\theta) \leq \frac{\log \alpha}{\omegahat_{S_1} n_2} - 2\Delta_n\right\}  > 1-\varepsilon
\end{equation*}
so the second addend has double limit $1$ again.

For the third addend, we simply apply our uniform convergence in probability at rate $n^{-\beta}$ since $\Delta_n \geq cn^{-\beta}/3$. 

\section{Parametric Bootstrap for Learning Rate Calibration}\label{sec:paramboot}
\begin{algorithm}[H]
    \caption{Parametric Bootstrap for Learning Rate Calibration}\label{alg:param}
    \begin{algorithmic}
        \Require $\{\D_\theta\}_{\theta\in\Theta}$, a family of parametric distributions
        \Require $(z_1, \ldots, z_n)$, collected data from $\D_\theta$ for some $\theta$
        \Require $\Omega$, a set of candidate learning rates
        \Require $\alpha$, a significance level to calibrate to
        \Require $N$, the number of bootstrap iterations to do

        \State Compute $\widehat{D}$, the best-fitting distribution for $(z_1, \ldots, z_n)$ from $\{\D_\theta\}_{\theta\in\Theta}$
        \State Compute $\thetahat$, the ERM for $(z_{1}, \ldots, z_{n})$
        \State $\text{coverages}(\omega) \gets 0\textbf{ for all } \omega\in\Omega$
        \For{$\omega\in\Omega$}
            \For{$i$ in $1,\ldots, N$}
                \State Draw $S_B = (z_{b(1)}, \ldots, z_{b(n)}) \sim \widehat{D}^n$
                \If{$G^{(\omega)}_{S_B}(\thetahat) < 1/\alpha$}
                    \State $\text{coverages}(\omega) \gets \text{coverages}(\omega) + 1/N$
                \EndIf
            \EndFor
        \EndFor
        \State \Return $\argmin_{\omega\in\Omega} |\text{coverages}(\omega) - (1-\alpha)|$
    \end{algorithmic}
\end{algorithm}

\section{Results for the $L^2$ loss}

\subsection{Analytical Results for the Learning Rate}\label{sec:l2lranlys}

As mentioned in Section 4 of the main manuscript, closed form-learning rates for the $L^2$ loss function can be derived. For example, we have the following proposition for normally distributed data:
\begin{proposition}\label{prop:normallr}
Suppose $X_1, \ldots, X_{2n} \overset{iid}{\sim} N(\theta^*, \sigma^2)$, and define
\begin{equation*}
    b^{(\omega)}_{\alpha, \sigma^2}(z) := \frac{\log(1/\alpha)}{2\omega\sigma^2 z} + \frac{z}{2}.
\end{equation*}
Then the learning rate $\omega$ for the offline GUe-value that obtains exactly $(1-\alpha)$-level coverage for $\theta^*$ under the $L^2$ loss is given by the solution to the equation
\begin{equation}\label{eq:learningrate}
    \int_0^\infty \int_{b^{(\omega)}_{\alpha, \sigma^2}(z_2)}^\infty \frac{\exp(-\frac{z_1^2 + z_2^2}{2})}{2\pi}\dd{z_1}\dd{z_2} + \int_{-\infty}^0 \int_{-\infty}^{b^{(\omega)}_{\alpha, \sigma^2}(z_2)} \frac{\exp(-\frac{z_1^2 + z_2^2}{2})}{2\pi}\dd{z_1}\dd{z_2} = \alpha
\end{equation}
\end{proposition}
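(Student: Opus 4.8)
The plan is to reduce the proposition to a direct computation with two independent Gaussian sample means. First I would use the translation invariance of the $L^2$ loss and of the normal family to assume without loss of generality that $\theta^* = 0$: the substitution $X_i \mapsto X_i - \theta^*$ leaves every empirical risk \emph{difference} unchanged, hence leaves the event $\{G_{n,\text{off}}(\theta^*) \geq \alpha^{-1}\}$ unchanged in distribution, so the calibrating $\omega$ is the same for all $\theta^*$. I would also record that, with the $2n$ observations split into halves $S_1, S_2$ of equal size $n$ (as the notation $X_1,\ldots,X_{2n}$ indicates), the ERM under squared-error loss is the sample mean, so $\thetahat_{S_1} = \bar X_{S_1}$.

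Next I would expand the empirical risk difference on $S_2$. Writing $U := \bar X_{S_1}$ and $V := \bar X_{S_2}$, the pointwise identity $(X_i - U)^2 - (X_i - 0)^2 = -2U X_i + U^2$, averaged over $i \in S_2$, gives
\[ \rhat_{S_2}(\thetahat_{S_1}) - \rhat_{S_2}(\theta^*) = U^2 - 2UV, \]
so that $G_{n,\text{off}}(\theta^*) = \exp\{-\omega n (U^2 - 2UV)\}$. Since $U$ and $V$ are independent with $U, V \sim N(0,\sigma^2/n)$, I would standardize via $Z_1 := \sqrt{n}\,U/\sigma$ and $Z_2 := \sqrt{n}\,V/\sigma$, which are i.i.d.\ standard normal. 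The key observation is that the factors of $n$ cancel, since $-\omega n (U^2 - 2UV) = -\omega\sigma^2(Z_1^2 - 2 Z_1 Z_2)$; consequently the coverage probability, and hence the calibrated $\omega$, does not depend on $n$ (this is precisely why equal split sizes are needed for the clean form).

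Because $\theta^*$ lies in $C_\alpha$ iff $G_{n,\text{off}}(\theta^*) < \alpha^{-1}$, exact $(1-\alpha)$ coverage is equivalent to $\Pr\{G_{n,\text{off}}(\theta^*) \geq \alpha^{-1}\} = \alpha$. Taking logarithms turns the non-coverage event into the quadratic inequality $Z_1^2 - 2 Z_1 Z_2 \leq -\log(1/\alpha)/(\omega\sigma^2)$, i.e.\ $2 Z_1 Z_2 \geq Z_1^2 + \log(1/\alpha)/(\omega\sigma^2)$. For fixed $Z_1$ this is linear in $Z_2$, and the one point demanding care is the sign-dependent case split when dividing by $2Z_1$: for $Z_1 > 0$ the event reduces to $\{Z_2 \geq b^{(\omega)}_{\alpha,\sigma^2}(Z_1)\}$, while for $Z_1 < 0$ the inequality reverses to $\{Z_2 \leq b^{(\omega)}_{\alpha,\sigma^2}(Z_1)\}$, with the threshold matching the stated $b^{(\omega)}_{\alpha,\sigma^2}$ exactly; the set $\{Z_1 = 0\}$ is null and contributes nothing.

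Finally I would integrate the standard bivariate normal density $(2\pi)^{-1}\exp\{-(z_1^2+z_2^2)/2\}$ over the two regions just identified. This reproduces the left-hand side of the displayed equation after the harmless relabeling of the dummy variables $z_1 \leftrightarrow z_2$ (in the proposition the outer variable, here playing the role of $Z_1$, is written $z_2$), and setting the resulting probability equal to $\alpha$ yields exactly the equation characterizing the calibrated learning rate. No genuine analytic obstacle arises; the only steps requiring attention are the sign-based case split that produces the two-region integral and the cancellation of $n$ that makes the calibration dimension-free in the sample size.
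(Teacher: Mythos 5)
Your proposal is correct and follows essentially the same route as the paper's proof: expand the empirical risk difference for the $L^2$ loss into a quadratic in the two independent sample means, standardize so the factor of $n$ cancels, split on the sign of the training-mean deviation to invert the inequality, and integrate the product of standard normal densities over the two resulting regions (the paper phrases this via the law of total probability with the two conditional events of probability $1/2$, which is the same computation). The only piece of the paper's proof you omit is a preliminary verification that the strong central condition holds for Gaussian data when $0 < \omega \leq 1/(2\sigma^2)$, but that check is not needed to derive the calibration equation itself, so your argument is complete for the stated claim.
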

\begin{proof}
We can first verify that the strong central condition holds for normally distributed data. For $Z \sim N(\theta^*, \sigma^2)$ and the $L^2$ loss, we have that
\begin{align*}
    \E\left\{\exp(-\omega[\ell(\theta\altgiven Z) - \ell(\theta^*\altgiven Z)])\right\} 
    &=  \E\left\{\exp(-\omega[(\theta - Z)^2 - (\theta^* - Z)^2])\right\}\\
    &= \int_{-\infty}^\infty \exp(2\omega(\theta - \theta^*)z - \omega(\theta^{2} - \theta^{*2})) \cdot \frac{\exp(-\frac{(z-\theta^*)^2}{2\sigma^2})}{\sqrt{2\pi\sigma^2}} \dd{z} \\
    &= \exp((\theta-\theta^*)^2(2\sigma^2\omega - 1)\omega)
\end{align*}
which is upper bounded by $1$ for all $\theta\in\R$ if $0 < \omega \leq 1/(2\sigma^2)$.

Towards proving the proposition, let $\overline{X}$ denote the sample mean of $X_1, \ldots, X_n$ and $\thetahat$ denote the sample mean of $X_{n+1}, \ldots, X_{2n}$. Then by expanding the definition of the GUe-value and using the law of total probability, we have that
\begin{align*}
    &\Pr\left\{G_S(\theta^*) \geq 1/\alpha\right\} \\
    &= \Pr\left\{\overline{X}(\thetahat - \theta^*) - \frac{(\thetahat - \theta^*)(\thetahat + \theta^*)}{2} \geq \frac{\log(1/\alpha)}{2n\omega}\right\} \\
    &= \frac{\Pr\left\{\overline{X} \geq \frac{\log(1/\alpha)}{2n\omega(\thetahat - \theta^*)} + \frac{\thetahat + \theta^*}{2} \mid \thetahat > \theta^*\right\}}{2} + \frac{\Pr\left\{\overline{X} \leq \frac{\log(1/\alpha)}{2n\omega(\thetahat - \theta^*)} + \frac{\thetahat + \theta^*}{2} \mid \thetahat \leq \theta^*\right\}}{2} \\
    &= \frac{1}{2}\Pr\left\{Z_1 \geq \frac{\log(1/\alpha)}{2\omega\sigma^2 Z_2} + \frac{Z_2}{2} \mid Z_2 > 0\right\} + \frac{1}{2}\Pr\left\{Z_1 \leq \frac{\log(1/\alpha)}{2\omega\sigma^2 Z_2} + \frac{Z_2}{2} \mid Z_2 \leq 0\right\} \\
    &= \frac{1}{2}\Pr\left\{Z_1 \geq b^{(\omega)}_{\alpha, \sigma^2}(Z_2) \mid Z_2 > 0\right\} + \frac{1}{2}\Pr\left\{Z_1 \leq b^{(\omega)}_{\alpha, \sigma^2}(z_2) \mid Z_2 \leq 0\right\}
\end{align*}
where $Z_1 = \sqrt{n}(\overline{X}-\theta^*)/\sigma$ and $Z_2 = \sqrt{n}(\thetahat -  \theta^*)/\sigma$ are independent standard normal random variables, whence the result follows by substituting integrals of standard normal densities for the probability statements.
\end{proof}
We note that \cref{eq:learningrate} can be solved numerically for $\omega$ quite quickly, so it is a convenient choice for learning rate whenever \Cref{prop:normallr} is applicable. The proof of the proposition illustrates that in order to use \cref{eq:learningrate} for non-normal random variables, we need the sample size to be large enough for sample means to be reasonably approximated as normal and for the sample variance to act as a good estimator for $\sigma^2$. If safety at smaller sample sizes is a concern, however, one could simply solve for $\omega$ from \cref{eq:learningrate} and divide the learning rate by two (for example) to be confident that the learning rate is small enough to be a safe choice.

To ensure safety for non-normal data, we may use the following proposition:
\begin{proposition}
Let $X_1, \ldots, X_{2n}$ be i.i.d. from any distribution with mean $\theta^*$, variance $\sigma^2$, and third absolute moment $\rho$. Let $c_B \approx 0.4748$ be the Berry-Esseen constant and $\Phi$ denote the standard normal CDF, and define
\begin{align*}
    l^{(\omega)}_\alpha(z) &:= \max\qty(\Phi(b^{(\omega)}_{\alpha, \sigma^2}(z)) - \frac{c_B\rho}{\sigma^3\sqrt{n}}, 0) \\
    u^{(\omega)}_\alpha(z) &:= \min\qty(\Phi(b^{(\omega)}_{\alpha, \sigma^2}(z)) + \frac{c_B\rho}{\sigma^3\sqrt{n}}, 1).
\end{align*}
Furthermore, let $l^{(\omega)}_\alpha(z)$ be maximized on $[0, \infty)$ at $z = \beta$ and $u^{(\omega)}_\alpha(z)$ be maximized on $(-\infty, 0]$ at $z=\gamma$. 
Then any $\omega$ that satisfies
\begin{align*}
    \alpha &\geq 1 - \max\qty(\frac{1}{2} - \frac{c_B\rho}{\sigma^3\sqrt{n}}, 0) - \max\qty(1 - \frac{c_B\rho}{\sigma^3\sqrt{n}}, 0) \\
    &\qquad+ \qty[\max(1 - \frac{c_B\rho}{\sigma^3\sqrt{n}}, 0) + \min(\frac{c_B\rho}{\sigma^3\sqrt{n}}, 1)]\cdot \min\qty(\frac{1}{2} + \frac{c_B\rho}{\sigma^3\sqrt{n}}, 1) \\
    &\qquad+ \int_0^\beta l^{(\omega)}_\alpha(z) \dd{l^{(\omega)}_\alpha(z)} + \int_\beta^\infty u^{(\omega)}_\alpha(z) \dd{l^{(\omega)}_\alpha(z)} \\
    &\qquad + \int_{-\infty}^\gamma u^{(\omega)}_\alpha(z) \dd{u^{(\omega)}_\alpha(z)} + \int_\gamma^0 l^{(\omega)}_\alpha(z)\dd{u^{(\omega)}_\alpha(z)} 
\end{align*}
obtains at least $(1-\alpha)$-level coverage for $\theta^*$ under the $L^2$ loss using the offline GUe-value. 
\end{proposition}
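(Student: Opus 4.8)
The plan is to follow the proof of \Cref{prop:normallr} as closely as possible, replacing each exact Gaussian evaluation by a Berry--Esseen bound. Write $b := b^{(\omega)}_{\alpha,\sigma^2}$, $l := l^{(\omega)}_\alpha$, $u := u^{(\omega)}_\alpha$, and abbreviate $\eta := c_B \rho/(\sigma^3\sqrt{n})$. The algebraic reduction in that proof never used normality: expanding the offline GUe-value for the $L^2$ loss and conditioning on the sign of $\thetahat - \theta^*$ still gives
\begin{equation*}
\Pr\{G_S(\theta^*) \geq 1/\alpha\} = \Pr\{Z_1 \geq b(Z_2),\, Z_2 > 0\} + \Pr\{Z_1 \leq b(Z_2),\, Z_2 \leq 0\},
\end{equation*}
where $Z_1 = \sqrt{n}(\overline{X}-\theta^*)/\sigma$ and $Z_2 = \sqrt{n}(\thetahat-\theta^*)/\sigma$ are independent standardized sample means that are now centered with unit variance but not normal. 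The only structural facts available are this centering/scaling and, crucially, that the Berry--Esseen theorem pins each marginal CDF uniformly within $\eta$ of $\Phi$.

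First I would discharge the inner ($Z_1$) probabilities. By independence we may condition on $Z_2 = z$, and Berry--Esseen gives $\Pr\{Z_1 \geq b(z)\} = 1 - F_{Z_1}(b(z)) \leq 1 - l(z)$ and $\Pr\{Z_1 \leq b(z)\} = F_{Z_1}(b(z)) \leq u(z)$, where the clipping to $[0,1]$ is exactly the $\max(\cdot,0)$ and $\min(\cdot,1)$ built into the definitions of $l$ and $u$. Integrating these bounds against the law of $Z_2$ yields
\begin{equation*}
\Pr\{G_S(\theta^*) \geq 1/\alpha\} \leq \int_{(0,\infty)} \{1 - l(z)\}\, dF_{Z_2}(z) + \int_{(-\infty,0]} u(z)\, dF_{Z_2}(z).
\end{equation*}

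The substance of the argument is to bound this right-hand side uniformly over every admissible law of $Z_2$, i.e., over all CDFs $F_{Z_2}$ with $\sup_t |F_{Z_2}(t) - \Phi(t)| \leq \eta$. I would treat each integral as a linear functional of the measure $dF_{Z_2}$ and maximize it by Riemann--Stieltjes integration by parts, which transfers the differential onto $1-l$ and $u$, leaving terms of the form $\int F_{Z_2}\, dl$ and $\int F_{Z_2}\, du$ together with boundary evaluations at $0$ and at $\pm\infty$. On each interval where the integrating function is monotone one then replaces $F_{Z_2}$ by the extreme of the Kolmogorov ball of the appropriate sign; the locations where this selection switches are the split points $\beta$ and $\gamma$ appearing in the statement (the maximizers of $l$ on $[0,\infty)$ and of $u$ on $(-\infty,0]$), which produce the mixed integrals $\int_0^\beta l\,dl + \int_\beta^\infty u\,dl$ and $\int_{-\infty}^\gamma u\,du + \int_\gamma^0 l\,du$. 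The boundary evaluations at $0$, where $\Phi(0)=\tfrac12$, and at $\pm\infty$, after clipping of both $l,u$ and of the admissible CDF values to $[0,1]$, collect precisely into the constants $\max(\tfrac12-\eta,0)$, $\max(1-\eta,0)$, $\min(\eta,1)$, and $\min(\tfrac12+\eta,1)$. Assembling the pieces gives the claimed upper bound on the non-coverage probability, so requiring $\alpha$ to dominate it forces $\Pr\{C_\alpha(Z^{2n}) \not\ni \theta^*\} \leq \alpha$, i.e., coverage at least $1-\alpha$.

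The main obstacle is exactly this worst-case Stieltjes optimization. Because $b$ has an interior extremum, the bounding functions $l$ and $u$ are non-monotone, so the extremal $F_{Z_2}$ must be built piecewise from the two boundary CDFs rather than by a single stochastic-dominance comparison, and the simultaneous clipping of $l,u$ and of $F_{Z_2}$ to $[0,1]$ makes the boundary accounting delicate and easy to get wrong. A secondary subtlety absent from \Cref{prop:normallr} is that a non-normal $Z_2$ may carry an atom at $0$; I would route that atom into the $Z_2 \leq 0$ branch, consistent with the integration limits above, so that it is controlled by $u$ rather than by $1-l$. I expect verifying that the piecewise-extremal $F_{Z_2}$ genuinely maximizes each functional, and that the resulting expression is identical to the stated one, to be the most error-prone part of the write-up.
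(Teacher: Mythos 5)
Your proposal takes essentially the same route as the paper's proof: reuse the algebraic reduction from \Cref{prop:normallr} with a general CDF $F$ in place of $\Phi$, bound the inner probability $F(b(z))$ via Berry--Esseen between $l$ and $u$, integrate by parts in the outer variable, substitute the extremal bounds piecewise on the intervals separated by $\beta$ and $\gamma$, and collect the same four boundary constants. Your ``worst-case optimization over the Kolmogorov ball'' framing is just a restatement of the paper's direct pointwise substitution of $l \leq F \leq u$ according to the sign of $\dd{l}$ and $\dd{u}$, so the two arguments coincide step for step.
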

\begin{proof}
We can follow the proof of \Cref{prop:normallr} up until the point where we assume $Z_1$ and $Z_2$ are standard normal; supposing they instead have CDF $F$, we arrive at
\begin{equation*}
    \Pr\left\{G_S(\theta^*) \geq 1/\alpha\right\} = \int_0^\infty 1 - F(b^{(\omega)}_{\alpha, \sigma^2}(z_2)) \dd{F(z_2)} + \int_{-\infty}^0 F(b^{(\omega)}_{\alpha, \sigma^2}(z_2)) \dd{F(z_2)}.
\end{equation*}
Although we do not know $F$, we do know by the Berry-Esseen inequality that
\begin{equation*}
    \sup_{x\in\R}|F(x) - \Phi(x)| \leq \frac{c_B\rho}{\sigma^3\sqrt{n}}
\end{equation*}
where $\Phi$ denotes the standard normal CDF. For convenience, we suppress all unnecessary parameters so that we denote $b(z) = b^{(\omega)}_{\alpha,\sigma^2}(z)$, $l(z) = l^{(\omega)}_\alpha(z)$, and $u(z) = u^{(\omega)}_\alpha(z)$. Then we can obtain a safe learning rate by repeatedly integrating by parts and applying the Berry-Esseen bounds to upper bound this expression:
\begin{align*}
    &\int_0^\infty 1 - F(b(z)) \dd{F(z)} + \int_{-\infty}^0 F(b(z)) \dd{F(z)} \\
    &=(1 - F(0)) - \int_0^\infty F(b(z)) \dd{F(z)} + \int_{-\infty}^0 F(b(z)) \dd{F(z)} \\
    &\leq 1 - F(0) - \int_0^\infty l(z)\dd{F(z)} + \int_{-\infty}^0 u(z) \dd{F(z)} \\
    &= 1 - F(0) - \qty[l(\infty) - l(0)F(0) - \int_0^\infty F(z) \dd{l(z)}] + \qty[u(0)F(0) + \int_{-\infty}^0 F(z) \dd{u(z)}] \\
    &= 1 - F(0)  - l(\infty) + l(0)F(0) + u(0)F(0) \\
    &\qquad\qquad+ \int_0^\beta F(z) \dd{l(z)} + \int_\beta^\infty F(z) \dd{l(z)} + \int_{-\infty}^\gamma F(z) \dd{u(z)} + \int_\gamma^0 F(z)\dd{u(z)} \\
    &\leq 1 - F(0) - l(\infty) + [l(0) + u(0)] \cdot F(0) \\
    &\qquad\qquad+ \int_0^\beta l(z) \dd{l(z)} + \int_\beta^\infty u(z) \dd{l(z)} + \int_{-\infty}^\gamma u(z) \dd{u(z)} + \int_\gamma^0 l(z)\dd{u(z)} \\
    &\leq 1 - \max\qty(\frac{1}{2} - \frac{c_B\rho}{\sigma^3\sqrt{n}}, 0) - \max\qty(1 - \frac{c_B\rho}{\sigma^3\sqrt{n}}, 0) \\
    &\qquad\qquad+ \qty[\max\qty(1 - \frac{c_B\rho}{\sigma^3\sqrt{n}}, 0) + \min\qty(\frac{c_B\rho}{\sigma^3\sqrt{n}}, 1)]\cdot \min\qty(\frac{1}{2} + \frac{c_B\rho}{\sigma^3\sqrt{n}}, 1) \\
    &\qquad\qquad+ \int_0^\beta l(z) \dd{l(z)} + \int_\beta^\infty u(z) \dd{l(z)} + \int_{-\infty}^\gamma u(z) \dd{u(z)} + \int_\gamma^0 l(z)\dd{u(z)}
\end{align*}
as desired.
\end{proof}
This no longer depends on the distribution of the data other than the second and third moments. Thus, so long as the ratio of these moments can be well-approximated, this proposition yields a provably safe choice of learning rate. We note that it is atypical for there to exist an $\omega$ that obtains exactly $(1-\alpha)$-level coverage from the above proposition---in general, all learning rates satisfying the proposition are more conservative than necessary.

\subsection{Power Results}\label{sec:l2power}
The power results in the main manuscript are fairly weak due to the use of a general loss function. However, when substituting a specific loss function such as the $L^2$ loss, stronger results can be attained. Indeed, GUe confidence sets for the mean are (nearly) asymptotically efficient in the sense that their radius shrinks only slightly more slowly than $n^{-1/2}$.

\begin{proposition}
Let $(\theta_n)_{n\in\mathbb{N}}$ be a sequence in $\Theta$ such that $\theta_n - \theta^* \gtrsim (n/\log n)^{-1/2}$, where the hidden constant in the right-hand side is sufficiently large. If the data-generating distribution has a finite second moment, then under the $L^2$ loss,
\begin{equation*}
    \lim_{n\rightarrow\infty}\Pr\{G_{n,\text{on}}(\theta_n) \geq \alpha^{-1}\} = 1.
\end{equation*}
\end{proposition}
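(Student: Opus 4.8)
The plan is to specialize the online GUe-value to the $L^2$ loss, where each lagged ERM $\thetahat_{i-1}$ is the running sample mean $\overline{Z}_{i-1} := (i-1)^{-1}\sum_{j<i} Z_j$, and reduce the target event to a statement about a single scalar. Writing $\ell(\theta; z) = (\theta - z)^2$, the event $\{G_{n,\text{on}}(\theta_n) \geq \alpha^{-1}\}$ is equivalent to
\[ S_n := \sum_{i=1}^n \{(\theta_n - Z_i)^2 - (\thetahat_{i-1} - Z_i)^2\} \geq \omega^{-1}\log(1/\alpha), \]
so it suffices to show $S_n \to \infty$ almost surely (whence the probability tends to $1$ for every fixed $\omega > 0$; note this needs no constraint on $\omega$, in keeping with \Cref{re:no.scc}). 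First I would split $S_n$ into a ``signal'' and a ``regret'' piece. Using $\sum_{i=1}^n (\theta_n - Z_i)^2 = Q_n + n(\theta_n - \overline{Z}_n)^2$, where $Q_n := \sum_{i=1}^n (Z_i - \overline{Z}_n)^2$ and $\overline{Z}_n$ is the full-sample mean, I obtain
\[ S_n = n(\theta_n - \overline{Z}_n)^2 - (C_n - Q_n), \qquad C_n := \sum_{i=1}^n (\thetahat_{i-1} - Z_i)^2, \]
in which $C_n - Q_n$ is exactly the cumulative regret of the running-mean forecaster relative to the batch minimizer.

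The signal term is straightforward: by the SLLN $\overline{Z}_n \to \theta^*$ a.s., and since $|\overline{Z}_n - \theta^*| = O(\sqrt{n^{-1}\log\log n})$ by the law of the iterated logarithm while $|\theta_n - \theta^*| \geq c_0\sqrt{n^{-1}\log n}$ by hypothesis, I get $|\theta_n - \overline{Z}_n| \geq c_0\sqrt{n^{-1}\log n}\,(1-o(1))$, hence $n(\theta_n - \overline{Z}_n)^2 \geq c_0^2(1-o(1))\log n$. The crux is therefore a sharp \emph{upper} bound on the regret $C_n - Q_n$. I would derive an exact expression via the Welford recursion $Q_i = Q_{i-1} + \frac{i-1}{i}(Z_i - \overline{Z}_{i-1})^2$ (with $Q_1 = 0$), which gives $(Z_i - \thetahat_{i-1})^2 = \frac{i}{i-1}(Q_i - Q_{i-1})$ for $i \geq 2$; summation by parts then yields
\[ C_n - Q_n = (Z_1 - \thetahat_0)^2 + \frac{Q_n}{n-1} + \sum_{i=2}^{n-1} \frac{Q_i}{i(i-1)}. \]
The first two terms are $O_p(1)$, so the regret is governed by $\sum_{i=2}^{n-1} \frac{1}{i-1}\cdot\frac{Q_i}{i}$.

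The hard part is controlling this last sum using only the finite-second-moment assumption. A naive bound on $\Var(C_n - Q_n)$ would require a finite fourth moment, and trying to handle $\sum_i \overline{W}_{i-1}^2$ directly (with $\overline{W}_k := \overline{Z}_k - \theta^*$) fails because $k\overline{W}_k^2 = (\sum_{j\leq k}W_j)^2/k$ does not converge --- the law of the iterated logarithm forces $\limsup_k k\overline{W}_k^2 = \infty$. The observation that rescues the argument is that $Q_i/i \to \sigma^2$ almost surely; this is just the SLLN applied to $i^{-1}\sum_{j\leq i} Z_j^2$ minus $\overline{Z}_i^2$ and needs only $\E Z^2 < \infty$. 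Since the weights $1/(i-1)$ sum to $\sim\log n \to \infty$, a weighted Ces\`aro (Kronecker-type) argument then gives $\sum_{i=2}^{n-1}\frac{Q_i}{i(i-1)} = (\sigma^2 + o(1))\log n$ a.s., so $(C_n - Q_n)/\log n \to \sigma^2$ a.s. Combining the two pieces, $S_n/\log n \geq (c_0^2 - \sigma^2)(1 - o(1))$, which is bounded below by a positive constant once the hidden constant satisfies $c_0 > \sigma$ --- this is exactly the ``sufficiently large'' requirement --- so $S_n \to \infty$ a.s. and $\Pr\{G_{n,\text{on}}(\theta_n) \geq \alpha^{-1}\} \to 1$. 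Finally, I would remark that replacing the exact running means by $(\varepsilon,\delta)$-AERMs perturbs each $\thetahat_{i-1}$ by $O((i-1)^{-(1+\varepsilon)/2})$ in squared error, contributing only an $O(1)$ term to $C_n$ that leaves the conclusion intact.
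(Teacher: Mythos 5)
Your proof is correct, but it takes a genuinely different route from the paper's. The paper centers everything at $\theta^*$: it expands the regret into a quadratic signal term $n(\theta^*-\theta_n)^2 + 2n(\theta^*-\theta_n)(\thetahat_n-\theta^*)$ handled by the CLT, the term $\sum_{i=1}^n(\thetahat_{i-1}-\theta^*)^2$ handled by computing its mean ($\approx \sigma^2\log n$) and applying Markov's inequality, and the cross term $2\sum_{i=1}^n(\theta^*-\thetahat_{i-1})(Z_i-\theta^*)$ handled by Chebyshev; all bounds are in probability. You instead center at the batch minimizer $\overline{Z}_n$, splitting $S_n$ into the signal $n(\theta_n-\overline{Z}_n)^2$ and the online-to-batch regret $C_n - Q_n$, for which you derive the exact Welford/Abel identity
\begin{equation*}
C_n - Q_n = (Z_1-\thetahat_0)^2 + \frac{Q_n}{n-1} + \sum_{i=2}^{n-1}\frac{Q_i}{i(i-1)},
\end{equation*}
and then use the SLLN ($Q_i/i \to \sigma^2$ a.s.) with a Toeplitz-type weighted Ces\`aro argument to get $(C_n-Q_n)/\log n \to \sigma^2$ almost surely, and the LIL for the signal. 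What your route buys is twofold: (i) an almost-sure divergence of $S_n$, stronger than the in-probability statement, and (ii) the sharp threshold $c_0 > \sigma$ for the hidden constant, which the paper leaves implicit. Indeed, your argument is tighter at exactly the point where the paper's proof is loose: the paper bounds its term (B) only as $O_P(\log n)$ via Markov, so for a \emph{fixed} multiple $c_0$ of $(\log n)^{1/2}$ the exceptional probability is bounded only by roughly $\sigma^2/c_0^2$, which does not vanish; your a.s. limit for $(C_n-Q_n)/\log n$ removes this residual and genuinely delivers the claimed limit of $1$ for any fixed $c_0 > \sigma$. What the paper's route buys is brevity and lighter machinery (CLT, Markov, Chebyshev only, no LIL and no summation-by-parts identity).

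One caution: your closing remark about $(\varepsilon,\delta)$-AERMs is too quick. An AERM satisfies $|\thetahat_{i-1} - \overline{Z}_{i-1}| \lesssim (i-1)^{-(1+\varepsilon)/2}$, but the resulting perturbation of $C_n$ includes the cross term $2\sum_{i}(Z_i-\overline{Z}_{i-1})(\overline{Z}_{i-1}-\thetahat_{i-1})$, whose expected absolute size is of order $\sum_i i^{-(1+\varepsilon)/2} \asymp n^{(1-\varepsilon)/2}$ for $\varepsilon < 1$; this can dominate $\log n$, so the contribution is not $O(1)$ in general. This does not affect the validity of your main argument, since the paper's own proof likewise takes $\thetahat_{i-1}$ to be the exact running sample mean, but the remark should either be dropped or restricted to $\varepsilon > 1$ (or accompanied by a genuine estimate of the cross term).
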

\begin{proof}
With squared error loss, our GUe-process is 
\[ G_{n,\text{on}}(\theta_n) = \exp\Bigl[ \omega \Bigl\{ \underbrace{\sum_{i=1}^n (Z_i - \theta_n)^2 - \sum_{i=1}^n (Z_i - \thetahat_{i-1})^2}_{\text{regret}} \Bigr\} \Bigr], \]
where $\theta_n = \theta^* + a_n n^{-1/2}$ for some deterministic sequence $(a_n)_{n\in\mathbb{N}}$ satisfying $a_n \gtrsim (\log n)^{1/2}$, and $\thetahat_{i-1}$ is the running sample mean, i.e., $\thetahat_k = \frac{1}{k} \sum_{i=1}^k Z_i$. % The goal is to show that, even if $\theta=\theta_n$ is collapsing to $\theta^*$ at rate near $n^{-1/2}$, the GUe-process still blows up to infinity with probability converging to 1.  This implies, for example, that our GUe-process confidence sets are of roughly the optimal size as a function of the sample size, $O(n^{-1/2})$.  Even though we're struggling to prove this as a consequence of our general results, it's pretty easy to get this with a direct analysis. 
We show that the ``regret" term approaches $\infty$ with probability converging to 1. % even if $\theta=\theta_n = \theta^* + a_n n^{-1/2}$, for some deterministic sequence $(a_n)$ with $|a_n| \to \infty$; what I have in mind is taking $a_n \propto \log n$, etc.  
Algebraic manipulations immediately yield an alternative expression for regret:
\[ \underbrace{n(\theta^* - \theta_n)^2 + 2n(\theta^* - \theta_n)(\thetahat_n - \theta^*)\vphantom{\sum_{i=1}^n}}_{\text{(A)}} - \underbrace{\sum_{i=1}^n (\thetahat_{i-1} - \theta^*)^2}_{\text{(B)}} - \underbrace{2\sum_{i=1}^n (\theta^* - \thetahat_{i-1})(Z_i - \theta^*)}_{\text{(C)}}. \]
Note that the first term, (A), can be rewritten as
    $a_n^2 - 2a_nn^{1/2}(\thetahat_n-\theta^*),$
which is $a_n^2 - a_n O_p(1)$ by the central limit theorem; it follows that this first term converges in probability to infinity.

For the second term, (B), note that
\[ \E(\text{B}) = \sum_{i=1}^n \E(\thetahat_{i-1} - \theta^*)^2 = (\thetahat_0 - \theta^*)^2 + \operatorname{Var}(Z_1) \sum_{i=2}^n (i-1)^{-1} = O(\log n). \]
Since (B) is non-negative, it follows immediately from Markov's inequality that (B) is $O_P(\log n)$. 
Finally, (C) has expected value $0$ and 
\[ \E(\text{C}^2) = \sum_{i=1}^n \E(\theta^* - \thetahat_{i-1})^2(Z_i - \theta^*)^2 = (\thetahat_0 - \theta^*)\operatorname{Var}[Z_1] + \sum_{i=2}^n \frac{\operatorname{Var}[Z_i]^2}{i-1} = O(\log n). \]
It is then a consequence of Chebyshev's inequality that $|\text{C}|$ is $O_P(\log n)$.

Putting everything together, the regret is lower-bounded by  $O_p(a_n^2 - a_n - \log n)$, which diverges to $\infty$ provided that $a_n$ is eventually a sufficiently large multiple of $(\log n)^{1/2}$, as desired.
\end{proof}

\section{Additional simulation results}\label{apdx:extrasims}
In this section, we provide three more simulation studies to demonstrate the superior performance of the GUe-based methods compared to existing methods.

\begin{example}\label{ex:heavytail}\normalfont{
We mentioned previously that the strong central condition is a sufficient but perhaps not necessary condition for validity of the proposed GUe tests and confidence sets.  This suggests that there are cases in which the strong central condition fails but GUe still offers valid inference.  
To illustrate this, \Cref{fig:heavytail} shows the GUe confidence sets coverage probabilities on data from the $t$-distribution with $3$ degrees of freedom---a distribution that does not satisfy the strong central condition. For comparison, we also show the coverage of the nonparametric bootstrap, as well as for the Catoni-style confidence interval given in Theorem 9 of \citet{wang2023}. We note that because of the small sample size, the bootstrap greatly undercovers the true mean. Furthermore, although the Catoni-style confidence interval was the most efficient of the anytime-valid confidence sets for heavy-tailed means studied in \citet{wang2023}, it still has an observed coverage near 100\% at all relevant significance levels, agreeing with the results of Figure 4 of \citet{wang2023}. In comparison, both GUe confidence sets maintain very reasonable empirical coverage rates across the entire range.
\begin{figure}[t]
    \centering
    \includegraphics[width = 0.5\textwidth]{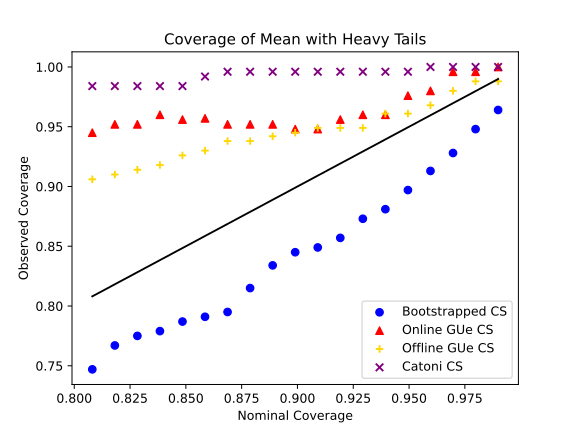}
    \caption{Coverage probabilities for the mean of the $t$-distribution with $3$ degrees of freedom, with $n = 10$. For the Catoni intervals, we follow \citet{wang2023} and use the optimal $\lambda_t$ given by their equation (33) when it exists and, otherwise, use $\lambda_t = \min(t^{-1/2}, 0.1)$.}
    \label{fig:heavytail}
\end{figure}
}\end{example}

\begin{example}\normalfont{
One set of problems that proves challenging for many classical inferential techniques is that of providing uncertainty quantification for parameters that lie at the boundary of the parameter space. More formally, suppose that it is known that the risk minimizing $\theta^*$ lies in a restricted subset $U$ of the parameter space $\Theta$; then we define our loss function to be such that $\ell(\theta\altgiven z) = \infty$ for all $\theta\not\in U$. It is well known that in this sort of problem (even if performing likelihood-based inference rather than loss-function based inference), many classical techniques, such as the bootstrap, fail to provide valid confidence sets when $\theta^*$ lies on the boundary of $U$.

Of course, the GUe-value does not face any difficulties with inference in a parameter-at-boundary problem. To illustrate, we perform quantile estimation where the relevant quantile is at the boundary. In particular, we estimate the $84$th percentile of $\operatorname{N}(-3, 3^2)$ data where we restrict inference on the interval $[0, \infty)$, which is induced by the loss function
\begin{equation*}
    \ell(\theta\altgiven z) = \begin{cases}(z-\theta) \cdot \{q - \mathbbm{1}(z < \theta)\} & \text{if }\theta \geq 0 \\
    \infty & \text{otherwise}
    \end{cases}
\end{equation*}
where $q = 0.84$. \Cref{fig:boundary} illustrates that, as expected, the nonparametric bootstrap fails to attain the nominal level of coverage, as the $0.84$ quantile of the data is exactly $0$, whereas the online and offline GUe confidence sets (selecting the learning rate through the nonparametric bootstrap) do attain above-nominal coverage levels.
\begin{figure}[tbp]
    \centering
    \includegraphics[width=0.5\linewidth]{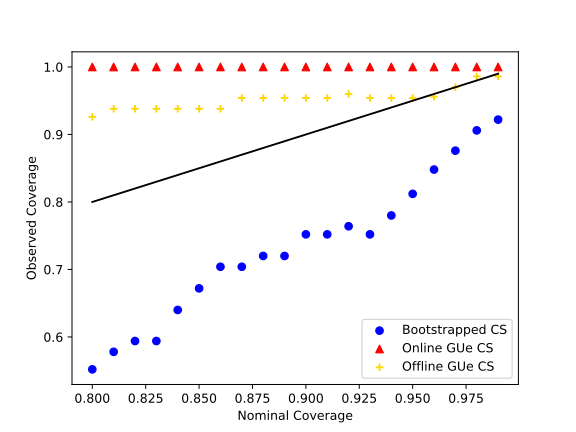}
    \caption{Coverage of the $0.84$ quantile of $\operatorname{N}(-3, 3^2)$ data when the estimate for the quantile is restricted to the interval $[0, \infty)$.}
    \label{fig:boundary}
\end{figure}}
\end{example}

\begin{example}\normalfont{
Suppose that $\vb*{Z}$ is a matrix of random variables such that the distribution of $\vb*{Z}$ is invariant to permutations of its rows and columns. It is shown in \citet[Section 4.3]{mccullagh2000} that the use of nonparametric bootstrap to estimate $\E(\vb*{Z}_{11})$ results in confidence sets that are smaller than necessary to obtain the nominal level of coverage.

To illustrate the simplest possible example of this, consider the two-way ANOVA model with random effects with only one observation per cell:
$
    Z_{ij} = \mu + \alpha_i + \beta_j + \varepsilon_{ij}
$
where the variance components are independent and follow centered exponential distributions---i.e., $\alpha_i + \sigma_\alpha \iid \operatorname{Exp}(\sigma_\alpha)$, $\beta_j + \sigma_\beta \iid \operatorname{Exp}(\sigma_\beta)$, and $\varepsilon_{ij} + \sigma\iid \operatorname{Exp}(\sigma)$. \Cref{fig:anova} demonstrates that once again, the GUe confidence sets using nonparametric bootstrap for learning rate selection obtain far above the nominal level of coverage for $\mu$ whereas the nonparametric bootstrap by itself fails to do so. Additionally, \Cref{fig:anova} provides visualizations of the confidence intervals produced by the studied methods for 10 random datasets; it is clear that the offline GUe confidence intervals are often of comparable width to bootstrapping despite having much better coverage and the desirable robustness properties of e-values. 
\begin{figure}[tbp]
    \centering
    \includegraphics[width=0.48\linewidth]{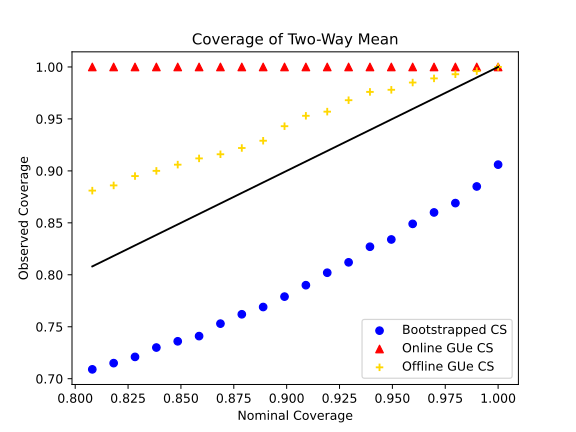}\includegraphics[width=0.48\linewidth]{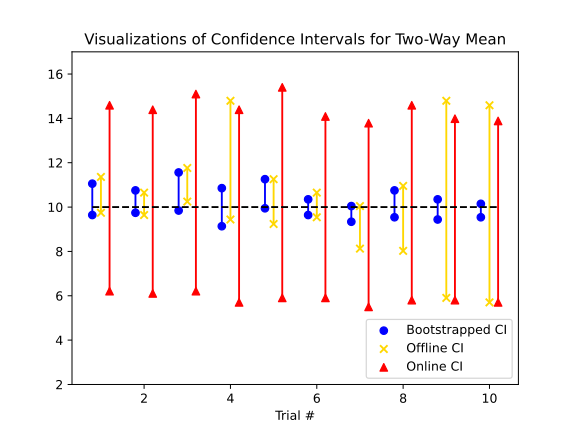}
    \caption{Left: Coverage probabilities for the mean of a two-way ANOVA with random effects and one observation per cell. Right: visualizations of 90\% confidence intervals. Here $n = 10$, $\mu = 10$, $\var(\alpha_i) = 0.1$, $\var(\beta_j) = 0.05$, and $\var(\varepsilon_{ij}) = 1$}
    \label{fig:anova}
\end{figure}

}\end{example}

\end{document}